\documentclass[a4paper, 11pt]{article}

\usepackage{standalone, lmodern, xcolor, enumitem}
\usepackage{amsmath, amssymb, amsthm, physics, bbm, mathtools, stmaryrd, leftidx}
\usepackage{thmtools}
\usepackage{hyperref}

\usepackage{lmodern}

\usepackage[normalem]{ulem}
\usepackage{accents}

\usepackage{pgfplots}
\pgfplotsset{compat=1.16}

\usetikzlibrary{arrows, decorations.pathreplacing, cd, patterns, calc, decorations.markings, backgrounds, tikzmark}

\usepackage{quiver}

\usepackage[new]{old-arrows}
\usepackage{changepage}

\usepackage[bottom]{footmisc}

\theoremstyle{plain}
\newtheorem{theorem}{Theorem}[section]
\newtheorem{proposition}[theorem]{Proposition}
\newtheorem{lemma}[theorem]{Lemma}
\newtheorem{corollary}[theorem]{Corollary}

\numberwithin{equation}{section}
\theoremstyle{definition}
\newtheorem{definition}[theorem]{Definition}
\newtheorem{example}[theorem]{Example}
\newtheorem{remark}[theorem]{Remark}

\numberwithin{figure}{section}

\setlist[enumerate]{topsep = 1ex, leftmargin=.7cm, itemsep= -2pt}

\allowdisplaybreaks

\definecolor{color1}{HTML}{1b9e77}
\definecolor{color2}{HTML}{F1A340}
\definecolor{color3}{HTML}{5e3c99}

\hypersetup{
  colorlinks = true, 
  urlcolor = color3, 
  linkcolor = color1, 
  citecolor = color3 
}

\newcommand{\R}{\mathbb{R}}
\newcommand{\Rp}{\mathbb{R}_{+}}
\newcommand{\Z}{\mathbb{Z}}
\newcommand{\C}{\mathbb{C}}
\newcommand{\vect}{\mathrm{Vect}}
\DeclareMathOperator{\id}{{\mathbbm{1}}}

\newcommand{\disk}{\mathbb{D}} 
\newcommand{\cdisk}{\bar{\mathbb{D}}} 

\newcommand{\cuhp}{\bar{\mathbb{H}}}

\newcommand{\blank}{\:\cdot\:}
\newcommand{\setsuchthat}[2]{\left\{ {#1} \:\middle|\: {#2} \right\}}
\newcommand{\setsuchthatinline}[2]{\{ {#1} \:|\: {#2} \}}
\newcommand{\restrict}[2]{{#1}\vert_{{#2}}}

\DeclareMathOperator{\Diffp}{{Diff}_+}
\DeclareMathOperator{\Diffpid}{{Diff}_0}
\DeclareMathOperator{\Diffpan}{{Diff}_+^{\an}(S^1)}
\DeclareMathOperator{\DiffC}{{Def}_{\C}(S^1)}
\DeclareMathOperator{\DefC}{\DiffC}
\DeclareMathOperator{\VectR}{\vect^{\an}_{\R}(S^1)}
\DeclareMathOperator{\VectC}{\vect^{\an}_{\C}(S^1)}
\DeclareMathOperator{\Witt}{\VectC}
\newcommand{\DefCnbhd}[1]{U({#1})}

\DeclareMathOperator{\mob}{PSL(2, \C)}
\newcommand{\mobvect}{\mathfrak{sl}(2, \C)}
\newcommand{\genus}{\mathsf{g}}
\newcommand{\boundaries}{\mathsf{b}}
\newcommand{\moduli}[2]{\mathcal{M}_{{#1}, {#2}}}
\newcommand{\annuli}{\moduli{0}{2}}
\newcommand{\moduligb}{\moduli{\genus}{\boundaries}}

\newcommand{\disks}{\moduli{0}{1}}

\newcommand{\teichm}{\mathcal{T}}
\newcommand{\projhgb}{\mathsf{hyp}}
\newcommand{\hgbtofgbzi}{\mathsf{Q}}
\newcommand{\mcg}{\mathsf{MCG}}
\newcommand{\fmoduligbzi}{\check{\mathcal{M}}_{\genus, \boundaries, 0, \underline{1}}}
\newcommand{\fmoduligbnm}{\check{\mathcal{M}}_{\genus, \boundaries, n, \underline{m}}}
\newcommand{\hmoduli}[2]{\mathcal{M}^{\textsf{hyp}}_{{#1}, {#2}}}
\newcommand{\hmoduligb}{\hmoduli{\genus}{\boundaries}}

\newcommand{\an}{\mathrm{an}}
\newcommand{\charge}{\mathbf{c}}

\DeclareMathOperator{\Hom}{Hom}

\newcommand{\D}{\mathbb{D}}
\newcommand{\A}{\mathbb{A}}
\newcommand{\F}{\mathbb{F}}

\newcommand{\sew}[2]{\:\leftidx{{_{#1}}}{{\infty}}{{_{#2}}}\:}

\newcommand{\sewself}[2]{\:{\infty}{{_{#1, #2}}}\:}

\newcommand{\sewall}{\:\underline{\infty}\:}

\newcommand{\unravelplus}[1]{{{#1}}_+\:}
\newcommand{\unravelminus}[1]{{{#1}}_-\:}
\newcommand{\unravel}[1]{{{#1}}\:}

\newcommand{\gelfandfuks}{\omega_\mathrm{GF}}
\newcommand{\bott}{\Omega_{\mathrm{BT}}}
\newcommand{\evaluation}{\mathsf{ev}}

\newcommand{\param}[1]{
\mathchoice
    {\big(\: {#1} \:\big)}
    {(#1)}
    {(#1)}
    {(#1)}
}
\newcommand{\parameq}[1]{
\mathchoice
    {\big[\: {#1} \:\big]}
    {[#1]}
    {[#1]}
    {[#1]}
}

\newcommand{\ii}{\mathrm{i}\,}

\makeatletter
\newcommand{\uset}[3][-0.35ex]{%
  \mathrel{\mathop{#3}\limits_{
    \vbox to#1{\kern-7\ex@
    \hbox{$\scriptscriptstyle#2$}\vss}}}}
\makeatother

\newcommand{\diffActingInline}[1]{\,
    \smash{\uset{{#1}}{*}} 
\, }
\newcommand{\diffActing}[1]{\diffActingInline{#1}}

\newcommand{\defmultset}{{M}}
\newcommand{\definvset}{{I}}

\newcommand{\transnumber}{\mathrm{T}}
\newcommand{\rotnumber}{\mathrm{Rot}}
\newcommand{\CR}{\mathrm{CR}} 
\newcommand{\rotCR}{{\mathrm{RCR}}}
\newcommand{\rotnumberalg}{\mathrm{rot}}
\newcommand{\rotcocycle}{\Omega_{\mathrm{RCR}}}
\newcommand{\rotcocyclealg}{\omega_{\mathrm{rot}}}
\newcommand{\shrink}[1]{\mathsf{sc}_{{#1}}}
\newcommand{\shrinkgroup}{\mathrm{Sc}}
\newcommand{\shrinkalg}{\R \ell_0}
\newcommand{\scaling}[1]{\shrink{{#1}}}
\newcommand{\scalinggroup}{\shrinkgroup}

\newcommand{\rotation}{\mathsf{r}}

\newcommand{\curves}[1]{\mathcal{C}({{#1}})}
\newcommand{\curvesgen}[1]{\mathcal{C}_0({{#1}})}
\newcommand{\functions}[1]{\mathcal{F}({{#1}})}

\newcommand{\compn}[2]{M_{{#2}}({#1})}

\newcommand{\compdef}[3]{\mathcal{U}_{#1, #2, #3}}

\newcommand{\sgrps}{\underline{H}} 
\newcommand{\salgs}{\underline{\mathfrak{h}}} 
\newcommand{\funtogrp}{\delta}

\DeclareMathOperator{\inversion}{J}

\DeclareMathOperator{\DVE}{d_{\mathsf{vE}}}
\DeclareMathOperator{\Dalg}{d_{\mathsf{alg}}}
\DeclareMathOperator{\Dgrp}{d_{\mathsf{grp}}}
\DeclareMathOperator{\DdR}{d}
\DeclareMathOperator{\Dder}{D}

\newcommand{\cdiffeo}[1]{\mathsf{diff}_{{#1}}}
\newcommand{\cdiffeolift}[1]{\hat{\mathsf{diff}}_{{#1}}}
\newcommand{\cunivalent}[1]{\mathsf{univ}_{{#1}}}
\newcommand{\cunivalentlift}[1]{\hat{\mathsf{univ}}_{{#1}}}

\newcommand{\universalcover}[1]{\mathsf{uc}({#1})}

\newcommand{\ellpar}[1]{a^\parallel_{{#1}}}
\newcommand{\ellpari}[1]{b^\parallel_{{#1}}}
\newcommand{\ellperpi}[1]{a^\perp_{{#1}}}
\newcommand{\ellperp}[1]{b^\perp_{{#1}}}

\newcommand{\thetacoords}[1]{\tilde{{#1}}}

\newcommand{\Cocycle}{\Omega}



\DeclareRobustCommand\longtwoheadrightarrow
 {\relbar\joinrel\twoheadrightarrow}

\title{Complex deformations of the circle:\protect\\ Group cohomology and Virasoro uniformization}  
\date{May 19, 2026}

\author{Sid Maibach\thanks{Department of Mathematics, ETH Z\"urich, R\"amistrasse 101, 8092 Z\"urich, Switzerland. \\ 
\protect\url{sid.maibach@math.ethz.ch}} \;
and \,
Eveliina Peltola\thanks{Department of Mathematics and Systems Analysis, Aalto University, Otakaari 1, 02150 Espoo, Finland, and Division of Mathematics, University of Cologne, Weyertal 86-90, 50931 Cologne, Germany. \\ 
\protect\url{eveliina.peltola@aalto.fi}}}

\begin{document}
\maketitle

\begin{center}
\begin{minipage}{0.95\textwidth}
\abstract{
We approach the question of complexification of the diffeomorphism group of the circle by considering real-analytic maps from the circle into the punctured complex plane with winding number +1.
Such complex deformations form an infinite-dimensional manifold with partially defined inversion and composition operations, smooth in the sense of Fr{\"o}licher structures, and with Lie algebra relations at the identity given by the Witt algebra.
With applications to conformal field theory in mind, we compute the second group cohomology group with real coefficients, finding cocycles extending the Bott--Thurston cocycle related to the Gelf'and--Fuks cocycle of the Virasoro algebra, and a natural relative cocycle combining the rotation number and conformal radius of a complex deformation.

Complex deformations act naturally on the (infinite-dimensional) Segal moduli spaces of Riemann surfaces with analytically parametrized boundary components.
These actions equip said moduli spaces with smooth Fr{\"o}licher structures.
We prove a Virasoro uniformization theorem:
the tangent spaces of the Segal moduli spaces are spanned by vector fields induced by the Witt algebra. 
Finally, we relate the actions of complex deformations to Fenchel--Nielsen coordinates and Schiffer variation on finite-dimensional moduli spaces of hyperbolic surfaces with one marked point on each boundary component.
}

\bigskip{}

\noindent\textbf{Keywords:} 
bordered surfaces, diffeomorphism group of the circle, Fr\"olicher structure, group cohomology, Virasoro algebra

\bigskip{}

\noindent\textbf{MSC 2020:} 
Primary: 17B68,
58B25;
Secondary: 30F60,
81T40
\end{minipage}
\end{center}

\newpage

\tableofcontents

\bigskip

\section{Introduction}

\emph{Does there exist a complexification of the diffeomorphism group of the circle?}

\noindent
This question has been answered negatively by Neretin~\cite[Section~1.10]{Neretin:Holomorphic_extensions_of_representations_of_the_group_of_diffeomorphisms_of_the_circle}, Pressley~\&~Segal~\cite[Proposition~3.3.2]{Pressley-Segal:Loop_groups}, and most definitely by Lempert~\cite{Lempert:The_problem_of_complexifying_a_Lie_group}.
Yet, Neretin has also constructed objects that complexify certain aspects of the diffeomorphism group: 
namely, diffeomorphisms may be viewed as ``thin'' annuli at the boundary of the semigroup of annuli~\cite{Neretin:Complex_semigroup_that_contains_the_group_of_diffeomorphisms_of_the_circle, Neretin:Holomorphic_extensions_of_representations_of_the_group_of_diffeomorphisms_of_the_circle, Neretin:Categories_of_symmetries_and_infinite-dimensional_groups, Segal:Definition_of_CFT_collection, Radnell-Schippers:The_semigroup_of_rigged_annuli_and_the_Teichmueller_space_of_the_annulus, Henriques-Tener:The_Segal-Neretin_semigroup_of_annuli};
and they may be complexified as maps $S^1 \to \C \setminus \{0\}$ with a partially defined composition operation, as described in~\cite[Section~1.1]{Neretin:Holomorphic_extensions_of_representations_of_the_group_of_diffeomorphisms_of_the_circle}.
In previous work~\cite{Maibach-Peltola:From_the_conformal_anomaly_to_the_Virasoro_algebra}, we reintroduced the latter as bounded deformations of the unit circle, or \emph{complex deformations} for short.
In the present work, we shall use a slightly generalized notion of complex deformations:
\begin{quote}
Injective, real-analytic maps $S^1 \to \C \setminus \{0\}$ with winding number $+1$ around $0$.
\end{quote}
This set of complex deformations, denoted $\DefC$, is an open submanifold of the topological vector space $\mathcal{O}(S^1)$ of germs of analytic functions on $S^1 \subset \C$.
Using Fr\"olicher structures~\cite{Kriegl-Michor:Convenient_setting_of_global_analysis, Laubinger:Lie_algebra_for_Frolicher_groups}, 
we show that the tangent space $T_{\id} \DefC$ at the identity is the Lie algebra $\VectC$ of real-analytic complex-valued vector fields on $S^1$ containing the Witt algebra, whose generators $\{\ell_n \,|\, n \in \Z\}$ satisfy the Lie bracket $[\ell_n, \ell_m] = (n - m) \ell_{n + m}$.

The set $\DefC$ sits at an interesting place between the Fr\'echet--Lie group $\Diffpan$ of real-analytic orientation-preserving diffeomorphisms of $S^1$, and the impossibility of finding a complex Lie group integrating the Witt algebra relations~\cite{Milnor:Remarks_on_infinite_dimensional_Lie_groups, Pressley-Segal:Loop_groups, Neretin:Holomorphic_extensions_of_representations_of_the_group_of_diffeomorphisms_of_the_circle, Lempert:The_problem_of_complexifying_a_Lie_group, Schottenloher:Mathematical_introduction_to_CFT}.
Since the composition and inversion operations on $\DefC$ are only partially defined, it is not an infinite-dimensional Lie group (even in the Fr\"olicher sense~\cite{Ntumba-Batubenge:On_the_way_to_Frolicher_Lie_groups, Laubinger:Lie_algebra_for_Frolicher_groups}).
In fact, $\DefC$ is not even a local Lie group in the sense of~\cite{Olver:Non-associative_local_Lie_group, Neeb:Infinite-Dimensional_Lie_Groups, Goldbring:Hilberts_fifth_problem_for_local_groups, Gloeckner-Neeb:Infinite-dimensional_Lie_groups}, 
because the composition of a pair of complex deformations is only guaranteed to exist if both elements are near the identity; 
and the sets of composable and invertible elements are neither open nor closed in $\DefC \times \DefC$ and $\DefC$, respectively.
Moreover, just like for $\Diffpan$, the exponential map of $\DefC$ is not surjective onto any neighborhood of the identity~\cite{Milnor:Remarks_on_infinite_dimensional_Lie_groups}.
These facts further motivate our usage of Fr\"olicher structures, where tangent vectors are represented by smooth curves in $\DefC$, which are the flows of smoothly time-dependent vector fields\footnote{Restricting to the subsets of composable or invertible complex deformations, this tangent space is inhomogeneous, comprising holomorphic vector fields on domains on which the complex deformations at hand have strong properties regarding analytic continuation. See Remark~\ref{rem:tangent_spaces}.}
(see Section~\ref{section:complex_deformations} and Appendix~\ref{section:frolicher}).

\noindent
The main results of this work are summarized as follows:
\begin{enumerate}
\item \emph{Cohomology of complex deformations.}
We define and compute the group cohomology group $H^2(\DefC; \R)$, including relative cohomology with respect to scaling transformations and diffeomorphisms of $S^1$.
Theorem~\ref{thm:cohomology}, as well as Figures~\ref{fig:diag_grp} and~\ref{fig:diag_alg}, identify an extension of the Bott--Thurston cocycle $\bott$, and another cocycle $\rotcocycle$ that combines the rotation number and conformal radius of a complex deformation.

\item \emph{Virasoro uniformization.}
On the (infinite-dimensional) Segal moduli space $\moduligb$ of compact genus $\genus$ Riemann surfaces with $\boundaries$ numbered and parametrized boundary components, 
the $\boundaries$ partially defined actions of $\DefC$ via deformation and reparameterization of boundary components (see Definition~\ref{def:action}) induce
Lie algebra homomorphisms
\begin{align}
\rho_j \colon  \VectC \to \vect(\moduligb), \qquad 1 \leq j \leq \boundaries.
\end{align}
In Theorem~\ref{thm:virasoro_uniformization}, we prove that the combined image of $\underline{\rho} = \rho_1 + \dots + \rho_\boundaries$ spans the tangent space of $\moduligb$, and that the kernel of $\underline{\rho}$ is given by taking holomorphic vector fields with real-analytic boundary behavior on a surface $\Sigma$ representing a point in $\moduligb$ and simultaneously pulling it back via the boundary parametrizations $\zeta_j \colon S^1 \to \partial_j \Sigma$: 
\begin{align}\label{eq:virasoro_uniformization_ses}
\begin{alignedat}{2}
\{0\} \; \longrightarrow \; \vect(\Sigma) \; & \overset{\underline{\zeta}^*}{\longhookrightarrow} \; \big(\VectC\big)^{\times \boundaries} && \; \overset{\underline{\rho}}{\longtwoheadrightarrow} \; T_{[\Sigma]}^{\mathcal{C}_0} \moduligb \; \longrightarrow \; \{0\} ,
\end{alignedat}
\end{align} 
is a short exact sequence involving the pullback by $\underline{\zeta} = (\zeta_1,\ldots,\zeta_\boundaries)$.
\end{enumerate}

Initially, our definition of the Fr\"olicher structure of the Segal moduli space $\moduligb$ also includes deformations of surfaces at 
simple analytic loops in the interior (for example, the Fenchel--Nielsen twist coordinates, or Schiffer variation).
Our results show that, infinitesimally, these may be realized by deformations of boundary components only.
The exact sequence~\eqref{eq:virasoro_uniformization_ses} is dubbed \emph{Virasoro uniformization}, and has been proven before in an algebro-geometric setup~\cite{Kontsevich:Virasoro_and_Teichmuller_spaces, ACKP:Moduli_spaces_of_curves_and_representation_theory, Beilinson-Schechtman:Determinant_bundles_and_Virasoro_algebra, Frenkel-Ben-Zvi:Vertex_Algebras_and_Algebraic_Curves, Dubedat:SLE_and_Virasoro_representations_localization, Gui-Zhang:Analytic_conformal_blocks_of_C2_cofinite_vertex_operator_algebras2}, where instead of boundary parametrizations, one considers formal coordinate neighborhoods at marked points on Riemann surfaces without boundary.
In fact, our proof follows the same elementary arguments, using Kodaira--Spencer deformation theory~\cite{Kodaira:Complex_manifolds_and_deformation_of_complex_structures} and sheaf cohomology on Riemann surfaces~\cite{Forster:Lectures_on_Riemann_surfaces}.

The \emph{semigroup of annuli} is the moduli space $\annuli$ together with a sewing operation. Pioneered by Neretin and Segal, it complexifies $\Diffpan$ in ways that are complementary to complex deformations~\cite{Neretin:Holomorphic_extensions_of_representations_of_the_group_of_diffeomorphisms_of_the_circle, Neretin:Categories_of_symmetries_and_infinite-dimensional_groups, Segal:Definition_of_CFT_collection, Radnell-Schippers:The_semigroup_of_rigged_annuli_and_the_Teichmueller_space_of_the_annulus, Henriques-Tener:The_Segal-Neretin_semigroup_of_annuli}, 
and plays an important role in the Kontsevich--Segal approach to CFT~\cite{BGKR:Semigroup_of_annuli_in_Liouville_CFT, BGKRV:Virasoro_structure_and_the_scattering_matrix_for_Liouville_CFT, Henriques:The_functorial_approach_to_chiral_2D_CFT, ARS:The_moduli_of_annuli_in_random_conformal_geometry, Henriques-Tener:Integrating_positive_energy_representations_of_the_Virasoro_algebra, Maibach-Peltola:From_the_conformal_anomaly_to_the_Virasoro_algebra}.
In fact, in~\cite[Section~1.11]{Neretin:Holomorphic_extensions_of_representations_of_the_group_of_diffeomorphisms_of_the_circle} 
Neretin defines a diffeological structure on the semigroup of annuli, 
which is compatible with the Fr\"olicher structure on the moduli space~$\annuli$. 
While annuli may always be composed but lack inverses, complex deformations sometimes admit inverses, but composition is only partially defined.

Generally, 
the Segal moduli spaces $\moduligb$ feature prominently in the proposals of Segal and Kontsevich for a functorial framework for (Euclidean) two-dimensional CFT~\cite{Segal:Definition_of_CFT_collection}.
Recently, this perspective on CFT has gained considerable attention in probabilistic realizations of CFTs, such as the Liouville CFT (see~\cite{GKR:Review_on_the_probabilistic_construction_and_conformal_bootstrap_in_Liouville_theory, GKRV:Segals_axioms_for_Liouville_theory, GKR:Compactified_imaginary_Liouville_theory} and references therein), 
and as proposed for conformally covariant SLE loop measures in~\cite[Section~8.1]{Baverez-Jego:The_CFT_of_SLE_loop_measures_and_the_Kontsevich-Suhov_conjecture} (see also references therein). 

In Section~\ref{section:surfaces}, we essentially follow the definition of Segal by choosing real-analytic regularity for the boundary parametrizations.
In their role as cobordisms with complex-analytic structure, the main operations $\moduligb$ are sewing (gluing) of boundary components using the parametrizations and unraveling (cutting) along parametrized simple loops in the interior.
We equip the moduli spaces $\moduligb$ with non-trivial smooth Fr\"olicher structures induced by the actions of $\DefC$, which emphasize the sewing and unraveling as smooth operations. 
The setup using real-analytical boundary parametrizations was investigated by Huang for genus zero~\cite{Huang:2D_Conformal_geometry_and_VOAs},
and by Radnell for genus one~\cite{Radnell:PhD}. 
More generally, one might consider boundary parametrizations to be smooth~\cite{Henriques:The_functorial_approach_to_chiral_2D_CFT}, or (Weil--Petersson) quasiconformal~\cite{Takhtajan-Teo:Weil-Petersson_metric_on_the_universal_Teichmuller_space, Radnell-Schippers:Quasisymmetric_sewing_in_rigged_Teichmuller_space, Radnell-Schippers:Fiber_structure_and_local_coordinates_for_the_Teichmuller_space_of_a_bordered_Riemann_surface}.
The latter appears to be relevant for the complex analytic structure of moduli spaces.
In the present work, we mainly focus on the moduli spaces as algebraic objects, for which the real-analytic parametrizations and the Fr\"olicher smooth setup are sufficient.

We also study the interplay of the infinite-dimensional moduli spaces $\moduligb$ and the ($6 \genus - 6 + 4 \boundaries$)-dimensional moduli spaces
$\fmoduligbzi$ of hyperbolic surfaces with one marked point on each (non-parametrized) boundary component (Section~\ref{section:hyperbolic}), 
like the ones considered in~\cite{Abikoff:The_real_analytic_theory_of_Teichmuller_space, Ivashkovich-Shevchishin:Holomorphic_structure_on_the_space_of_Riemann_surfaces_with_marked_boundary, Mirzakhani:Simple_geodesics_and_Weil-Petersson_volumes_of_moduli_spaces_of_bordered_Riemann_surfaces, Liu:Moduli_of_J-holomorphic_curves_with_Lagrangian_boundary_conditions_and_open_Gromov-Witten_invariants_for_an_S1-equivariant_pair}.
The action of the mapping class groups on Teichm\"uller spaces corresponding to $\fmoduligbzi$ is free, and thus $\fmoduligbzi$ are manifolds (as opposed to orbifolds).
As a by-product, we see that the Fr\"olicher structure of $\moduligb$ is non-trivial (Corollary~\ref{corollary:fr_nontrivial}).
Moreover, by our Virasoro uniformization result, deformations of boundary components by complex deformations provide a new set of coordinates on $\fmoduligbzi$.

The computations of the group cohomology of complex deformations in Section~\ref{section:cohomology} are motivated by our program of 
formulating a universality property of the \emph{conformal anomaly} of CFT outlined in earlier work~\cite{Maibach-Peltola:From_the_conformal_anomaly_to_the_Virasoro_algebra}.
There, a $2$-cocycle $\Gamma_\charge$ on $\DefC$ is derived from the conformal anomaly through a construction known as the real determinant line bundle.
The cocycle depends on a choice of central charge $\charge \in \R$, and the main result of~\cite{Maibach-Peltola:From_the_conformal_anomaly_to_the_Virasoro_algebra} 
proves that the corresponding Lie algebra cocycle is $\charge$ times the imaginary part of the Gel'fand--Fuks cocycle.
In Section~\ref{section:cft}, we explain how (for $\charge \neq 0$) any such cocycle spans the cohomology group $H^2(\VectC; \VectR; \R)$.
Furthermore, using our computation of $H^2(\DefC; \Diffpan, \scalinggroup; \R)$, we identify the cohomology class of $\Gamma_\charge$ also at the group level.
This provides a starting point to characterize the real determinant line bundles as certain real one-dimensional modular functors of independent interest, 
which may be understood as central extensions of the sewing operations on the moduli spaces $\moduligb$ by $\R$.
The cocycle $\Gamma_\charge$ and the modular functor are related through the actions of $\DefC$ on the moduli spaces studied in Section~\ref{section:surfaces},
which provides the groundwork for our planned approach, leveraging the cohomological characterization of $\Gamma_\charge$ to prove a universality property of real one-dimensional modular functors in future work.

\subsection*{Acknowledgements}

We would like to thank Peter Michor for suggesting the use of Fr\"olicher structures, and Eric Schippers and David Radnell for insightful discussions on the Segal moduli space. 
Part of this work was included in S.M's doctoral thesis~\cite[Appendix~C]{Maibach:PhD_thesis}.

S.M.~has been supported by the Deutsche Forschungsgemeinschaft (DFG, German Research Foundation) under Germany's Excellence Strategy EXC-2047/1-390685813, by the grant CRC 1060 ``The
Mathematics of Emergent Effects'' (Project-ID 211504053), and by the grant from the Simons Foundation International [SFI-MPS-PP-00012621-19].

This material is part of a project that has received funding from the  European Research Council (ERC) under the European Union's Horizon 2020 research and innovation programme (101042460): 
ERC Starting grant ``Interplay of structures in conformal and universal random geometry'' (ISCoURaGe) 
and from the Academy of Finland grant number 340461 ``Conformal invariance in planar random geometry.''

E.P.~is also supported by 
the Academy of Finland Centre of Excellence Programme grant number 346315 ``Finnish centre of excellence in Randomness and STructures (FiRST)'' 
and by the Deutsche Forschungsgemeinschaft (DFG, German Research Foundation) project number 390534769 ``Matter and Light for Quantum Computing (ML4Q).

\newpage

\section{Complex deformations of the circle}
\label{section:complex_deformations}

The vector space of real-analytic vector fields on the unit circle, which we denote by $\VectR$, is the Lie algebra of the infinite-dimensional Fr\'echet--Lie group $\Diffpan$ of real-analytic orientation-preserving diffeomorphisms of the unit circle.
Due to results of
Neretin~\cite[Section~1.10]{Neretin:Holomorphic_extensions_of_representations_of_the_group_of_diffeomorphisms_of_the_circle},
Lempert~\cite{Lempert:The_problem_of_complexifying_a_Lie_group}, and 
Pressley~\&~Segal~\cite[Proposition~3.3.2]{Pressley-Segal:Loop_groups}, it is known, however, that the complexification
\begin{align}
\VectC = \VectR \otimes \C
\end{align}
of $\VectR$ is not the Lie algebra of any infinite-dimensional complex Lie group.
Yet, interpreting $\VectC$ as the vector space of real-analytic complex-valued vector fields on the unit circle embedded into the complex plane,
\begin{align}
S^1 = \setsuchthat{z \in \C}{|z| = 1},
\end{align}
the Lie bracket of $\VectC$ can still be related to the composition $\phi \circ \psi$ of certain real-analytic maps $\phi, \psi \colon S^1 \to \C$ defined using analytic continuation (see Figure~\ref{fig:composition} for an illustration).
In this section, we study a set of such maps, which integrates the Lie algebra $\VectC$ as a group-like object --- meaning that the composition law is smooth and associative, yet only partially defined and invertible.

\begin{figure}
\centering
\begin{tikzpicture}
	\def\A{(59.0006, -21.8665).. controls (58.2304, -21.7667) and (58.1779, -22.5051) .. (58.5406, -22.7391).. controls (58.9032, -22.973) and (59.334, -22.513) .. (59.5271, -22.8401).. controls (59.7201, -23.1671) and (59.0667, -23.2547) .. (58.6796, -23.0431).. controls (58.2924, -22.8315) and (57.8129, -22.1583) .. (57.2743, -22.2799).. controls (56.7357, -22.4016) and (56.6966, -22.9379) .. (56.8691, -23.5075).. controls (57.0416, -24.0772) and (57.6316, -25.1429) .. (58.228, -25.2765).. controls (58.8245, -25.4101) and (59.2873, -24.8977) .. (58.9978, -24.5108).. controls (58.7082, -24.1238) and (58.2232, -24.7307) .. (57.9039, -24.523).. controls (57.5846, -24.3153) and (57.8594, -23.8581) .. (58.2524, -23.8399).. controls (58.6453, -23.8218) and (58.808, -23.922) .. (59.0279, -24.1081).. controls (59.2478, -24.2941) and (59.3566, -24.6009) .. (59.7508, -24.6543).. controls (60.145, -24.7076) and (60.433, -24.4789) .. (60.4259, -23.8164).. controls (60.4187, -23.1538) and (59.7709, -21.9664) .. (59.0006, -21.8665) -- cycle}
	\def\B{(58.8855, -21.8576).. controls (58.3393, -21.8451) and (58.2255, -22.3059) .. (58.3978, -22.5905).. controls (58.3252, -22.6058) and (58.2546, -22.6292) .. (58.1872, -22.6601).. controls (57.91, -22.4277) and (57.6032, -22.2056) .. (57.2743, -22.2799).. controls (56.7357, -22.4015) and (56.6966, -22.9379) .. (56.8691, -23.5075).. controls (57.0415, -24.0772) and (57.6316, -25.1428) .. (58.228, -25.2765).. controls (58.8245, -25.4101) and (59.2873, -24.8977) .. (58.9978, -24.5108).. controls (58.9934, -24.505) and (58.989, -24.4993) .. (58.9845, -24.4939).. controls (59.0775, -24.4557) and (59.1642, -24.4036) .. (59.2416, -24.3396).. controls (59.3638, -24.4829) and (59.4992, -24.6202) .. (59.7508, -24.6542).. controls (60.145, -24.7076) and (60.433, -24.4789) .. (60.4259, -23.8163).. controls (60.4187, -23.1538) and (59.7709, -21.9664) .. (59.0006, -21.8665).. controls (58.9603, -21.8613) and (58.922, -21.8583) .. (58.8855, -21.8575) -- cycle}
	\def\C{(58.1872, -22.6601).. controls (57.832, -22.8231) and (57.6043, -23.1781) .. (57.6043, -23.569).. controls (57.6043, -23.7779) and (57.6697, -23.9815) .. (57.7914, -24.1514).. controls (57.8577, -23.9913) and (58.0373, -23.8499) .. (58.2524, -23.8399).. controls (58.6453, -23.8218) and (58.808, -23.922) .. (59.0279, -24.1081).. controls (59.1074, -24.1753) and (59.1724, -24.2584) .. (59.2416, -24.3396).. controls (59.4713, -24.1496) and (59.6043, -23.8671) .. (59.6043, -23.569).. controls (59.6043, -23.3997) and (59.5613, -23.2331) .. (59.4794, -23.085).. controls (59.3144, -23.1923) and (58.9378, -23.1843) .. (58.6796, -23.0431).. controls (58.5289, -22.9607) and (58.3641, -22.8083) .. (58.1872, -22.6601) -- cycle}

	\useasboundingbox (-3,-3) rectangle (3,3);
    \node (pic_left) at (0, 0) {};
    \begin{scope}[shift=(pic_left), scale=1.5]	
		\begin{scope}[shift={(-58.6043, 23.569)}]
			\path[fill=color2, even odd rule, pattern=north west lines, even odd rule, pattern color = color2]
				\B
				\C;
			\path[draw, very thick]
				(58.6043, -23.569) circle (1.0);
			\path[draw=color1, very thick, postaction={decorate, decoration={markings, 
  mark=between positions 0.1 and 1 step 0.1 with {\arrow{to}}}}]
				\A;
		\end{scope}
        \node[circle, fill, inner sep=1pt, outer sep=0.0cm]
            (zero_right) at (0, 0) {};
        \node[circle, fill, inner sep=1pt, outer sep=0.0cm, label=left:{$1$}]
            (one_right) at (1, 0) {};
        \node[text=color1]
			(img_right) at (1.6, 1.3) {$\phi(S^1)$};
        \node[draw, fill=white, rounded corners=2pt, inner sep=3pt]
			(U_right) at (-1.3, 0.7) {$\DefCnbhd{\phi}$};
    \end{scope}
    
\end{tikzpicture}
\caption{
The image of $S^1$ under a complex deformation $\phi \in \DefC$ and the set $\DefCnbhd{\phi}$, which is defined in Equation~\eqref{eq:def_defcnbhd} as the region bounded by $S^1$ and $\phi(S^1)$.
}
\label{fig:complex_deformation}
\end{figure}

\begin{definition}
\label{def:defc}
A \emph{complex deformation} of the unit circle is an injective real-analytic map
\begin{align}
\phi \colon S^1 \to \C \setminus \{0\}
\end{align}
such that the winding number of $\phi$ around $0$ equals $+1$.
We denote the set of complex deformations by $\DefC$.
\end{definition}

\noindent
Reasons to use precisely this definition are detailed in Section~\ref{section:invertible_composable}, where also the composition and inversion operations are defined.

\subsection{Complex-valued vector fields on the circle}
\label{section:vector_fields}

In this section, we cover some preliminaries on $\VectC$, $\VectR$, and the Witt algebra.
In the standard coordinate $z$ on $\C$, we denote a vector field $v \in \VectC$ by ${v = v(z) \, \partial_z}$.
Here, $v(z) \in \C$ is a real-analytic function of $z \in S^1$.
In the following, we use this identification to put a topology on the vector space $\VectC$.

Let $\mathcal{O}(S^1)$ denote the complex vector space of complex-valued real-analytic functions on~$S^1$.
Such functions are germs of holomorphic functions defined on open neighborhoods of~$S^1$ inside $\C$.
Concretely, for each $f \in \mathcal{O}(S^1)$, there exists $n \geq 1$ such that $f$ is the restriction of a holomorphic function on the annular open neighborhoods of $S^1$ defined by
\begin{align}
U_n = \setsuchthat{z \in \C}{1 - \frac{1}{n} < |z| < 1 + \frac{1}{n}}.
\label{eq:circlenbhd}
\end{align}
The complex vector space of holomorphic functions on $U_n$, denoted by $\mathcal{O}(U_n)$, comes with the topology of uniform convergence on compact sets in $U_n$.
We equip $\mathcal{O}(S^1)$ with the inductive limit topology induced by the restrictions $\mathcal{O}(U_n) \to \mathcal{O}(S^1)$.
This topology makes $\mathcal{O}(S^1)$ a convenient vector space in the sense of~\cite[Theorem~8.4]{Kriegl-Michor:Convenient_setting_of_global_analysis}.
Following~\cite[Section~1.2 and Theorem~2.14]{Kriegl-Michor:Convenient_setting_of_global_analysis}, the notion of smoothness on such a topological vector space induces a \emph{Fr{\"o}licher structure} where the smooth curves are given by
\begin{align}
\curves{\mathcal{O}(S^1)} = C^\infty(\R, \mathcal{O}(S^1)).
\end{align}
This is simply the set of smooth functions $\gamma \colon \R \times S^1 \to \C$ such that for each fixed $t \in \R$, the function $\gamma(t, \blank) \colon S^1 \to \C$ is real-analytic.
By identifying vector fields and functions as
\begin{align}
\begin{aligned}
\VectC &\longrightarrow \mathcal{O}(S^1), \\
v(z) \, \partial_z &\longmapsto v(z),
\end{aligned}
\label{eq:vectc_os1}
\end{align}
we equip $\VectC$ with the topology of $\mathcal{O}(S^1)$.

The \emph{Witt algebra} is the infinite-dimensional complex Lie algebra generated by $\ell_n$, $n \in \Z$, and the relations $[\ell_n, \ell_m] = (n - m) \ell_{n + m}$ where $n, m \in \Z$.
The generators may be realized in $\VectC$ by identifying them with the monomials
\begin{align}
\ell_n = - z^{n + 1} \, \partial_z, \qquad n \in \Z.
\label{eq:witt_basis}
\end{align}
Finite complex linear combinations of the Witt algebra generators~\eqref{eq:witt_basis} are dense in $\VectC$ 
, and moreover, the standard Lie bracket on $\VectC$ given by
\begin{align}
[v, w] = (w(z) v'(z) - v(z) w'(z)) \, \partial_z, \qquad v, w \in \VectC,
\label{eq:lie_bracket_vector_fields}
\end{align}
agrees with the relations of the Witt algebra when specialized to the generators.

Given a biholomorphism $F \colon A \to B$ between annular open neighborhoods $A, B$ of $S^1$ inside $\C$, 
and a vector field $v \in \VectC$ which has analytic continuation to $B$, the pullback of $v$ by $F$ is a holomorphic vector field on $A$ given by
\begin{align}
F^* v = \frac{v(F(z))}{F'(z)} \partial_z.
\label{eq:pullback_vector_field}
\end{align}
In particular, the \emph{inversion} of the Riemann sphere $\hat \C = \C \cup \{\infty\}$ given by
\begin{align}
\inversion \colon \hat \C \to \hat \C, \qquad z \mapsto \frac{1}{z}
\label{eq:inversion}
\end{align}
acts on the generators $\ell_n$ by pullback as
\begin{align}
\inversion^* \ell_n 
= - z^{-(n+1)} (-z^{2}) \partial_z
= - \ell_{-n}, \qquad n \in \Z.
\end{align}
When viewed as a vector field on the Riemann sphere, for $n \geq 2$, we observe that $\ell_n$ extends holomorphically to $\C$ with a singularity at $\infty$, 
while $\ell_{-n}$ extend holomorphically to the punctured sphere $\hat \C \setminus \{0\}$ with a singularity at $0$.
The vector fields $\ell_{-1}$, $\ell_0$, and $\ell_1$ extend holomorphically to the whole Riemann sphere $\hat \C$, and they span a Lie subalgebra of $\VectC$ isomorphic to $\mobvect$.

Regarding $\VectC$ as a Lie algebra over $\R$, it is generated by$\setsuchthat{\ell_n, \, \ii \ell_n}{n \in \Z}$. 
More geometrically, it is convenient to consider the following tangential ($\parallel$) and normal ($\perp$) vector fields to $S^1$: 
\begin{align}
\begin{aligned}
\ellpar{n} = & \; \frac{\ell_n - \ell_{-n}}{2}, \qquad
&\ellpari{n} = & \; \frac{\ell_n + \ell_{-n}}{2 \ii}, \\
\ellperpi{n} = & \; \frac{\ell_n - \ell_{-n}}{2 \ii}, \qquad
&\ellperp{n} = & \; \frac{\ell_n + \ell_{-n}}{2}.
\end{aligned}
\label{eq:def_ellpar_ellperp}
\end{align}
The tangential vector fields generate the space $\VectR$ of real-analytic sections of the tangent bundle of $S^1$.
In the coordinate $\theta$ defined by $z = e^{\ii \theta}$, these become
\begin{align}\label{eq:ellpar}
\ellpar{n}(e^{\ii \theta}) = \ii e^{\ii \theta} \sin(n \theta)
\qquad \textnormal{and} \qquad 
\ellpari{n}(e^{\ii \theta}) = \ii e^{\ii \theta} \cos(n \theta) .
\end{align}
Note that the factor $\ii e^{\ii \theta}$ rotates the real-valued functions of $\theta$ into tangent vector fields.
More generally, we use the coordinate notation
\begin{align}
v = \thetacoords{v}(\theta) \, \partial_\theta, \qquad
\partial_\theta = \ii e^{\ii \theta} \, \partial_z, \qquad
\thetacoords{v}(\theta) = - \ii e^{-\ii \theta} v(z),
\label{eq:theta_coords}
\end{align}
along with the coordinate function $v(z)$ in~\eqref{eq:vectc_os1}.

\subsection{Invertible and composable complex deformations}
\label{section:invertible_composable}

\begin{figure}[t]
\centering
\begin{tikzpicture}
	\def\A{(58.6043, -23.569) circle (1.0)}
	\def\B{(58.6854, -25.1077).. controls (58.6462, -25.0937) and (58.6144, -25.0624) .. (58.5928, -25.0167).. controls (58.5689, -24.9662) and (58.5681, -24.8725) .. (58.5907, -24.7806).. controls (58.6095, -24.7041) and (58.6067, -24.6687) .. (58.5789, -24.6323).. controls (58.5292, -24.5671) and (58.3673, -24.4825) .. (58.1861, -24.427).. controls (58.0545, -24.3867) and (58.0059, -24.369) .. (57.9773, -24.3511).. controls (57.9454, -24.3311) and (57.9051, -24.2865) .. (57.878, -24.2411).. controls (57.8437, -24.1837) and (57.7348, -24.0283) .. (57.6984, -23.9847).. controls (57.6391, -23.9138) and (57.5582, -23.8371) .. (57.527, -23.8222).. controls (57.4813, -23.8003) and (57.4391, -23.8049) .. (57.3606, -23.84).. controls (57.2777, -23.8771) and (57.2243, -23.8895) .. (57.1709, -23.884).. controls (57.0935, -23.8759) and (57.0401, -23.8293) .. (57.0338, -23.7643).. controls (57.0291, -23.7162) and (57.0355, -23.6951) .. (57.0637, -23.6629).. controls (57.1055, -23.6153) and (57.1387, -23.603) .. (57.225, -23.6031).. controls (57.2666, -23.6031) and (57.333, -23.606) .. (57.3725, -23.6094) -- (57.4443, -23.6156) -- (57.4745, -23.5932).. controls (57.5555, -23.5335) and (57.6469, -23.2872) .. (57.648, -23.1262).. controls (57.6484, -23.0577) and (57.6463, -23.0464) .. (57.6268, -23.0064).. controls (57.6082, -22.9686) and (57.5928, -22.9525) .. (57.52, -22.895).. controls (57.4114, -22.8092) and (57.3803, -22.7694) .. (57.3691, -22.7017).. controls (57.352, -22.5991) and (57.4421, -22.5178) .. (57.5483, -22.5398).. controls (57.6015, -22.5509) and (57.6502, -22.5802) .. (57.7307, -22.6495).. controls (57.8272, -22.7325) and (57.8452, -22.7407) .. (57.9282, -22.74).. controls (57.9894, -22.7395) and (58.0059, -22.736) .. (58.0908, -22.7048).. controls (58.1985, -22.6654) and (58.2774, -22.6295) .. (58.3217, -22.5997).. controls (58.3656, -22.5701) and (58.4837, -22.5196) .. (58.5226, -22.5139).. controls (58.5569, -22.5088) and (58.5799, -22.5129) .. (58.7622, -22.5552).. controls (58.9065, -22.5889) and (58.9987, -22.5996) .. (59.059, -22.5897).. controls (59.1349, -22.5772) and (59.1659, -22.5528) .. (59.2322, -22.4533).. controls (59.3081, -22.3392) and (59.4089, -22.2804) .. (59.4853, -22.3056).. controls (59.5675, -22.3328) and (59.6077, -22.4213) .. (59.5753, -22.5042).. controls (59.5539, -22.5588) and (59.5221, -22.5981) .. (59.447, -22.6629).. controls (59.4073, -22.6971) and (59.3667, -22.7391) .. (59.3567, -22.7562).. controls (59.2996, -22.8535) and (59.347, -23.0629) .. (59.4896, -23.3445).. controls (59.5299, -23.4241) and (59.5671, -23.5028) .. (59.5722, -23.5194).. controls (59.5773, -23.5361) and (59.585, -23.557) .. (59.5893, -23.5659).. controls (59.5999, -23.5877) and (59.5959, -23.6157) .. (59.5582, -23.7878).. controls (59.5542, -23.8065) and (59.5486, -23.844) .. (59.5459, -23.871).. controls (59.5432, -23.898) and (59.5398, -23.9223) .. (59.538, -23.925).. controls (59.5363, -23.9276) and (59.5297, -23.9667) .. (59.5232, -24.0119).. controls (59.501, -24.1659) and (59.4867, -24.1875) .. (59.37, -24.2425).. controls (59.1016, -24.3691) and (58.9645, -24.4501) .. (58.8645, -24.541).. controls (58.7724, -24.6248) and (58.7664, -24.6527) .. (58.8163, -24.7636).. controls (58.8508, -24.8402) and (58.8695, -24.9039) .. (58.8695, -24.9449).. controls (58.8695, -25.0198) and (58.8218, -25.095) .. (58.7643, -25.111).. controls (58.7297, -25.1206) and (58.7209, -25.1202) .. (58.6853, -25.1076) -- cycle}
	\def\C{(58.3512, -22.5503).. controls (58.4726, -22.5203) and (58.5737, -22.6275) .. (58.6872, -22.616).. controls (58.7982, -22.5787) and (58.9159, -22.5668) .. (59.0275, -22.6092).. controls (59.1764, -22.6494) and (59.3131, -22.7556) .. (59.3567, -22.908).. controls (59.3997, -23.0239) and (59.4474, -23.1414) .. (59.5332, -23.2329).. controls (59.6273, -23.3417) and (59.5948, -23.4898) .. (59.635, -23.6159).. controls (59.6837, -23.7204) and (59.6955, -23.8366) .. (59.6746, -23.9493).. controls (59.6405, -24.1554) and (59.4246, -24.2988) .. (59.2217, -24.2753).. controls (59.0864, -24.2952) and (59.0527, -24.4734) .. (58.911, -24.4935).. controls (58.7757, -24.5504) and (58.6281, -24.5141) .. (58.487, -24.5348).. controls (58.333, -24.5508) and (58.1642, -24.5231) .. (58.0539, -24.405).. controls (57.9405, -24.3207) and (57.9868, -24.1404) .. (57.8554, -24.0779).. controls (57.6778, -23.9887) and (57.612, -23.763) .. (57.6619, -23.5793).. controls (57.698, -23.4838) and (57.7227, -23.3848) .. (57.7106, -23.2815).. controls (57.6995, -23.1522) and (57.7157, -23.0108) .. (57.8108, -22.9139).. controls (57.903, -22.8213) and (58.0321, -22.7737) .. (58.1135, -22.6687).. controls (58.1796, -22.6082) and (58.2627, -22.5656) .. (58.3512, -22.5504)(58.6041, -23.5727) -- (58.6041, -23.5769)}
	
	\useasboundingbox (-5.5,-2.5) rectangle (5.5,2);
    \node (pic_left) at (-3, 0) {};
    \begin{scope}[shift=(pic_left), scale=1.5]	
		\begin{scope}[shift={(-58.6043, 23.569)}]
			\path[fill=color2, even odd rule, pattern=north west lines, even odd rule, pattern color = color2]
				\A
				\B;
			\path[draw, very thick]
				\B;
			\path[draw=color1, very thick]
				\A;
		\end{scope}
        \node[circle, fill, inner sep=1pt, outer sep=0.0cm]
            (zero_left) at (0, 0) {};
		\node[anchor=west]
			(img_left) at (0.9, 0.9) {$\phi^{-1}(S^1)$};
        \node[draw, fill=white, rounded corners=2pt, inner sep=3pt, anchor=west]
			(U_left) at (0.5, -1.35) {$\DefCnbhd{\phi^{-1}} = \phi^{-1}\big(\DefCnbhd{\phi}\big)$};
        \node[circle, fill, inner sep=0.8pt, outer sep=0.0cm]
            (U_pt_left) at (0.12, -1.35) {};
    \end{scope}

    \node (pic_right) at (3, 0) {};
    \begin{scope}[shift=(pic_right), scale=1.5]
		\begin{scope}[shift={(-58.6043, 23.569)}]
			\path[fill=color2, even odd rule, pattern=north west lines, even odd rule, pattern color = color2]
				\A
				\C;
			\path[draw, very thick]
				\A;
			\path[draw=color1, very thick]
				\C;
		\end{scope}
        \node[circle, fill, inner sep=1pt, outer sep=0.0cm]
            (zero_right) at (0, 0) {};
		\node[text=color1, anchor=west]
			(img_right) at (0.9, 0.9) {$\phi(S^1)$};
        \node[draw, fill=white, rounded corners=2pt, inner sep=3pt]
			(U_right) at (0.5, -0.3) {$\DefCnbhd{\phi}$};
        \node[circle, fill, inner sep=0.8pt, outer sep=0.0cm]
            (U_pt_right) at (0.98, -0.43) {};
    \end{scope}

    \path[->] (-0.5, 0) edge[above] node {$\phi$} (0.5, 0);
    \path[draw] (U_pt_right) -- (U_right);
    \path[draw] (U_pt_left) -- (U_left);
    
\end{tikzpicture}
\caption{
An invertible complex deformation $\phi \in \definvset$ maps $\DefCnbhd{\phi^{-1}}$ to $\DefCnbhd{\phi}$.
}
\label{fig:inverse}
\end{figure}

Let us now discuss each aspect of the Definition~\ref{def:defc} of complex deformations.
Since the set $\DefC$ of complex deformations is an open subset of $\mathcal{O}(S^1)$ (cf.~\cite{Kriegl-Michor:Convenient_setting_of_global_analysis})
we equip $\DefC$ with the subspace topology and Fr{\"o}licher structure.
The condition on the winding number around $0$ is such that $\DefC$ with this topology is connected.
Moreover, requiring the image of $S^1$ under a complex deformation $\phi \in \DefC$ to avoid $0$ unambiguously defines the inside boundary of the following closed subset of $\C$: 
\begin{align}
\DefCnbhd{\phi} = \overline{\bigcup\setsuchthat{V \in \pi_0\Big(\hat \C \setminus \big(S^1 \cup \phi(S^1)\big)\Big)}{0, \infty \notin V}} .
\label{eq:def_defcnbhd}
\end{align}
This is the set of points bounded by the inner and outer boundaries of the union
of the curve $\phi(S^1)$ and the unit circle $S^1$ as viewed from $0$ and $\infty$; see Figure~\ref{fig:complex_deformation}.

\begin{figure}
\centering
\includestandalone[]{figures/fig_composition}
\caption{
The composition $\phi \circ \psi$ of a composable pair $(\phi, \psi) \in \defmultset$ is defined by analytically continuing $\phi$ to $\psi(S^1)$ such that we have a biholomorphism $\phi \colon \DefCnbhd{\psi} \to \phi\big(\DefCnbhd{\psi}\big)$.
}
\label{fig:composition}
\end{figure}

\noindent
A function is \emph{univalent} if it is holomorphic and injective. 
\begin{definition}
\leavevmode
\begin{enumerate}
\item
A complex deformation $\phi \in \DefC$ is \emph{invertible} if $\phi^{-1} \colon \phi(S^1) \to \C$ extends univalently and without zeros to an open neighborhood
of $\DefCnbhd{\phi}$.
We denote the set of invertible complex deformations by $\definvset \subset \DefC$.
\item
A pair $\phi, \psi \in \DefC$ of complex deformations is \emph{composable} if $\phi$ extends univalently and without zeros an open neighborhood of $\DefCnbhd{\psi}$.
We denote the set of composable pairs by $\defmultset \subset \DefC \times \DefC$.
\end{enumerate}
\end{definition}
The requirements of univalent extensions to neighborhoods of $U(\phi)$ are sufficient to make inverses and compositions uniquely defined.
Namely, the inverse $\phi^{-1}$ of a general complex deformation $\phi \in \DefC$ is initially defined on $\phi(S^1)$.
If $\phi^{-1}$ does not have singularities on $S^1$, but does have singularities in the set $\DefCnbhd{\phi}$, then $\phi$ may have several analytic continuations to $S^1$.
Similarly, for a composition, we need that $\phi$ analytically continues to $\psi(S^1)$ in a unique way.
Initially, $\phi$ is defined just on $S^1$, and if there is a singularity in $\DefCnbhd{\psi}$, there may be multiple ways to continue.
Avoiding zeros ensures that inverses and compositions are again complex deformations, that is, they take values in $\C \setminus \{0\}$ with the correct winding number.
The inversion and composition laws
\begin{align}
\begin{aligned}
\definvset &\longrightarrow \definvset, \qquad & \defmultset &\longrightarrow \DefC, \\
\phi &\longmapsto \phi^{-1}, \qquad & (\phi, \psi) &\longmapsto \phi \circ \psi,
\end{aligned}
\label{eq:def_inv_mult}
\end{align}
are respectively defined by the analytic continuation of $\phi^{-1}$ to $\DefCnbhd{\phi}$, and the analytic continuation of $\phi$ to $\DefCnbhd{\psi}$.
See also Figures~\ref{fig:inverse} and~\ref{fig:composition}.

\begin{proposition}
\label{prop:defc_inv_comp}
Inversion and composition have the following properties.
\begin{enumerate}
\item \label{item:defc_inv_comp1}
The identity map $\id$ on $\C$ serves as the identity element of $\DefC$ in the sense that $(\id, \phi), (\phi, \id) \in \defmultset$ for all $\phi \in \DefC$ and $\id \circ \phi = \phi \circ \id = \phi$.

\item \label{item:defc_inv_comp2}
For $\phi \in \definvset$ we have $\phi^{-1} \in \definvset$, and $(\phi^{-1})^{-1} = \phi$. Moreover, $(\phi, \phi^{-1}), (\phi^{-1}, \phi) \in \defmultset$, and $\phi \circ \phi^{-1} = \phi^{-1} \circ \phi = \id$.

\item \label{item:defc_inv_comp3}
For $\phi_1, \phi_2, \phi_3 \in \DefC$, we have $(\phi_1 \circ \phi_2) \circ \phi_3 = \phi_1 \circ (\phi_2 \circ \phi_3)$ if each of the four pairs is composable.
\end{enumerate}
\end{proposition}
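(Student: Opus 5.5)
The plan is to reduce everything to the identity theorem for holomorphic functions together with an elementary topological description of the sets $\DefCnbhd{\phi}$. The basic observation is this: $\phi$ is injective and real-analytic, so $\phi(S^1)$ is an analytic Jordan curve in $\C\setminus\{0\}$ with winding number $+1$ around $0$. In every case, the analytic continuations used to define $\phi^{-1}$ and $\phi\circ\psi$ are unique, because they are holomorphic on a neighborhood of a closed set $\DefCnbhd{\cdot}$ and are pinned down on the curve $S^1$ (respectively $\phi(S^1)$). Here one must check that every connected component of that neighborhood meets $S^1$, respectively $\phi(S^1)$; this holds since each component of $\DefCnbhd{\cdot}$ is bounded by arcs of both curves.

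For \ref{item:defc_inv_comp1}, first note $\DefCnbhd{\id}=S^1$. Take the pair $(\phi,\id)$: $\phi$ extends univalently and without zeros to a thin annulus around $S^1$. This follows because an injective real-analytic map of $S^1$ with nonvanishing derivative is univalent near $S^1$ by a compactness argument. Injectivity of $\phi$ with nonvanishing derivative is needed here; I will argue that $\phi'\neq0$ is implicit, or else use the extension to a neighborhood of $S^1$ provided by real-analyticity. Hence $\phi\circ\id=\phi$. For the pair $(\id,\phi)$, the identity is entire and has no zeros away from $0$. Also $0\notin\DefCnbhd{\phi}$: by definition $\DefCnbhd{\phi}$ is the closure of components avoiding $0$, and $0$ lies off both curves, so a neighborhood avoiding $0$ exists. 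Hence $\id\circ\phi=\phi$.

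For \ref{item:defc_inv_comp2}, the key geometric step is to show that the continued $\phi$ maps $\DefCnbhd{\phi^{-1}}$ biholomorphically onto $\DefCnbhd{\phi}$, as in Figure~\ref{fig:inverse}. Let $g$ be the univalent zero-free extension of $\phi^{-1}$ near $\DefCnbhd{\phi}$. Then $g$ maps the curves $\phi(S^1)$ and $S^1$ to $S^1$ and $\phi^{-1}(S^1):=g(S^1)$. The complement components avoiding $0,\infty$ must correspond to each other; this uses the argument principle and winding-number bookkeeping, with $g$ preserving orientation and never hitting $0$. Thus $g(\DefCnbhd{\phi})=\DefCnbhd{g}$. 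Next, $g$ is injective real-analytic on $S^1$ with winding number $+1$, so $g\in\DefC$; its inverse is $\phi$, which is univalent near $\DefCnbhd{g}$. Hence $g\in\definvset$ and $(\phi^{-1})^{-1}=\phi$. Composability of $(\phi,\phi^{-1})$ follows from the same identification, and so does that of $(\phi^{-1},\phi)$. The identities $\phi\circ\phi^{-1}=\id$ and $\phi^{-1}\circ\phi=\id$ then hold near $S^1$ by the identity theorem. \textbf{I expect this correspondence of regions to be the main obstacle.} $\DefCnbhd{\phi}$ may have many components, and the curves may touch tangentially. I would handle this by a Jordan-curve argument: a point $w$ lies in $\DefCnbhd{\phi}$ iff, up to closure, the winding numbers of $S^1$ and $\phi(S^1)$ around $w$ differ. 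Since $g$ is a homeomorphism of neighborhoods, it preserves such winding-number differences.

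For \ref{item:defc_inv_comp3}, let $\Phi_1$ be the extension of $\phi_1$ near $\DefCnbhd{\phi_2\circ\phi_3}$, and likewise for the others. Both sides are holomorphic near $S^1$ and agree there. Near a point of $S^1$, $\phi_1\circ(\phi_2\circ\phi_3)$ is $\Phi_1\circ\Phi_2\circ\phi_3$ by definition. The other side is $\Psi\circ\phi_3$, where $\Psi$ is the extension of $\phi_1\circ\phi_2$. On a neighborhood of $S^1$, $\Psi=\Phi_1\circ\Phi_2$, and continuing along the path-connected regions gives $\Psi=\Phi_1\circ\Phi_2$ near $\phi_3(S^1)$. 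This requires $\Phi_2(\DefCnbhd{\phi_3})$ to be reachable within the domain of $\Phi_1$, which is exactly what the composability hypotheses provide. Equality on $S^1$ then follows from the identity theorem.
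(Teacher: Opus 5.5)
\textbf{Item 2.} Your treatment of items~1 and~2 follows the paper's reduction: $U(\id)=S^1$, and the univalent extension $g$ of $\phi^{-1}$ carries $U(\phi)$ onto $U(g)$. The paper simply asserts the latter, and you rightly single it out as the crux. However, the criterion you propose for it is false. By~\eqref{eq:def_defcnbhd}, $U(\phi)$ is the complement of the two components $C_0\ni 0$ and $C_\infty\ni\infty$ of $\hat{\C}\setminus(S^1\cup\phi(S^1))$. So it also contains every bounded ``pocket'' component on which the two winding numbers agree (both $0$, or both $1$ but cut off from $0$). This is what the paper means by the region between the inner and outer boundaries.

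Your winding-number bookkeeping is correct once it is justified properly. The cycle $S^1-\phi(S^1)$ is null-homologous in the domain $W$ of $g$, because $\hat{\C}\setminus W\subset C_0\cup C_\infty$. The argument principle then gives $n(g(S^1),g(w))-n(S^1,g(w))=n(S^1,w)-n(\phi(S^1),w)$, and $n(g(S^1),0)=1$ since $0\notin g(W)$. But this only shows that $g$ maps the symmetric-difference part of $U(\phi)$ onto that of $U(g)$. What $\phi^{-1}\in I$ and $(\phi,\phi^{-1})\in M$ actually require is $U(g)\subset g(W)$, pockets included.

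One way to close this:
\begin{enumerate}
\item Each component $V\subset U(\phi)$ is mapped onto a whole component of $\hat{\C}\setminus(S^1\cup g(S^1))$ avoiding $0$ and $\infty$.
\item Thin collars of $\partial C_0$ inside $C_0$ and of $\partial C_\infty$ inside $C_\infty$ (images of $\{r<|z|<1\}$ under Riemann maps) are connected and lie in $W$, so each lands in a single component.
\item Every component of $\hat{\C}\setminus(S^1\cup g(S^1))$ abuts the curves. Pulling back its points near a boundary point by $g^{-1}$ shows it is either some $g(V)$ or contains one of the two collar images.
\item The components of $0$ and $\infty$ are distinct and not of the form $g(V)$, so they are exactly those two.
\end{enumerate}
Hence $g(U(\phi))=U(g)$.

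In item~1, your fallback via real-analyticity does not work. If $\phi'$ vanishes somewhere on $S^1$, no extension is univalent near that point and $(\phi,\id)\notin M$. One must read $\phi'\neq 0$ into Definition~\ref{def:defc}, as the paper tacitly does.

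\textbf{Item 3.} Here your route differs from the paper's, and its key step fails as stated. Near $S^1$ one has $\Psi=\tilde\Phi_1\circ\phi_2$, with $\tilde\Phi_1$ the extension of $\phi_1$ over $U(\phi_2)$. By contrast, $\Phi_1$ is the extension over $U(\phi_2\circ\phi_3)$ and need not be defined near $\phi_2(S^1)$ at all. So ``$\Psi=\Phi_1\circ\Phi_2$ near $S^1$'' is not available.

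Nor do the hypotheses put $\Phi_2(U(\phi_3))$, the region between $\phi_2(S^1)$ and $(\phi_2\circ\phi_3)(S^1)$, into the domain of $\Phi_1$. For $\phi_2(z)=2z$ and $\phi_3(z)=z/4$ one gets $\Phi_2(U(\phi_3))=\{\tfrac12\le|z|\le 2\}$ but $U(\phi_2\circ\phi_3)=\{\tfrac12\le|z|\le 1\}$.

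What actually has to be shown is that two continuations of the germ $\phi_1$ agree at points of $(\phi_2\circ\phi_3)(S^1)$. One goes through $U(\phi_2\circ\phi_3)$; the other goes through $U(\phi_2)$ and then, as $\Psi\circ\Phi_2^{-1}$, through $\Phi_2(U(\phi_3))$. That is a monodromy statement, and it is the whole content of associativity. ``Continuing along path-connected regions'' does not supply it.

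The paper avoids global continuation. By the identity theorem it suffices to compare near one $z_0\in S^1$, chosen with $\phi_3(z_0)\in S^1$, so that $(\phi_1\circ\phi_2)\circ\phi_3$ involves no continuation of $\phi_1\circ\phi_2$ near $z_0$. The comparison of the two extensions of $\phi_1$ is then confined to the single point $w_0=\phi_2(\phi_3(z_0))\in U(\phi_2)\cap U(\phi_2\circ\phi_3)$, with separate cases when $\phi_3(S^1)\cap S^1=\emptyset$. You need an argument at that level of specificity. A natural candidate exploits that composability of $(\phi_1\circ\phi_2,\phi_3)$ makes $\Psi\circ\Phi_2^{-1}$ a single-valued continuation of $\tilde\Phi_1$ over $\Phi_2(U(\phi_3))$.
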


\begin{proof} \ 
\begin{enumerate}
\item
$\id$ is defined on $\C$, and $\DefCnbhd{\id} = S^1$.
\item
By invertability, $\phi^{-1}$ extends to $\DefCnbhd{\phi}$, yielding $\phi^{-1} \colon \DefCnbhd{\phi} \to \DefCnbhd{\phi^{-1}}$.
Hence, $\phi$ extends to $\DefCnbhd{\phi^{-1}}$.
The conditions for composing $\phi$ and $\phi^{-1}$ both ways are precisely the conditions for their individual invertability.
\item
Either $\phi_2$ is extended to $\DefCnbhd{\phi_3}$ first, and then $\phi_1$ is extended to $\DefCnbhd{\phi_2 \circ \phi_3}$, resulting in $\phi_1 \circ (\phi_2 \circ \phi_3)$; 
or $\phi_1$ is extended to $\DefCnbhd{\phi_2}$, and then $\phi_1 \circ \phi_2$ is extended to $\DefCnbhd{\phi_3}$, resulting in $(\phi_1 \circ \phi_2) \circ \phi_3$.
By the identity theorem, we only need to show that both compositions agree in a neighborhood of a point $z_0 \in S^1$.
In the case where $\phi_3(S^1) \cap S^1 \neq \emptyset$, we choose $z_0$ such that $\phi_3(z_0) \in S^1$.
Then, $w_0 = \phi_2(\phi_3(z_0))$ is defined without analytic continuation and is an element of $\DefCnbhd{\phi_2}$.
Thus, the definition of the value of $\phi_1$ at $w_0$ 
and the composition $\phi_1 \circ \phi_2$ use the same analytic extension of $\phi_1$ to a neighborhood of $\DefCnbhd{\phi_2}$, which shows that both ways of composing agree in a neighborhood of $z_0$.
In the case where $\phi_3(S^1) \cap S^1 = \emptyset$, the curve $\phi_3(S^1)$ lies either inside or outside $S^1$.  
Using the analytic extension of $\phi_2$ to $\DefCnbhd{\phi_3}$, the curve $\phi_2(\DefCnbhd{\phi_3})$ might intersect $S^1$ or not.
If it does not intersect $S^1$, then we have an inclusion $\DefCnbhd{\phi_2 \circ \phi_3} \subset\DefCnbhd{\phi_2}$ or $\DefCnbhd{\phi_2 \circ \phi_3} \subset \DefCnbhd{\phi_2}$,
and the analytic extensions of $\phi_1$ agree in either case. 
Finally, if $\phi_2(\DefCnbhd{\phi_3}) \cap S^1 \neq \emptyset$, then we have $(\phi_2 \circ \phi_3)(z_0) = \phi_2(\phi_3(z_0)) \in S^1$ for some $z_0 \in S^1$.
\qedhere
\end{enumerate}
\end{proof}

Note that the third property, associativity, only applies to the composition of three complex deformations.
This is called ``local associativity'' in the context of local Lie groups~\cite{Olver:Non-associative_local_Lie_group}.
It does not necessarily follow that the compositions of more than three elements would be associative, e.g., if some orders of composition are not defined.
Moreover, for a composable pair $(\phi, \psi) \in \defmultset \cap (\definvset \times \definvset)$ of invertible complex deformations, the inverses may not be composable, and the composition $\phi \circ \psi$ may not be invertible.

\begin{remark}
\label{remark:not_open_closed}
$\DefC$ is not a local Lie group in the sense of~\cite{Olver:Non-associative_local_Lie_group, Neeb:Infinite-Dimensional_Lie_Groups, Goldbring:Hilberts_fifth_problem_for_local_groups, Gloeckner-Neeb:Infinite-dimensional_Lie_groups}, 
since this would require that $\definvset$ and $\defmultset$ are open in $\DefC$.
As an open subset of $\mathcal{O}(S^1)$, the set $\DefC$ of complex deformations is an infinite-dimensional manifold modeled on the convenient vector space $\mathcal{O}(S^1)$.
The diffeomorphism group $\Diffpan$ is the submanifold of $\phi \in \DefC$ such that $\phi(S^1) = S^1$.
The subsets $\definvset \subset \DefC$ and $\defmultset \subset \DefC \times \DefC$, however, are neither open nor closed in $\DefC$, and therefore do not come with a manifold structure.
Indeed, for any $\phi \in \DefC \setminus \Diffpan$, there exists a smooth curve through $\phi$ that immediately exits $\DefC$ by introducing a singularity inside $\DefCnbhd{\phi}$.
Likewise, a smooth curve through the composable pair $(\phi, \psi) \in \defmultset$ with $\psi \notin \Diffpan$ may immediately introduce a singularity of $\phi$ inside $\DefCnbhd{\psi}$.
On the other hand, if $\phi \in \DefC \setminus \definvset$ is not invertible due to singularities of $\phi^{-1}$ only on $S^1$, there exists a smooth curve in $\DefC$ through $\phi$ which immediately moves these singularities out of $\DefCnbhd{\phi}$, making the complex deformation invertible.
This shows that $\definvset$ is not closed in $\DefC$, and the same argument also shows that $\defmultset$ is not closed in $\DefC \times \DefC$.
\end{remark}

Still, inversion and composition are Fr{\"o}licher smooth (Fr-smooth), in the sense of Definition~\ref{def:fr} in Appendix~\ref{section:frolicher}, and moreover, we find the following
properties near the identity.

\begin{proposition}
\label{prop:defc_local}
The following properties hold with respect to the subset Fr{\"o}licher structures on $\definvset \subset \DefC$ and $\defmultset \subset \DefC \times \DefC$.
\begin{enumerate}
\item
The inversion law of $\DefC$ defined by Equation~\eqref{eq:def_inv_mult} is Fr-smooth. 
Moreover, for any $\gamma \in \curves{\DefC}$ such that $\gamma(0) = \id$, there exists $\varepsilon > 0$ such that $\gamma(t) \in \definvset$ for $t \in (-\varepsilon, \varepsilon)$.
\item
The composition law of $\DefC$ defined by Equation~\eqref{eq:def_inv_mult} is Fr-smooth. Moreover, for any $(\gamma, \eta) \in \curves{\DefC \times \DefC}$ such that $\gamma(0) = \eta(0) = \id$, there exists $\varepsilon > 0$ such that $(\gamma(t), \eta(s)) \in \defmultset$ for $t, s \in (-\varepsilon, \varepsilon)$.
\end{enumerate}
\end{proposition}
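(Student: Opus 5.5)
The plan is to reduce everything to statements about smooth families of holomorphic functions on fixed annuli. Here ``Fr-smooth'' means that smooth curves are mapped to smooth curves. So for inversion I take $\gamma \in \curves{\DefC}$ with $\gamma(t) \in \definvset$ for all $t$ and must show that $t \mapsto \gamma(t)^{-1}$ lies in $C^\infty(\R, \mathcal{O}(S^1))$. For composition I take a pair of curves $(\gamma, \eta)$ with values in $\defmultset$ and must show that $t \mapsto \gamma(t)\circ\eta(t)$ is smooth.

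The first step is a localization lemma for smooth curves in $\mathcal{O}(S^1)$. By the description of $C^\infty(\R, \mathcal{O}(S^1))$ (the inductive limit being regular, cf.\ Kriegl--Michor), every $t_0$ has a neighborhood $J$ and some $n$ such that $\gamma|_J$ is a smooth curve into $\mathcal{O}(U_n)$. Equivalently, $(t,z) \mapsto \gamma(t)(z)$ is smooth on $J \times U_n$ and holomorphic in $z$. Smoothness in $t$ can then be checked locally and pointwise in $z$ via Cauchy integrals.

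\textbf{Inversion.} Fix $t_0$ and set $\phi = \gamma(t_0) \in \definvset$. Then $\phi^{-1}$ extends univalently to an open neighborhood $W$ of $\DefCnbhd{\phi}$. Equivalently, $\phi$ extends univalently to the neighborhood $\phi^{-1}(W)$ of $\DefCnbhd{\phi^{-1}}$, which contains $S^1$. For $t$ near $t_0$ and $w$ in a fixed annulus $A$ around $S^1$, I want to solve $\gamma(t)(\zeta) = w$. The natural tool is the holomorphic implicit function theorem, or the argument-principle formula
\[
\gamma(t)^{-1}(w) = \frac{1}{2\pi \ii}\oint_{C} \frac{\zeta\, \partial_\zeta\gamma(t)(\zeta)}{\gamma(t)(\zeta) - w}\, d\zeta ,
\]
taken over a contour $C$ in the domain of $\gamma(t)$ surrounding the relevant preimage. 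The problem is that $\gamma(t)$ is only known to be smooth in $t$ on a small annulus $U_n$, whereas the inverse must be computed near $S^1 \subset$ target. My approach is to use the integral formula locally: near each point of $S^1$, pick a point $w$ in the target and a small disk around $\gamma(t)^{-1}(w)$. This disk is where the analytic continuation of $\gamma(t)$ happens, and continuation along $\DefCnbhd{\phi^{-1}}$ is done by the continuation of the inverse along $\DefCnbhd{\phi}$ in the target. The integrand is smooth in $t$, so the germ $\gamma(t)^{-1}$ at each point of $S^1$ depends smoothly on $t$. Compactness of $S^1$ then yields a uniform annulus, and hence smoothness into $\mathcal{O}(S^1)$.

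\textbf{Main obstacle.} The hardest point is to make ``continuation along $\DefCnbhd{\phi}$'' uniform in $t$. I would handle it by noting that $\gamma(t)^{-1}$ is, by assumption, univalent on a neighborhood of $\DefCnbhd{\gamma(t)}$. The identity $\gamma(t)\circ\gamma(t)^{-1} = \id$ near $\gamma(t)(S^1)$ pins down the germ. The germ is then determined by smooth data near $S^1$ via the local inversion formula at points of $\gamma(t)(S^1)$ (which move smoothly), followed by continuation to $S^1$. Smoothness of the continued function follows because an analytic continuation of a function that is smooth in a parameter on an open set is smooth in that parameter wherever it exists. This last fact holds by the identity theorem applied to $t$-derivatives, which are themselves holomorphic families.

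\textbf{Composition.} The same scheme applies to composition: $\gamma(t)\circ\eta(t)$ near $S^1$ is the continuation of $\gamma(t)$ evaluated at $\eta(t)(z)$.

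\textbf{Local statements at the identity.} These are the easier part. If $\gamma(0) = \id$, then for $t$ near $0$ the map $\gamma(t)$ is holomorphic on a fixed annulus $U_n$ and $C^1$-close to $\id$ there. By Rouché's theorem and a uniform derivative bound, it is univalent and zero-free on a slightly smaller annulus $U_{2n}$, and $\gamma(t)(S^1) \subset U_{4n}$. Hence $\DefCnbhd{\gamma(t)} \subset \overline{U_{4n}}$, and the inverse exists and is univalent on a neighborhood of it by Rouché again, giving $\gamma(t) \in \definvset$. Similarly, for $(\gamma(t), \eta(s))$ with $t, s$ small, $\DefCnbhd{\eta(s)} \subset U_{4n}$, on which $\gamma(t)$ is univalent and zero-free, so $(\gamma(t), \eta(s)) \in \defmultset$.
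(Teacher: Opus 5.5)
Your treatment of the statements at $\id$ is correct and is essentially the paper's argument. Regularity of the inductive limit gives one annulus $U_n$ for all small $t$. Univalence, zero-freeness and $S^1\subset\gamma(t,U_n)$ (your Rouch\'e step) then give invertibility and composability. The gap is in the Fr-smoothness part. The lemma you invoke, that the analytic continuation of a family which is smooth in $t$ on an open set is smooth in $t$ wherever it exists, is false. The identity theorem only identifies $\partial_t$ of the continuation with the continuation of $\partial_t$ once you know the former exists. For a counterexample, let $\chi\in C_c^\infty((-1,1))$ with $0\le\chi\le 1=\chi(0)$, and set $\chi_k(t)=\chi\big(3k^2(t-\tfrac1k)\big)$ for $k\ge 2$; these have pairwise disjoint supports. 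Then $F(t,z)=\sum_{k\ge2}\chi_k(t)(2z)^{-k}$ is holomorphic on $\C\setminus\{0\}$ for each $t$, since at most one term is nonzero. It is jointly smooth on $\R\times\{|z|>\tfrac12\}$, because all derivatives converge uniformly on $\R\times\{|z|\ge r\}$ for $r>\tfrac12$, as $\sum_k k^N(2r)^{-k}<\infty$. Yet $F(\tfrac1k,\tfrac12)=1$ while $F(0,\tfrac12)=0$.

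No better argument can close this. With the structure you and the paper use (generated by smooth curves into $\mathcal{O}(S^1)$, resp.\ $\mathcal{O}(S^1)^2$, with values in $\definvset$, resp.\ $\defmultset$), the Fr-smoothness assertions are false. Let $\psi(z)=z/2$, so $\DefCnbhd{\psi}=\{\tfrac12\le|z|\le1\}$, and let $\phi_t(z)=z+\delta\sum_{k\ge2}k^{-3/2}\chi_k(t)(2z)^{-k}$. Then $t\mapsto\phi_t$ is smooth into $\mathcal{O}(S^1)$. Take $\delta>0$ small. Whenever $\chi_k(t)\neq0$, the perturbation has derivative $O(\delta k^{-1/2})$ on $\{\tfrac12(1-\tfrac1{2k})<|z|<2\}$, so $\phi_t$ is univalent and zero-free there and $(\phi_t,\psi)\in\defmultset$ for all $t$. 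However, $(\phi_t\circ\psi)(1)=\tfrac12+\delta\sum_k k^{-3/2}\chi_k(t)$ has derivative at least $3\delta k^{1/2}$ in absolute value somewhere near $t=\tfrac1k$. So it is not $C^1$ at $t=0$, although $\phi\mapsto\operatorname{Re}\phi(1)$ is Fr-smooth on $\DefC$. Similarly, $\gamma(t)=2z+\delta\sum_k k^{-3/2}\chi_k(t)(2z)^{-k}$ stays in $\definvset$. But $\partial_t\gamma(t)^{-1}(1)=-\partial_t\gamma/\partial_z\gamma$, evaluated at $\gamma(t)^{-1}(1)$ (a point near $|z|=\tfrac12$), is unbounded as $t\to0$.

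The paper's proof has the same hole: it simply asserts that $\gamma^{-1}(t,z)$ is smooth in $(t,z)$ on $\DefCnbhd{\phi}$. A true statement needs a finer structure on $\definvset$ and $\defmultset$. Its generating curves should extend holomorphically, jointly smoothly in $(t,z)$ and locally uniformly in $t$, to a neighborhood of $\DefCnbhd{\gamma(t)^{-1}}$ (resp.\ of $\DefCnbhd{\eta(t)}$ for the left factor). This is in the spirit of the paper's remark on the tangent spaces of $\definvset$ and $\defmultset$. With that structure, your argument-principle formula, using contours inside that neighborhood, gives smoothness directly, and no continuation lemma is needed.
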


\begin{proof}
The Fr{\"o}licher structures on $\DefC$, $\definvset$, and $\defmultset$ are generated by those curves $\gamma \in \curves{\mathcal{O}(S^1)}$ which take values in the respective subset of $\mathcal{O}(S^1)$.
Thus, given an initial curve $\gamma$ in $\definvset$, for fixed $t \in \R$, inversion yields a function $\gamma^{-1}(t, z)$, which is smooth in $(t, z)$ and analytic for $z \in \DefCnbhd{\phi}$ 
with real-analytic boundary behavior in particular for $z \in S^1$.
Since the inverse depends smoothly on $t$, we find that inversion is Fr-smooth.
Similarly, given initial curves $(\gamma, \eta)$ in $\defmultset$, the function $\gamma(t, \eta(t, \blank)) \in \DefC$ depends smoothly on $t \in \R$, showing that composition is Fr-smooth.

By~\cite[Theorem~8.4]{Kriegl-Michor:Convenient_setting_of_global_analysis}, the inductive limit topology on $\mathcal{O}(S^1)$ is regular.
Thus, for each $\varepsilon > 0$, there exists $n > 0$ such that the restriction of $\gamma$ to the bounded set $(-\varepsilon, \varepsilon)$ takes values in $\mathcal{O}(U_n)$ with $U_n$ as in Equation~\eqref{eq:circlenbhd}.
By further increasing $n$, we can assure that $\gamma(t, \blank)$ is univalent and without zeros on $U_n$ for each $t \in (-\varepsilon, \varepsilon)$.
Fixing such $n$, we can decrease $\varepsilon$ further such that $S^1 \subset \gamma(t, U_n)$ for each $t \in (-\varepsilon, \varepsilon)$, making $\gamma$ invertible in a neighborhood of $t = 0$.
Alternatively, we can decrease $\varepsilon$ such that $\eta(s, S^1) \subset U_n$ for each $s \in (-\varepsilon, \varepsilon)$, making $\gamma$ and $\eta$ composable in a neighborhood of $t = s = 0$.
\end{proof}

\subsection{Lie algebra and flow equations}

The tangent space $T_{\id} \DefC$ at the identity map $\id$ in the Fr{\"o}licher space setup~\eqref{eq:curvaceous_tangent} discussed in Appendix~\ref{section:frolicher} 
consists of equivalence classes $[\gamma]_\sim$ of curves $\gamma \in \curves{\DefC}$ rooted at $\gamma(0) = \id$.
Because by Proposition~\ref{prop:defc_local}, the curves rooted at $\id$ become invertible and composable for small enough time, for the Fr{\"o}licher spaces $\definvset$ and $\defmultset$ we have
\begin{align}
T_{\id} \definvset = T_{\id} \DefC, \qquad T_{\id} \defmultset = T_{\id} (\DefC \times \DefC).
\end{align}
Using these properties, we equip $T_{\id} \DefC$ with a Lie algebra structure using a construction, that was carried out by Laubinger for Fr{\"o}licher Lie groups~\cite{Laubinger:Lie_algebra_for_Frolicher_groups}.
The condition for such a construction is that
\begin{align}
\begin{aligned}
\Xi \colon T_{\id} \DefC &\longhookrightarrow T_0 T_{\id} \DefC, \\
[\gamma]_\sim &\longmapsto \big[t \mapsto t[\gamma]_\sim\big]_\sim ,
\end{aligned}
\label{eq:laubinger_condition}
\end{align}
is bijective.
Then, the Lie bracket of tangent vectors $[\gamma]_\sim, [\eta]_\sim \in T_{\id} \DefC$ is
\begin{align}
\big[[\gamma]_\sim, [\eta]_\sim \big] = \Xi^{-1}\Big(\big[s \mapsto [t \mapsto \gamma(s) \circ \eta(t) \circ \gamma^{-1}(s) \circ \eta^{-1}(t)]_\sim\big]_\sim\Big).
\label{eq:lie_bracket_defc}
\end{align}
The fact that $\DefC$ is an open submanifold of the convenient vector space $\mathcal{O}(S^1)$ implies that the Fr{\"o}licher structure on $T_{\id} \DefC$ is isomorphic to $\mathcal{O}(S^1)$, which yields bijectivity of $\Xi$.
In turn, we identify $\mathcal{O}(S^1)$ with $\VectC$ using Equation~\eqref{eq:vectc_os1}.

Given the Fr{\"o}licher structure on $\VectC$ defined by the identification with $\mathcal{O}(S^1)$ in Equation~\eqref{eq:vectc_os1}, a smooth curve $v \in \curves{\VectC}$ is a smoothly time-dependent real-analytic complex-valued vector field on $S^1$.
Similar to~\eqref{eq:vectc_os1}, we denote it by ${v = v(t, z) \partial_z}$, such that the coordinate function is smooth on $\R \times S^1$ and real-analytic in $z$ for each fixed~$t$.
Using analytic continuations of $v$ to neighborhoods of $S^1$ in $\C$, the left-trivializing flow equation is given by
\begin{align}
\dot{\Phi}_{v}(t, z) = v(t, \Phi_{v}(t, z)), \qquad \Phi_{v}(0, z) = z.
\label{eq:flow}
\end{align}
The solution $\Phi_v$ is the flow of $v$.
It exists on $(-\varepsilon, \varepsilon) \times S^1$ for some $\varepsilon > 0$, and is smooth in the first, and real-analytic in the second coordinate.
On the other hand, the right-trivializing flow equation of a time-dependent vector field $w \in \curves{\VectC}$ is given by
\begin{align}
\dot{\Phi}^w(t, z) = (\Phi^w)'(t, z) \; w(t, z), \qquad \Phi^{w}(0, z) = z.
\label{eq:flow_right}
\end{align}
In this case, the solution $\Phi^w$ exists for all time as a curve $\Phi^w \in \curves{\mathcal{O}(S^1)}$.
Given vector fields $v, w \in \curves{\VectC}$ such that $\Phi_v = \Phi^w$, they are related by the pullback $w = \Phi_v^* v$ as in Equation~\eqref{eq:pullback_vector_field}.
Conversely, given a curve $\gamma \in \curves{\mathcal{O}(S^1)}$ rooted at $\gamma(0) = \id$, the vector fields defined by
\begin{align}
v(t, z) = \dot{\gamma}(t, \gamma^{-1}(t, z)),
\qquad
w(t, z) = \frac{\dot{\gamma}(t, z)}{\gamma'(t, z)},
\label{eq:vect_of_curve}
\end{align}
satisfy $\gamma = \Phi_v = \Phi^w$.
Note that we have $v(0, z) = w(0, z)$, and thus, the associated tangent vectors agree, that is, $[\gamma]_\sim = [t \mapsto \Phi_v]_\sim = [t \mapsto \Phi^w]_\sim \in T_{\id} \DefC$.
Since $\Phi^w$ is defined for all time, we may use it for the following identification of Lie algebras:

\begin{proposition}
\label{prop:defc_algebra}
The map
\begin{align}
\begin{aligned}
\VectC &\longrightarrow T_{\id} \DefC \\
w &\longmapsto [t \mapsto \Phi^{w}]_\sim \, = \, [t \mapsto \Phi_{w}]_\sim 
\end{aligned}
\end{align}
is a Fr-smooth Lie algebra isomorphism.
\end{proposition}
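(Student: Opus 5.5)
We need to show three things about the stated map: it is a well-defined linear bijection, it is Fr-smooth, and it intertwines the bracket~\eqref{eq:lie_bracket_vector_fields} with the bracket~\eqref{eq:lie_bracket_defc}.

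\emph{Bijectivity and smoothness.} For a constant $w \in \VectC$, the right-trivializing flow $\Phi^w$ solving~\eqref{eq:flow_right} exists for all time and is a smooth curve in $\mathcal{O}(S^1)$ with $\Phi^w(0)=\id$. Since $\DefC$ is open, $\Phi^w(t)\in\DefC$ for $t$ near $0$, so $[t\mapsto\Phi^w]_\sim$ is a tangent vector at $\id$. Differentiating~\eqref{eq:flow_right} at $t=0$ gives $\dot\Phi^w(0,z)=w(z)$. Under the identification $T_{\id}\DefC\cong\mathcal{O}(S^1)$, which gives bijectivity of $\Xi$, the class $[\gamma]_\sim$ corresponds to $\dot\gamma(0)$. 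Hence the map is $w\mapsto w$ composed with~\eqref{eq:vectc_os1}, which is a linear isomorphism and Fr-smooth. The equality $[\Phi^w]_\sim=[\Phi_w]_\sim$ follows from~\eqref{eq:vect_of_curve}, since $v(0)=w(0)$.

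\emph{Bracket.} Fix $v,w\in\VectC$ and take $\gamma(s)=\Phi_v(s)$ and $\eta(t)=\Phi_w(t)$. By Proposition~\ref{prop:defc_local}, the commutator $\gamma(s)\circ\eta(t)\circ\gamma^{-1}(s)\circ\eta^{-1}(t)$ is defined for small $s,t$. Its definition only requires analytic continuation on a small annulus $U_n$, where all maps are close to the identity, so no ambiguity of continuation arises and associativity (Proposition~\ref{prop:defc_inv_comp}) applies. Work in $\mathcal{O}(U_n)$ and Taylor expand: $\Phi_v(s,z)=z+s v(z)+O(s^2)$ and $\Phi_w(t,z)=z+tw(z)+O(t^2)$, uniformly on a slightly smaller annulus. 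The standard computation gives
\begin{align}
\gamma(s)\circ\eta(t)\circ\gamma^{-1}(s)\circ\eta^{-1}(t)(z) = z + st\,\big(v'(z)w(z)-w'(z)v(z)\big) + O(s^2t,st^2).
\end{align}
The inner class in $t$ (derivative at $t=0$) is therefore $s\,(v'w-w'v)+O(s^2)$. The outer class in $s$, after applying $\Xi^{-1}$, yields $(w v'-v w')\partial_z=[v,w]$, matching~\eqref{eq:lie_bracket_vector_fields}. The sign conventions must be checked against the Witt relation $[\ell_n,\ell_m]=(n-m)\ell_{n+m}$ with $\ell_n=-z^{n+1}\partial_z$.

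\emph{Main obstacle.} The hardest step is justifying the mixed second-order expansion rigorously in the Fr\"olicher/convenient setting. One must ensure that the error terms are controlled uniformly in $z$ on a fixed annulus, so that the classes are computed in $\mathcal{O}(S^1)$ rather than pointwise. To do this, I would use regularity of the inductive limit, as in the proof of Proposition~\ref{prop:defc_local}, to place all curves in one $\mathcal{O}(U_n)$. There I would apply Cauchy estimates to bound derivatives of the holomorphic extensions. Then joint smoothness in $(s,t)$ of the commutator follows from Fr-smoothness of composition and inversion, and the mixed derivative $\partial_s\partial_t|_{0}$ can be computed by differentiating under the composition.
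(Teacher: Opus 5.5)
Your proposal is correct and is essentially the paper's argument. The paper's proof consists only of the observation that the mixed derivative $\partial_s\partial_t|_{s=t=0}$ of the commutator of flows equals $(wv'-vw')\,\partial_z$, i.e.\ the bracket~\eqref{eq:lie_bracket_vector_fields}, which is exactly your computation; using $\Phi_v,\Phi_w$ rather than $\Phi^v,\Phi^w$ changes nothing, since they coincide for time-independent fields and only first-order data enters, and your treatment of bijectivity via $T_{\id}\DefC\cong\mathcal{O}(S^1)$ and of uniform control on a fixed $U_n$ supplies details the paper leaves implicit.

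The Witt-relation sign check you defer is not needed for the statement, because the target bracket is~\eqref{eq:lie_bracket_vector_fields} itself. Carrying it out in fact shows that~\eqref{eq:lie_bracket_vector_fields} together with~\eqref{eq:witt_basis} gives $[\ell_n,\ell_m]=(m-n)\ell_{n+m}$; this is a convention mismatch in the paper, not a flaw in your argument.
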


\begin{proof}
This is an immediate consequence of the expression of the Lie bracket on $\VectC$ in terms of the right-trivializing flow given by
\begin{align}
[v, w] = \eval{\pdv{}{t}{s}}_{t=s=0}
\Phi^v(t, \blank)
\circ \Phi^w(s, \blank)
\circ (\Phi^v(t, \blank))^{-1}
\circ (\Phi^w(s, \blank))^{-1}.
\end{align}
\end{proof}

\begin{remark}
Note that the exponential map $v \mapsto \Phi^v(1, \blank)$ is not surjective just like for $\Diffpan$, by the same argument as in~\cite{Milnor:Remarks_on_infinite_dimensional_Lie_groups}.
However, by Equation~\eqref{eq:vect_of_curve}, any complex deformation is the $t = 1$ flow of some time-dependent vector field $v \in \curves{\VectC}$.
\end{remark}

\begin{remark}\label{rem:tangent_spaces}
The tangent spaces of $\definvset$ and $\defmultset$, respectively at $\phi \in \definvset$ and $(\phi, \psi) \in \defmultset$, may also be identified with spaces of vector fields.
In the case of $T_\phi \definvset$, there are two options.
On the one hand, we may identify it with vector fields $v \in \vect(\DefCnbhd{\phi})$.
In that case, the right-trivializing flow is holomorphic in $\DefCnbhd{\phi}$, and we find the taget vector $[t \mapsto \Phi^v(t, \blank) \circ \phi]_\sim \in T_\phi \definvset$.
On the other hand, a vector field $w \in \vect(\DefCnbhd{\phi^{-1}})$ yields a tangent vector $[t \mapsto \phi \circ \Phi^w(t, \blank)]_\sim \in T_\phi \definvset$, and it is related to $v$ by $\phi^* v = w$.
For $T_{(\phi, \psi)} \defmultset$, a pair of vector fields $v, w \in \vect(\DefCnbhd{\psi})$ defines right-trivializing flows such that $\phi$ is composable with $\Phi^v$, and $\Phi^w$ is composable with $\psi$. Moreover, $\phi \circ \Phi^v(t, \blank)$ is still composable with $\Phi^w(t, \blank) \circ \psi$, yielding a tangent vector $[t \mapsto (\phi \circ \Phi^v(t, \blank), \Phi^w(t, \blank) \circ \psi)]_\sim \in T_{(\phi, \psi)} \defmultset$.
\end{remark}

\subsection{Special cases and decompositions}
\label{section:subgroups}

In the following, we list various interesting types of complex deformations.
\begin{enumerate}
\item \emph{Rotations.}
The complex deformations $\rotation_\alpha(z) = e^{\ii \alpha} z$ for $\alpha \in \R$ form a one-dimensional subgroup of $\DefC$. It is isomorphic to $S^1$ and generated by the vector field $\ii \ell_0$.
\item \emph{Scaling transformations.}
The complex deformations,
\begin{align}
\scalinggroup = \setsuchthat{\scaling{\tau}(z) = e^{-2\pi \tau} z}{\tau \in \R},
\label{eq:def_shrinkgroup}
\end{align}
form a one-dimensional subgroup isomorphic to $\R$ generated by the vector field $\ell_0$.
\item \emph{M{\"o}bius transformations.}
A M{\"o}bius transformation $F \in \mob$ is a complex deformation if and only if it preserves the winding number of $S^1$ around $0$.
In particular, the inversion $\inversion$ defined in Equation~\eqref{eq:inversion} is not a complex deformation.
However, the conjugation by $\inversion$ defines an involution $\phi \mapsto \inversion \circ \phi \circ \inversion$ on $\DefC$.
\item \emph{Diffeomorphisms.}
The Fr\'echet--Lie group $\Diffpan$ of real-analytic diffeomorphisms of the unit circle is a subgroup of $\DefC$.
At the identity, it corresponds to the Lie subalgebra $\VectR$ of $\VectC$ comprising tangent vector fields (Equation~\eqref{eq:ellpar}).
\item \emph{Univalent functions.}
Any univalent function $F \colon \disk \to \C$ with real-analytic boundary behavior, and such that $0 \in F(\disk)$, is a complex deformation.
These are generated by vector fields $v \in \curves{\VectC}$, which are holomorphic on $\disk$ for each $t$.
By the Koebe $1/4$-theorem, we see that if $|F'(0)| \geq 4$, then such a univalent function is invertible as a complex deformation.
\item \emph{Contractions~\&~expansions.}
Complex deformations $\phi \in \DefC$ such that 
$\phi(S^1) \subset \disk$ (respectively $\phi(S^1) \subset \hat{\C} \setminus \cdisk$)
contract (expand) the unit circle in the complex plane.
They have the property that $\DefCnbhd{\phi}$ is a proper annulus.
They are generated by inward (outward) pointing vector fields, also coined ``Markovian'' vector fields in~\cite{BGKRV:Virasoro_structure_and_the_scattering_matrix_for_Liouville_CFT} 
due to their role in the Markov property of Liouville CFT.
\end{enumerate}
Some of these types of complex deformations are related by the following decomposition, which will be used in Section~\ref{section:list_cocycles} to define certain 2-cocycles on $\DefC$.

\begin{proposition}
\label{prop:decomp_def}
Any complex deformation $\phi \in \DiffC$ may be uniquely decomposed into a diffeomorphism $\cdiffeo{\phi} \in \Diffpan$ and 
a univalent function $\cunivalent{\phi} \in \DefC$, normalized by $\cunivalent{\phi}(0) = 0$ and $\cunivalent{\phi}'(0) > 0$, such that
\begin{align}
\phi =  \cunivalent{\phi} \circ \cdiffeo{\phi}.
\label{eq:decomp}
\end{align}
Moreover, the dependence of $\cunivalent{\phi}$ and $\cdiffeo{\phi}$ on $\phi$ is Fr-smooth.
\end{proposition}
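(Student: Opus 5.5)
The approach is through the Riemann mapping theorem applied to the Jordan domain bounded by $\phi(S^1)$. Since $\phi$ is injective and real-analytic with nonvanishing derivative (we should check that injectivity plus real-analyticity of a regular curve is the intended meaning; if $\phi'$ may vanish we note that an injective analytic map of the circle with winding number one that is a Jordan curve still bounds a Jordan domain), the image $\phi(S^1)$ is an analytic Jordan curve. Let $\Omega_\phi$ be its interior component. Because $0 \notin \phi(S^1)$ and the winding number around $0$ equals $+1$, we have $0 \in \Omega_\phi$. By the Riemann mapping theorem there is a unique conformal map $\cunivalent{\phi} \colon \disk \to \Omega_\phi$ with $\cunivalent{\phi}(0)=0$ and $\cunivalent{\phi}'(0)>0$. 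By Carathéodory's theorem it extends to a homeomorphism $\cdisk \to \overline{\Omega_\phi}$. Since the boundary is an analytic curve, the Schwarz reflection principle gives a univalent extension across $S^1$. Hence $\restrict{\cunivalent{\phi}}{S^1}$ is real-analytic, injective, nonvanishing and of winding number one, so it lies in $\DefC$, and it is invertible on a neighborhood of the boundary. We then set $\cdiffeo{\phi} = \cunivalent{\phi}^{-1} \circ \phi$ on $S^1$. This is a real-analytic bijection $S^1 \to S^1$, being the composition of the analytic map $\phi$ with the local analytic inverse of $\cunivalent{\phi}$ near $\phi(S^1)$. Orientation is preserved because both maps have winding number $+1$. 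Its inverse $\phi^{-1}\circ\cunivalent{\phi}$ is analytic provided $\phi$ is locally biholomorphic, so $\cdiffeo{\phi} \in \Diffpan$. The composition $\cunivalent{\phi}\circ\cdiffeo{\phi}$ is legitimate in the sense of Section~\ref{section:invertible_composable}, since $\DefCnbhd{\cdiffeo{\phi}} = S^1$ and $\cunivalent{\phi}$ is univalent near $S^1$; this gives \eqref{eq:decomp}.

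For uniqueness, suppose $\phi = F\circ d$ with $F$ univalent on $\disk$ (in the sense above) and normalized, and $d \in \Diffpan$. Then $F(S^1)=\phi(S^1)$, so $F(\disk)=\Omega_\phi$. The uniqueness part of the Riemann mapping theorem then forces $F = \cunivalent{\phi}$, and consequently $d = \cdiffeo{\phi}$.

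For Fr-smoothness, take a smooth curve $\gamma \in \curves{\DefC}$. We must show that $t \mapsto \cunivalent{\gamma(t)}$ and $t\mapsto \cdiffeo{\gamma(t)}$ are smooth curves in $\mathcal{O}(S^1)$. I expect this to be the main obstacle: the analytic continuation domain must be uniform in $t$, and the dependence of the Riemann map on a smoothly varying analytic boundary must be smooth. The plan is as follows.
\begin{enumerate}
\item Localize to $t$ in a compact interval and use the regularity of the inductive limit, as in the proof of Proposition~\ref{prop:defc_local}. This gives a fixed $U_n$ on which every $\gamma(t)$ is univalent. The reflection across $\gamma(t)(S^1)$, namely $w \mapsto \gamma(t)(1/\overline{\gamma(t)^{-1}(w)})$, is then defined on a $t$-uniform neighborhood. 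By the reflection construction, $\cunivalent{\gamma(t)}$ then extends univalently to a fixed annulus $U_m$.
\item Obtain smooth $t$-dependence from the variational theory of conformal maps with smoothly varying analytic boundary, via Hadamard/Loewner-type variational formulas. Concretely, write $\cunivalent{\gamma(t)} = \gamma(t)\circ h_t^{-1}$ for a welding-type problem. Alternatively, characterize $\cunivalent{\gamma(t)}$ by the condition that $\log(\cunivalent{\gamma(t)}(z)/z)$ is holomorphic on $\disk$, real at $0$, and has boundary values compatible with $\gamma(t)(S^1)$. This is a nonlinear Riemann--Hilbert problem whose linearization is invertible; its derivative is the Hilbert transform on $\mathcal{O}(S^1)$, which is a continuous operator. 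An implicit-function argument in the Banach spaces $\mathcal{O}(\overline{U_m})$ then gives smoothness in $t$.
\item Smoothness of $\cdiffeo{\gamma(t)}$ then follows from the Fr-smoothness of inversion and composition established in Proposition~\ref{prop:defc_local}.
\end{enumerate}
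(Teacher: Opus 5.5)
Your construction of the decomposition and your uniqueness argument match the paper's. Both take the normalized Riemann map $\cunivalent{\phi}$ of the domain bounded by $\phi(S^1)$, continue it across the analytic boundary, and set $\cdiffeo{\phi}=\cunivalent{\phi}^{-1}\circ\phi$; you fill in the Carath\'eodory, reflection and uniqueness details the paper leaves implicit.

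Sharpen your hedge about vanishing $\phi'$ rather than setting it aside. The map $\phi(e^{\ii\theta})=e^{\ii(\theta-\sin\theta)}$ is injective, real-analytic, of winding number one, with image $S^1$. So $\cunivalent{\phi}=\id$ and $\cdiffeo{\phi}=\phi$, which is not a diffeomorphism. The statement, and the paper's proof, which treats $\phi(S^1)$ as an analytic curve, therefore need $\phi'\neq 0$. That is what you end up assuming anyway.

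The real difference is the Fr-smoothness step.
\begin{itemize}
\item \emph{Paper's route.} It cites a theorem on smooth dependence of the Riemann map on a smoothly moving smooth boundary (Bell, Theorem~28.1), then invokes smoothness of composition. This is short and fits the paper's working description of $\curves{\mathcal{O}(S^1)}$ as jointly smooth functions that are analytic in $z$ for each fixed $t$. However, a $C^\infty(\cdisk)$-dependence result does not by itself put the maps in, and make them depend smoothly within, a fixed $\mathcal{O}(U_m)$ locally in $t$. That is what the inductive-limit structure actually asks for.
\item \emph{Your route.} The implicit-function treatment of the nonlinear Riemann--Hilbert problem on fixed annuli targets exactly that, at the cost of doing the analysis yourself.
\end{itemize}

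Three points need tightening.
\begin{enumerate}
\item The linearization is the Schwarz/Hilbert operator only at the identity. Either recentre at $t_0$ by writing $\cunivalent{\gamma(t)}=\cunivalent{\gamma(t_0)}\circ G_t$, where $G_t$ is the normalized Riemann map onto the preimage under $\cunivalent{\gamma(t_0)}$ of the domain bounded by $\gamma(t)(S^1)$ (a domain close to $\disk$). Or verify directly that the linearized boundary operator has index zero.
\item The $t$-uniform annulus in your Step~1 does not follow from reflection alone; it needs uniform boundary control of the Riemann maps, e.g.\ Rad\'o's theorem. It does come for free once the implicit function theorem is run near each $t_0$ and a compact interval is covered by finitely many such neighborhoods.
\item In your Step~3, as in the paper, $\cunivalent{\phi}^{-1}\circ\phi$ only uses the local inverse of $\cunivalent{\phi}$ near $\phi(S^1)$; in general $\cunivalent{\phi}\notin\definvset$. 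So smoothness of $\cdiffeo{\phi}$ comes from the holomorphic inverse function theorem with parameters, not from Proposition~\ref{prop:defc_local} verbatim.
\end{enumerate}
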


\begin{proof}
By the definition of a complex deformation, the real-analytic loop $\phi(S^1)$ surrounds $0$.
Let $U$ be the domain with positively oriented boundary $\phi(S^1)$ and $\cunivalent{\phi} \colon \disk \to U$ the Riemann uniformizing map uniquely determined by $\cunivalent{\phi}(0) = 0$ and $\cunivalent{\phi}'(0) > 0$.
Since the boundary of $U$ is real-analytic, $\cunivalent{\phi}$ is complex-analytic in a neighborhood of $\bar \disk$.
The composition $\cdiffeo{\phi} = \cunivalent{\phi}^{-1} \circ \phi$ is an real-analytic diffeomorphism of $S^1$.
We precompose it with $\cunivalent{\phi}$ to obtain the desired decomposition~\eqref{eq:decomp}.
Note that $\cunivalent{\phi}$ is a Riemann mapping of a domain with smoothly time-dependent smooth boundary, and thus $\cunivalent{\phi}$ itself is Fr-smooth with respect to a smooth deformation of the boundary; see e.g.,~\cite[Theorem~28.1]{Bell:Cauchy_transform_potential_theory_and_conformal_mapping}.
Fr-smoothness of $\cdiffeo{\phi}$ follows since composition is Fr-smooth by Proposition~\ref{prop:defc_local}.
\end{proof}

\section{Cohomology of complex deformations}
\label{section:cohomology}

In this section, we compute cohomology groups of the complex deformations of the unit circle, $G = \DefC$ as defined in Section~\ref{section:complex_deformations}.
More precisely, we consider the analogue of (Fr-smooth) group cohomology with respect to the partially defined composition operation on $G$ with coefficients in $\R$.
The main result of this section is Theorem~\ref{thm:cohomology}.

The cohomology may also be relative to finitely many subgroups $H_1, \ldots, H_N$ such as those listed in Section~\ref{section:subgroups}.
We are mainly interested in two subgroups:
The diffeomorphism group $H_1 = \Diffpan$, and the group $H_2 = \shrinkgroup$ of scaling transformations defined in~\eqref{eq:def_shrinkgroup}.
The relative and non-relative cohomology groups form exact sequences, which fit into the braided diagram in Figure~\ref{fig:diag_grp}.
In Section~\ref{section:computation_cohomology}, we shall compute the terms of this diagram up to the second order.
In addition to the braided diagram, our method involves an exact sequence relating the (relative) cohomology groups of $\DefC$ to the respective Lie algebra cohomology, and to characters of the fundamental groups.
A basis of Lie group and Lie algebra cocycles is presented beforehand, in Section~\ref{section:list_cocycles}.
In Section~\ref{section:cft}, we apply the results to our previous work~\cite{Maibach-Peltola:From_the_conformal_anomaly_to_the_Virasoro_algebra}.
Let us begin with the setup.

\subsection{Relative group and algebra cohomology}

For concreteness, we define the relative cohomology groups of $G = \DefC$ in analogy to the bar resolution of group cohomology~\cite{Weibel:An_introduction_to_homological_algebra}.
Let $\sgrps = (H_1, \ldots, H_N)$ be subgroups of $G$ such that $H_j \cap H_k = \{\id\}$ for $1 \leq j < k \leq N$.
The $n$-cochains on $G$ relative to $\sgrps$ with coefficients in $\R$ are the Fr-smooth functions (in the sense of Definition~\ref{def:fr} in Appendix~\ref{section:frolicher})
\begin{align}
C^n(G; \sgrps; \R) = \setsuchthat{\Cocycle \in \functions{\compn{G}{n}}}{\restrict{\Cocycle}{H^n_j} = 0 \textnormal{ for all } 1 \leq j \leq N} ,
\end{align}
where $\compn{G}{n}$ is the subset of tuples in $(g_1, \ldots, g_n) \in G^n$ such that products $g_{j} g_{j+1} \cdots g_{j+k}$ of any number of consecutive elements exist, $1 \leq j \leq n$ and $1 \leq k \leq n - j$.
In particular, the set $\compn{G}{2} = M$ comprises pairs of composable deformations.

The non-relative cochains are obtained as $C^n(G; \R) = C^n(G; \{\id\}; \R)$.
By forgetting a single subgroup $H_j$, leaving the rest $\underline{\hat{H}}$, we have a short exact sequence
\begin{align}\label{eq:grp_ses}
\begin{alignedat}{2}
\{0\} \; \longrightarrow \; C^n(G; \sgrps; \R) \; & \longhookrightarrow \; C^n(G; \underline{\hat{H}}; \R) && \; \longtwoheadrightarrow \; C^n(H_j; \R) \; \longrightarrow \; \{0\} ,
\end{alignedat}
\end{align} 
where the maps are defined by restriction of the cocycles.
As usual, the differential
$\Dgrp \Cocycle \in C^{n+1}(G; \sgrps; \R)$ of a cochain $\Cocycle \in C^n(G; \sgrps; \R)$ is defined by
\begin{align}\label{eq:dgrp}
\begin{split} 
(\Dgrp \Cocycle)(g_1, \ldots, g_{n+1})
= \; &
\Cocycle(g_2, \ldots, g_{n+1})
+ (-1)^{n+1} \Cocycle(g_1, \ldots, g_{n}) \\
\; &+ \sum_{j = 1}^n (-1)^j \,  \Cocycle(g_1, \ldots, g_{j - 1}, g_{j}g_{j+1}, g_{j + 1}, \ldots, g_{n+1}).
\end{split} 
\end{align} 
The relative group cohomology groups are then defined by
\begin{align}
Z^n(G; \sgrps; \R) = \; & \ker \Dgrp, \\
B^n(G; \sgrps; \R) = \; & \Dgrp C^{n-1}(G; \sgrps; \R), \\
H^n(G; \sgrps; \R) = \; & Z^n(G; \sgrps; \R) / B^n(G; \sgrps; \R).
\end{align}
The short exact sequence~\eqref{eq:grp_ses} leads to a long exact sequence of cohomology groups:
\begin{align}
\begin{aligned}
\cdots
\; &
\longrightarrow H^{n - 1}(H_j; \R)
\longrightarrow H^n(G; \sgrps; \R)
\longrightarrow H^n(G;\underline{\hat{H}} ; \R)
\\
\; &
\longrightarrow H^n(H_j; \R)
\longrightarrow H^{n + 1}(G;\sgrps ; \R)
\longrightarrow \cdots
\end{aligned}
\end{align}
with the transgression maps defined by the usual zig-zag lemma \cite[Section~24]{Munkres:Elements_of_algebraic_topology}.

Let $\mathfrak{g} = \VectC$, and denote by $\mathfrak{h}_1, \ldots, \mathfrak{h}_N \subset \mathfrak{g}$ the Lie algebras of the subgroups $H_1, \ldots, H_N$ of $G$.
Let $H^n(\mathfrak{g}; \salgs; \R)$ denote the relative Lie algebra cohomology~\cite{Chevalley-Eilenberg:Cohomology_theory_of_Lie_groups_and_Lie_algebras, Fuchs:Cohomology_of_infinite-dimensional_Lie_algebras, Borel-Wallach:Continuous_cohomology_discrete_subgroups_and_representations_of_reductive_groups}, and
$\Dalg \colon C^{n}(\mathfrak{g}; \salgs; \R) \to C^{n+1}(\mathfrak{g}; \salgs; \R)$ the differential. 
The derivative ${\DVE \colon H^2(G; \sgrps; \R)} \to H^2(\mathfrak{g}; \salgs; \R)$, also called \emph{van Est map}~\cite{Van_Est:Group_cohomology_and_Lie_algebra_cohomology_in_Lie_groups1, Van_Est:Group_cohomology_and_Lie_algebra_cohomology_in_Lie_groups2}, 
\begin{align}
(\DVE \Cocycle)(v, w)
= \frac{1}{2}
\eval{\pdv{s}}_{s = 0}
\eval{\pdv{t}}_{t = 0}
\bigg(
\Cocycle\big(\Phi_v(t, \blank), \Phi_w(s, \blank)\big)
- \Cocycle\big(\Phi_w(s, \blank), \Phi_v(t, \blank)\big)
\bigg),
\label{eq:def_dlie}
\end{align}
is obtained\footnote{We put a prefactor $\smash{\frac{1}{2}}$ in Equation~\eqref{eq:def_dlie} as in~\cite{Maibach-Peltola:From_the_conformal_anomaly_to_the_Virasoro_algebra}, 
and as opposed to~\cite[Proposition~3.14]{Khesin-Wendt:The_geometry_of_infinite-dimensional_groups}.} 
using the flow equation~\eqref{eq:flow} and the fact from Proposition~\ref{prop:defc_local} that the flows are always composable for small enough time.
Note that for coboundaries, we have
\begin{align}
	(\DVE \Dgrp f)(v, w) = -\frac{1}{2}
	\eval{\pdv{t}}_{t = 0}
	f\big(\Phi_{[v,w]}(t, \blank)\big),
	\qquad f \in C^1(G; \sgrps; \R).
	\label{eq:vE_grp_alg_dif}
\end{align}

In a finite-dimensional setting with sufficiently connected Lie groups, van Est maps such as~\eqref{eq:def_dlie} are often isomorphisms;
see~\cite[Theorem~22.1]{Chevalley-Eilenberg:Cohomology_theory_of_Lie_groups_and_Lie_algebras},~\cite{Van_Est:Group_cohomology_and_Lie_algebra_cohomology_in_Lie_groups1, Van_Est:Group_cohomology_and_Lie_algebra_cohomology_in_Lie_groups2}, 
and~\cite[Theorem~3.6.1]{Mostow:Cohomology_of_topological_groups_and_solvmanifolds}. 
For the complex deformations $G = \DefC$, we encounter two obstructions.
On the one hand, the fundamental group of $G$ is isomorphic to $\Z$ due to the winding property around $0$ in Definition~\ref{def:defc}.
On the other hand, $\DefC$ is infinite-dimensional and not quite a group, whence none of the standard results apply.
Despite these issues, we find that the relative Lie group and Lie algebra cohomologies can only differ by topological reasons.

In Proposition~\ref{prop:ses_hom_grp_alg} below, we prove a result specifically for $G = \DefC$, which is conceptually close to an exact sequence used by 
Neeb~\cite[Theorem~7.2]{Neeb:Abelian_extensions_of_infinite-dimensional_Lie_groups} 
(which holds in a quite general setting of infinite-dimensional Lie groups).
However, Neeb's setup does not fit our setting of Fr{\"o}licher structures, partially defined composition laws, and relative cohomology --- so we adapt his proof. 
While partially extending Neeb's result, our result is at the same time simplified, since our coefficients are in $\R$ regarded as a trivial $G$-module, as opposed to a possibly infinite-dimensional non-trivial module in Neeb's theorem.

The obstruction coming from the fundamental group $\pi_1(G, \id) \cong \Z$ is overcome by initially integrating Lie algebra coboundaries on the universal cover.
Since $G = \DefC$ is an open submanifold of $\mathcal{O}(S^1)$, the universal cover of $\DefC$ exists~\cite[Paragraph~27.14]{Kriegl-Michor:Convenient_setting_of_global_analysis}, and we denote it by $\universalcover{G}$.
In particular, we can use the path-construction of the universal cover~\cite{Fulton:Algebraic_topology}, 
where $\universalcover{G}$ is defined as the set of homotopy classes of (Fr-smooth) paths in $G$ rooted at the identity.
Then, $\universalcover{G}$ comes with a natural Fr{\"o}licher structure given by lifts of curves $\curves{G}$, and the projection $\pi \colon \universalcover{G} \to G$ is Fr-smooth.
This construction yields an embedding $\pi_1(G, \id) \subseteq \universalcover{G}$ of the (Fr-smooth) fundamental group as the group of equivalence classes of paths returning to the identity.

Note that in our infinite-dimensional setting, Lie's third theorem, which allows integration of Lie algebra cocycles to the group level, might not hold.
Instead, we initially assume in Proposition~\ref{prop:ses_hom_grp_alg} that we are able to do so, 
and then in the next Section~\ref{section:list_cocycles} 
we indeed find a cocycle on $G$ for each Lie algebra cohomology class by explicit construction.

\begin{proposition}
\label{prop:ses_hom_grp_alg}
Let $G = \DefC$, and $\sgrps, \mathfrak{g}, \salgs$ as above.
If the van Est map~\eqref{eq:def_dlie} is surjective, then the following sequence is exact:
\begin{align}\label{eq:vE_exact_seq}
\begin{alignedat}{2}
\{0\} \; \longrightarrow \; \Hom(\pi_1(G; \sgrps), \R) \; & \overset{\funtogrp}{\longhookrightarrow} \; H^2(G; \sgrps; \R) && \; \overset{\DVE}{\longtwoheadrightarrow} \; H^2(\mathfrak{g}; \salgs; \R) \; \longrightarrow \; \{0\} ,
\end{alignedat}
\end{align} 
where $\funtogrp(\gamma)$ is the cohomology class of the central extension defined by
\begin{align}
(\universalcover{G} \times \R) / \setsuchthat{(\alpha, \gamma(\alpha))}{\alpha \in \pi_1(G; \sgrps)} , \qquad \gamma \in \Hom(\pi_1(G; \sgrps), \R) .
\label{eq:ext_uc}
\end{align}
\end{proposition}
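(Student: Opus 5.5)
Surjectivity of $\DVE$ is assumed, so the sequence~\eqref{eq:vE_exact_seq} reduces to three claims. First, $\funtogrp$ is well defined, with image in $\ker \DVE$. Second, $\funtogrp$ is injective. Third, $\ker \DVE \subseteq \operatorname{im}\funtogrp$.

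For the first claim, I choose a Fr-smooth section $s\colon G \to \universalcover{G}$ of $\pi$, normalized so that $s(\id)=\id$ and $s(h)$ is the lift of $h\in H_j$ along a path inside $H_j$. Here $\pi_1(G;\sgrps)$ is read as the quotient of $\pi_1(G,\id)$ by the loops coming from the $H_j$, so that $\gamma$ vanishes on these. The extension~\eqref{eq:ext_uc} is then described by the cocycle
\begin{align}
\Cocycle_\gamma(g,h) = \gamma\big(s(g)s(h)s(gh)^{-1}\big), \qquad (g,h)\in \defmultset,
\end{align}
where the composition in $\universalcover{G}$ is the concatenation of the path from $\id$ to $g$ with $g$ composed with the path to $h$. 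This is only partially defined, so along these paths I use the composability near the identity from Proposition~\ref{prop:defc_local}. The cocycle identity follows from the local associativity in Proposition~\ref{prop:defc_inv_comp}. The section $s$ can be made locally constant in homotopy class near $\id$, so $\Cocycle_\gamma$ vanishes for small flows, and hence $\DVE\Cocycle_\gamma=0$. It vanishes on $H_j^2$ by the normalization of $s$. Changing $s$ changes $\Cocycle_\gamma$ by $\Dgrp$ of a $\Z$-valued, hence $\R$-valued, cochain, so the class is well defined.

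For injectivity, suppose $\Cocycle_\gamma=\Dgrp f$. On small flows, $\Cocycle_\gamma=0$, so $f$ is a homomorphism near the identity. Integrating along paths therefore shows that $f\circ\pi$ differs from the $\R$-valued homomorphism $\tilde f(\tilde g)=f(\pi\tilde g)$ on $\universalcover{G}$ exactly by $\gamma$ on $\pi_1$. Since $\universalcover{G}$ is connected and any element is a product of small flows (each complex deformation is the time-$1$ flow of a time-dependent field, by the remark after Proposition~\ref{prop:defc_algebra}), the function $f$ has derivative $0$ along every curve. Then~\eqref{eq:vE_grp_alg_dif} and the relation $[\VectC,\VectC]=\VectC$ give that $f$ is constant, and hence zero since $f(\id)=0$. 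So $\gamma=0$.

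For exactness in the middle, take $\Cocycle$ with $\DVE\Cocycle=\Dalg c$; subtracting $\Dgrp$ of a $1$-cochain with derivative $c$ reduces to $\DVE\Cocycle=0$. Following Neeb, I then pull $\Cocycle$ back to $\universalcover{G}$ and show it is a coboundary there. I would define $f(\tilde g)=\int_0^1 \partial_s \Cocycle(\eta(t),\eta(t)^{-1}\eta(t+s))|_{s=0}\,dt$ along a path $\eta$ representing $\tilde g$, and then verify $\Dgrp f=\pi^*\Cocycle$. Path independence within a homotopy class is where $\DVE\Cocycle=0$ enters: the relevant $1$-form is closed, by a Stokes argument on homotopy squares subdivided into small composable pieces. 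The difference of $f$ on loops is then a homomorphism $\gamma$ on $\pi_1$, vanishing on the $H_j$ loops by relativity, and $\Cocycle$ is cohomologous to $\Cocycle_\gamma$. I expect this last step to be the main obstacle: carrying out Stokes and path concatenation under a partially defined composition. Proposition~\ref{prop:defc_local} will be invoked repeatedly to subdivide paths finely enough that every product used is composable.
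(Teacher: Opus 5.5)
\textbf{The main gap is in your first step.} There is no Fr-smooth section $s\colon G\to\universalcover{G}$, and not even a continuous one. The rotation loop $\alpha\mapsto\rotation_\alpha$, $0\le\alpha\le2\pi$, represents a generator $\alpha_0$ of $\pi_1(G,\id)\cong\Z$. So $\alpha\mapsto s(\rotation_\alpha)$ would be a lift of it, and by uniqueness of lifts it would end at $s(\id)\alpha_0\neq s(\id)=s(\rotation_{2\pi})$.

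Any section must therefore jump somewhere. Then $\Cocycle_\gamma(g,h)=\gamma(s(g)s(h)s(gh)^{-1})$ is locally constant with jumps of size $\gamma(\alpha_0)$, so it is not an element of $C^2(G;\sgrps;\R)$ at all. This has three consequences:
\begin{itemize}
\item Your well-definedness claim fails, since the change-of-section cochain is also discontinuous.
\item Your injectivity argument becomes vacuous: if $\Cocycle_\gamma=\Dgrp f$ with $f$ Fr-smooth, then $\Cocycle_\gamma$ cannot jump, so you have in effect assumed $\gamma=0$.
\item The final identification of $\Cocycle$ with $\Cocycle_\gamma$ in your middle step takes place in the wrong complex.
\end{itemize}

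\textbf{How to repair it.} You need a smooth section of the $\R$-bundle $(\universalcover{G}\times\R)/\{(\alpha,\gamma(\alpha))\}\to G$, not of the covering. Concretely, choose a Fr-smooth $\tilde h\colon\universalcover{G}\to\R$ with $\tilde h(\tilde g\alpha)=\tilde h(\tilde g)+\gamma(\alpha)$ and $\tilde h=0$ on the lifts of the $H_j$, and set $\Cocycle_\gamma=\Dgrp\tilde h$. For example, take $\gamma(\alpha_0)/2\pi$ times the continuous lift of $\phi\mapsto\arg\phi(1)$ normalized to vanish at $\id$; it is zero along $\shrinkgroup$, and if $\Diffpan$ is among the $H_j$ then $\gamma=0$ anyway. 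This $\Cocycle_\gamma$ descends to $G$ because the deck shifts cancel. Its van Est image is the relative coboundary $\Dalg(\tfrac12\DdR_{\id}\tilde h)$, not $0$.

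Injectivity then works cleanly. If $\Dgrp\tilde h=\Dgrp f$, then $\tilde h-f\circ\pi$ is a Fr-smooth homomorphism on $\universalcover{G}$ restricting to $\gamma$ on $\pi_1$. Its differential at $\id$ is an $\R$-valued character of $\VectC$, hence zero because $[\VectC,\VectC]=\VectC$. So the homomorphism vanishes and $\gamma=0$. Perfectness is exactly the right input, and it is what the paper's one-line injectivity remark leaves implicit; the analogous map for $G=S^1$ is zero. As written, though, your sentence compares $f\circ\pi$ with $\tilde f=f\circ\pi$, that is, with itself.

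\textbf{Your middle step is the paper's argument written on $\universalcover{G}$ instead of on the central extension.} After your reduction, the $1$-form you integrate is, up to sign, the pullback of the paper's left-invariant $\alpha_p$ on $\universalcover{G}\times_{\hat\Cocycle}\R$ along $\tilde g\mapsto(\tilde g,0)$; rewrite $\Cocycle(g,g^{-1}h)$ with the cocycle identity to see this. Keeping invariance upstairs lets the paper:
\begin{itemize}
\item get closedness from $\DdR\alpha_p=\alpha_{\Dalg p}=0$, instead of a Stokes argument on subdivided squares;
\item get the homomorphism property from $\DdR Q=0$;
\item absorb the Lie algebra coboundary into $p(v,a)=a-F(v)$.
\end{itemize}
Your preliminary reduction instead needs a \emph{relative} $1$-cochain, vanishing on each whole $H_j$, with prescribed differential at $\id$, and you still have to construct it. The linear extension $\phi\mapsto c(\phi-\id)$ does not vanish on $\Diffpan$ in general. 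The paper also concludes at the level of extensions, exhibiting $G\times_{\Cocycle}\R$ as a quotient of the form~\eqref{eq:ext_uc}, so it never needs a cocycle representative of $\funtogrp(\gamma)$.

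Finally, $\eta(t)^{-1}\eta(t+s)$ is not available in general: $\eta(t)$ need not be invertible, and Proposition~\ref{prop:defc_local} only helps near $\id$. Replace it by the right-trivialized increment $\Phi^{w}(s,\blank)$ with $w=\dot\eta(t,\blank)/\eta'(t,\blank)$ as in~\eqref{eq:vect_of_curve}. This is composable with $\eta(t)$ for small $s$, and it is how the paper makes sense of the pullback.
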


\begin{proof}
The central extension of any subroup $H_j$, which is the quotient of $\universalcover{H_j} \times \R$ by $\setsuchthat{(\alpha, \gamma(\alpha))}{\alpha \in \pi_1(G; \sgrps)}$ is the trivial central extension $H_j \times \R$, 
so $\funtogrp(\cdot)$ is indeed a relative cohomology class.
As the Lie algebra cocycle of a central extension of $G$ only depends on the derivatives of Fr-smooth curves at the identity, it only depends on $\funtogrp(\gamma)$ close to the identity, where it is coboundary. 
Thus, $\DVE(\funtogrp(\gamma)) = 0$. Observe also that $\funtogrp(\gamma) = 0$ if and only if $\gamma = 0$ 
(i.e., the factor $\R$ is unaffected by the relation in~\eqref{eq:ext_uc}), so by linearity, $\funtogrp$ is an injection.
This proves the first half of the exactness of the sequence in~\eqref{eq:vE_exact_seq}.

Next, we integrate a Lie algebra coboundary to the universal cover of $G$ with the goal of constructing the corresponding function $\gamma \in \Hom(\pi_1(G; \sgrps), \R)$.
Let $\Cocycle \in H^2(G; \sgrps; \R)$ represent any group-level cocycle such that the Lie algebra cocycle $\DVE \Cocycle$ is a coboundary, that is, $\DVE \Cocycle = \Dalg F$ 
for some dual vector\footnote{In the case where a Fr{\"o}licher space $X$ also comes with an $\R$-vector space structure, we denote its Fr-smooth dual by $X^\vee = \setsuchthat{F \in \functions{X}}{\textnormal{$F$ is linear}}$.} 
$F \in \mathfrak{g}^\vee$ vanishing on $\mathfrak{h}_1^\vee, \ldots, \mathfrak{h}_N^\vee$:
\begin{align}
(\DVE \Cocycle)(v, w) 
= v \, F(w) - w \, F(v) - F([v,w]) = - F([v,w]) , \qquad v, w \in \mathfrak{g} .
\end{align}
(Here, we have $v \, F(w) = w \, F(v) = 0$ because we consider cohomology with coefficients in $\R$ regarded as a trivial $\mathfrak{g}$-module.)
Consider the Lie algebra central extension $\mathfrak{g} \times_{\DVE\Cocycle} \R$ as an exact sequence split by $F$,
\begin{align}
\begin{tikzcd}[ampersand replacement=\&]
{\{0\}} \& \R \& {\mathfrak{g} \times_{\DVE\Cocycle} \R} \& {\mathfrak{g}} \& {\{0\}} ,
\arrow[from=1-1, to=1-2]
\arrow[from=1-2, to=1-3]
\arrow["p", bend left, from=1-3, to=1-2, shift left = 2, bend left, start anchor = west, end anchor = east]
\arrow[from=1-3, to=1-4]
\arrow["{\! (v, F(v)) \; \mapsfrom \; v}", from=1-4, to=1-3, shift left = 2, bend left, start anchor = west, end anchor = east]
\arrow[from=1-4, to=1-5]
\end{tikzcd}
\end{align}
where $p \colon (v, a) \mapsto a - F(v)$ is a $\R$-linear map, $p \in (\mathfrak{g} \times_{\DVE\Cocycle} \R)^\vee$,  
such that its kernel $\ker p = \setsuchthat{(v, a)}{F(v) = a}$ is the graph of $F$.
This $p \in (\mathfrak{g} \times_{\DVE\Cocycle} \R)^\vee$ defines a Lie algebra $1$-cochain on the central extension.
Taking into account that $F$ and thus also $p$ vanish on the subalgebras $\mathfrak{h}_j \times \{0\}$, it actually is a relative $1$-cochain $p \in C^1(\mathfrak{g} \times_{\DVE\Cocycle} \R; \salgs \times \{0\}; \R)$ on the Lie algebra central extension.
Applying the Lie algebra differential $\Dalg$ to $p$, 
\begin{align}
\begin{aligned}
(\Dalg p)((v, a), (w, b)) 
= \; & - p([(v, a), (w, b)]) \\
= \; & - p \big( [v, w], (\DVE \Cocycle)(v, w) \big)
= 0 , \qquad v, w \in \mathfrak{g}, \, a,b \in \R .
\end{aligned}
\end{align}
Thus, the map $p$ is actually a relative $1$-cocycle: $p \in Z^1(\mathfrak{g} \times_{\DVE \Cocycle} \: \R; \salgs \times \{0\}; \R)$.

Next, the universal cover of the group central extension $G \times_{\Cocycle} \R$ may be identified as
\begin{align}
\universalcover{G \times_{\Cocycle} \R} = \universalcover{G} \times_{\hat \Cocycle} \R,
\end{align}
where the lifted cocycle $\hat \Cocycle$ is defined by pullback, that is, $\hat \Cocycle(\hat \phi, \hat \psi) = \Cocycle(\phi, \psi)$ if $\hat \phi$ and $\hat \psi$ are lifts of $\phi, \psi \in G$ respectively.
We now proceed to define an invariant differential $1$-form $\alpha_p$ of $p$ on this universal cover
by evaluating the pullback on the cotangent vector $p \in T^{\id} (\universalcover{G} \times_{\hat{\Cocycle}} \R)$ as in Equation~\eqref{eq:eval_dual} in Appendix~\ref{section:frolicher}:
\begin{align}
\alpha_p(v) = \evaluation \big(p, (\hat{\phi}, \lambda)^*([\gamma]_\sim) \big), \qquad [\gamma]_\sim \in T_{(\hat{\phi}, \lambda)} (\universalcover{G} \times_{\hat \Cocycle} \R).
\end{align}
However, the definition of the pullback requires some more explanation since $(\hat{\phi}, \lambda)$ may not be invertible, and even if it is, the tangent vector may not have a representative curve $\gamma(t)$ rooted at $(\hat{\phi}, \lambda)$ which stays composable with the inverse for any small time $t \neq 0$.
Yet, at the level of $G$, any tangent curve $\gamma$ rooted at $\phi$ defines a vector field $v = \dot{\gamma}(t, \gamma^{-1}(t, z))$ on $\phi(S^1)$ as in Equation~\eqref{eq:vect_of_curve}.
The pullback $\phi^* v$ is well-defined and integrates to a curve rooted at $\id \in G$.
By lifting it to the universal cover of the central extension, this construction shows that the pullback $(\hat{\phi}, \lambda)^*([\gamma]_\sim)$ is well-defined.

Since the Lie algebra cohomology differential $\Dalg p$ and 
the exterior derivative $\DdR \alpha_p$ are defined by the invariant formula~\eqref{eq:invariant_formula}, 
we have $\DdR \alpha_p = \alpha_{\Dalg p}$. Moreover, since $\Dalg p = 0$, we conclude that the form is closed, that is, $\DdR \alpha_p = \alpha_{\Dalg p} = 0$. 
Since $\universalcover{G} \times_{\hat \Cocycle} \R$ is simply connected, 
we can apply the Poincar\'e lemma~\cite[Lemma~33.20]{Kriegl-Michor:Convenient_setting_of_global_analysis} to find a unique function $\varphi \in \functions{\universalcover{G} \times_{\hat \Cocycle} \R}$ such that
\begin{align}
\DdR \varphi = \alpha_p 
\qquad \textnormal{and} \qquad
\varphi(\id) = 0.
\end{align}
Since the invariant differential $1$-form $\alpha_p$ restricts to the trivial $1$-form on the subgroups $\universalcover{H_j} \times \{0\}$, the function $\varphi$ is trivial on these subgroups as well.
We claim that the function $\varphi$ is a $1$-cocycle on $\universalcover{G} \times_{\hat \Cocycle} \R$ relative to the subgroups, that is, a homomorphism
\begin{align}
\varphi\big((\hat \phi, \lambda) \star (\hat \psi, \mu)\big)
= \varphi (\hat \phi \circ \hat \psi, \lambda + \mu + \Cocycle(\phi, \psi)
= \varphi (\hat \phi, \lambda) + \varphi (\hat \psi, \mu),
\end{align}
where $(\hat \phi, \lambda), (\hat \psi, \mu) \in \universalcover{G} \times_{\hat \Cocycle} \R$ and 
``$\blank \star \blank$'' denotes the multiplication in the central extension of the universal cover. 
The method to prove this claim is to consider a function
\begin{align} 
\begin{aligned}
Q \colon \universalcover{G} \times_{\hat \Cocycle} \R &\longrightarrow \R \\
(\hat \psi, \mu)
&\longmapsto \varphi\big((\hat \phi, \lambda) \star (\hat \psi, \mu)\big) - \varphi (\hat \psi, \mu) - \varphi (\hat \phi, \lambda).
\end{aligned}
\end{align}
for fixed $(\hat \phi, \lambda) \in \universalcover{G} \times_{\hat \Cocycle} \R$.
By invariance of $\alpha_p$, the differential of $Q$ vanishes: 
\begin{align}
\DdR Q = \DdR\Big(\varphi ((\hat \phi, \lambda) \star \blank) - \varphi(\blank) - \varphi(\hat \phi, \lambda)\Big)
= \big((\hat \phi, \lambda) \star \blank\big)^* \DdR \varphi - \DdR \varphi = 0.
\end{align}
Hence, $Q$ is constant, and moreover, the normalization $\varphi(\id) = 0$ implies that $Q \equiv 0$.

Note that $\lambda \mapsto (\id, \lambda) \in \universalcover{G} \times_{\hat{\Cocycle}} \R$ for $\lambda \in \R$ is a linear map, that is, a homomorphism. Consider the derivative
\begin{align}
\partial_\lambda\big|_{\lambda = 0} \varphi (\id, \lambda) = \DdR_{(\id, 0)} \varphi (0, 1) = (\alpha_p)_{(\id, 0)}(0, 1) = p(0, 1) = 1 - F(0) = 1.
\end{align}
Thus, we have $\varphi (\id, \lambda ) = \lambda$, which splits the central extension of the universal cover,
\begin{align}
\label{eq:uc_split}
\begin{tikzcd}[ampersand replacement=\&]
{\{0\}} \& \R \& {\universalcover{G} \times_{\hat \Cocycle} \R} \& {\universalcover{G}} \& {\{0\}} ,
\arrow[from=1-1, to=1-2]
\arrow[from=1-2, to=1-3]
\arrow["\varphi", bend left, from=1-3, to=1-2, shift left = 2, bend left, start anchor = west, end anchor = east]
\arrow[from=1-3, to=1-4]
\arrow["{\sigma}", from=1-4, to=1-3, shift left = 2, bend left, start anchor = west, end anchor = east]
\arrow[from=1-4, to=1-5]
\end{tikzcd}
\end{align}
where the section defined by $\sigma(\hat{\phi}) = (\id, -\varphi((\hat \phi, 0))) \star (\hat \phi, 0)$ is a homomorphism:
\begin{align}
	\sigma(\hat{\phi} \circ \hat{\psi}) =
	\underbrace{(\id, - \varphi(\hat{\phi}, 0))
	\star (\hat{\phi}, 0)}_{\smash{=\;\sigma(\hat{\phi})}}
	\star \underbrace{(\id, - \varphi(\hat{\psi}, 0))
	\star (\hat{\psi}, 0)}_{\smash{=\;\sigma(\hat{\psi})}}
	\star \underbrace{(\id, -\hat{\Omega}(\hat{\phi}, \hat{\psi}))
	\star (\id, \varphi(\id, \Omega(\phi, \psi)))}_{\smash{=\;0}}
\end{align}

The split central extension~\eqref{eq:uc_split} is isomorphic to the trivial one $\universalcover{G} \times \R$ with the isomorphism 
$\Psi(\hat{\phi}, \lambda) = \big(\hat{\phi}, \lambda - \varphi(\hat{\phi}, 0)\big)$, yielding the following morphisms of exact sequences:
\begin{align}
\begin{tikzcd}[ampersand replacement=\&]
{\{0\}} \& \R \& {G \times_{\Cocycle} \R} \& G \& {\{0\}} \\
{\{0\}} \& \R \& {\universalcover{G} \times_{\hat{\Cocycle}} \R} \& {\universalcover{G}} \& {\{0\}} \\
{\{0\}} \& \R \& {\universalcover{G} \times \R} \& {\universalcover{G}} \& {\{0\}}
\arrow[from=1-1, to=1-2]
\arrow[from=1-2, to=1-3]
\arrow[from=1-3, to=1-4]
\arrow[from=1-4, to=1-5]
\arrow[from=2-1, to=2-2]
\arrow[equals, from=2-2, to=1-2]
\arrow[from=2-2, to=2-3]
\arrow["P", from=2-3, to=1-3]
\arrow[from=2-3, to=2-4]
\arrow["\pi", from=2-4, to=1-4]
\arrow[from=2-4, to=2-5]
\arrow[from=3-1, to=3-2]
\arrow[equals, from=3-2, to=2-2]
\arrow[from=3-2, to=3-3]
\arrow["\Psi", from=3-3, to=2-3]
\arrow[from=3-3, to=3-4]
\arrow[equals, from=3-4, to=2-4]
\arrow[from=3-4, to=3-5]
\end{tikzcd}
\end{align}
The kernel of the composition of $\Psi$ with the projection $P \colon \universalcover{G} \times_{\hat{\Cocycle}} \R \to G \times_{\Cocycle} \R$ is 
\begin{align}
\ker(P \circ \Psi) = \setsuchthat{(\hat \phi, \lambda)}{\pi(\hat{\phi}) = \id \textnormal{ and } \lambda = \varphi(\hat{\phi}, 0)} .
\end{align}
Note that this is the graph of the homomorphism $\varphi$ restricted to $\ker \pi \times \{0\}$, which may be identified with the fundamental group $\pi_1(G) \subset \universalcover{G}$ as a discrete subgroup of $\universalcover{G} \times \R$, showing that the central extension $G \times_\Cocycle \R$ is indeed a quotient of the form~\eqref{eq:ext_uc}.
\end{proof}

\subsection{Cocycles on complex deformations}
\label{section:list_cocycles}

\begin{figure}[t]
\begin{equation*}
\hspace*{-7mm}
\begin{tikzcd}[
column sep = {8em, between origins},
row sep = {8em, between origins},
ampersand replacement=\&
]
\begin{array}{c} \Hom(\pi_0(\shrinkgroup), \R) \\ = \{0\} \end{array} \&\& \begin{array}{c} \Hom(\pi_1(\DefC; \shrinkgroup), \R) \\ \cong \R \end{array} \&\& \begin{array}{c} \Hom(\pi_1(\Diffpan), \R) \\ \cong \R \end{array} \\
\& \begin{array}{c} \Hom(\pi_1(\DefC; \Diffpan, \shrinkgroup); \R) \\ = \{0\} \end{array} \&\& \begin{array}{c} \Hom(\pi_1(\DefC), \R) \\ \cong \R \end{array} \\
\begin{array}{c} \Hom(\pi_0(\Diffpan), \R) \\ = \{0\} \end{array} \&\& \begin{array}{c} \Hom(\pi_1(\DefC; \Diffpan), \R) \\ = \{0\} \end{array} \&\& \begin{array}{c} \Hom(\pi_1(\shrinkgroup), \R) \\ = \{0\} \end{array}
\arrow[densely dashed, bend left, from=1-1, to=1-3]
\arrow[densely dashed, from=1-3, to=2-4]
\arrow[densely dashed, from=2-4, to=3-5]
\arrow[from=1-1, to=2-2]
\arrow[from=2-2, to=3-3]
\arrow[bend right, from=3-3, to=3-5]
\arrow[densely dashdotdotted, from=2-4, to=1-5]
\arrow[densely dashdotdotted, bend right, from=3-1, to=3-3]
\arrow[densely dashdotdotted, from=3-3, to=2-4]
\arrow[densely dotted, bend left, from=1-3, to=1-5]
\arrow[densely dotted, from=2-2, to=1-3]
\arrow[densely dotted, from=3-1, to=2-2]
\end{tikzcd}
\end{equation*}
\caption{Characters of the relative homotopy groups of complex deformations and the subgroups of diffeomorphisms and scaling transformations.}
\label{fig:diag_hom}
\end{figure}
\begin{figure}[t]
\begin{equation*}
\hspace*{-7mm}
\begin{tikzcd}[
column sep = {8em, between origins},
row sep = {8em, between origins},
ampersand replacement=\&
]
\begin{array}{c} H^1(\shrinkgroup; \R) \\ = \R \Im \rotcocycle \end{array} \&\& \begin{array}{c} H^2(\DefC; \shrinkgroup; \R) \\ = \R \Re \bott \oplus \R \Im \bott \\ \oplus \R \Re \rotcocycle \oplus \R \Im \rotcocycle \end{array} \&\& \begin{array}{c} H^2(\Diffpan; \R) \\ = \R \Re \bott \oplus \R \Re \rotcocycle \end{array} \\
\& \begin{array}{c} H^2(\DefC; \Diffpan, \shrinkgroup; \R) \\ = \R \Im \bott \oplus \R \Im \rotcocycle \end{array} \&\& \begin{array}{c} H^2(\DefC; \R) \\ = \R \Re \bott \oplus \R \Im \bott \\ \oplus \R \Re \rotcocycle \end{array} \\
\begin{array}{c} H^1(\Diffpan; \R) \\ = \{0\} \end{array} \&\& \begin{array}{c} H^2(\DefC; \Diffpan; \R) \\ = \R \Im \bott \end{array} \&\& \begin{array}{c} H^2(\shrinkgroup; \R) \\ = \{0\} \end{array}
\arrow[densely dashed, bend left, from=1-1, to=1-3]
\arrow[densely dashed, from=1-3, to=2-4]
\arrow[densely dashed, from=2-4, to=3-5]
\arrow[from=1-1, to=2-2]
\arrow[from=2-2, to=3-3]
\arrow[bend right, from=3-3, to=3-5]
\arrow[densely dashdotdotted, from=2-4, to=1-5]
\arrow[densely dashdotdotted, bend right, from=3-1, to=3-3]
\arrow[densely dashdotdotted, from=3-3, to=2-4]
\arrow[densely dotted, bend left, from=1-3, to=1-5]
\arrow[densely dotted, from=2-2, to=1-3]
\arrow[densely dotted, from=3-1, to=2-2]
\end{tikzcd}
\end{equation*}
\caption{Relative Lie group cohomology of complex deformations and the subgroups of diffeomorphisms and scaling transformations.}
\label{fig:diag_grp}
\end{figure}
\begin{figure}[t]
\begin{equation*}
\hspace*{-7mm}
\begin{tikzcd}[
column sep = {8em, between origins},
row sep = {8em, between origins},
ampersand replacement=\&
]
\begin{array}{c} H^1(\shrinkalg; \R) \\ = \R \Im \rotnumberalg \end{array} \&\& \begin{array}{c} H^2(\VectC; \shrinkalg; \R) \\ = \R \Re \gelfandfuks \oplus \R \Im \gelfandfuks \\ \oplus \R \Im \rotcocyclealg \end{array} \&\& \begin{array}{c} H^2(\VectR; \R) \\ = \R \Re \gelfandfuks \end{array} \\
\& \begin{array}{c} H^2(\VectC; \VectR, \shrinkalg; \R) \\ = \R \Im \gelfandfuks \oplus \R \Im \rotcocyclealg \end{array} \&\& \begin{array}{c} H^2(\VectC; \R) \\ = \R \Re \gelfandfuks \oplus \R \Im \gelfandfuks \end{array} \\
\begin{array}{c} H^1(\VectR; \R) \\ = \{0\} \end{array} \&\& \begin{array}{c} H^2(\VectC; \VectR; \R) \\ = \R \Im \gelfandfuks \end{array} \&\& \begin{array}{c} H^2(\shrinkalg; \R) \\ = \{0\} \end{array}
\arrow[densely dashed, bend left, from=1-1, to=1-3]
\arrow[densely dashed, from=1-3, to=2-4]
\arrow[densely dashed, from=2-4, to=3-5]
\arrow[from=1-1, to=2-2]
\arrow[from=2-2, to=3-3]
\arrow[bend right, from=3-3, to=3-5]
\arrow[densely dashdotdotted, from=2-4, to=1-5]
\arrow[densely dashdotdotted, bend right, from=3-1, to=3-3]
\arrow[densely dashdotdotted, from=3-3, to=2-4]
\arrow[densely dotted, bend left, from=1-3, to=1-5]
\arrow[densely dotted, from=2-2, to=1-3]
\arrow[densely dotted, from=3-1, to=2-2]
\end{tikzcd}
\end{equation*}
\caption{Relative Lie algebra cohomology of complex deformations and the subgroups of diffeomorphisms and scaling transformations.}
\label{fig:diag_alg}
\end{figure}

In this section, we introduce the various cocycles on $\DefC$ and subgroups that appear in the cohomologies in Figures~\ref{fig:diag_grp}~\&~\ref{fig:diag_alg}.
These will be computed in the next section.
All the cocycles are complex-valued at first, and then we take their real and imaginary parts.

The Bott--Thurston cocycle is originally defined as a real-valued $2$-cocycle on the diffeomorphism group of $S^1$, see~\cite{Bott:On_the_characteristic_classes_of_groups_of_diffeomorphisms} 
and~\cite[Section~4.5]{Guieu-Roger:Virasoro_book}, 
and it defines a central extension of $\Diffpan$ by $\R$ known as the \emph{Bott--Virasoro group}~\cite{Khesin-Wendt:The_geometry_of_infinite-dimensional_groups, AST:Berezin_quantization_conformal_welding_and_the_Bott-Virasoro_group}.
The Fr-smooth extension to complex deformations is defined by the usual formula
\begin{align}
\bott(\phi_1, \phi_2) =
\frac{1}{24 \pi} \int_{S^1} \log((\phi_1 \circ \phi_2)')(z) \: \dd \log(\phi_2'(z)), \qquad (\phi_1, \phi_2) \in \defmultset.
\label{eq:bott_thurston}
\end{align}
The associated Lie algebra cocycle as defined by Equation~\eqref{eq:def_dlie} reads
\begin{align}
\begin{aligned}
(\DVE \bott)(v, w)
= \; &  \frac{1}{2}
\eval{\pdv{s}}_{s = 0}
\eval{\pdv{t}}_{t = 0}
\Bigg(
\bott(\Phi_v(t, \blank), \Phi_w(s, \blank))
- \bott(\Phi_w(s, \blank), \Phi_v(t, \blank))
\Bigg)
\\ 
= \; & 
\frac{1}{24 \pi} \int_{S^1} v'(z) \; w''(z) \, \dd z , \qquad v,w \in \VectC ,
\end{aligned}
\end{align}
and it is expressed in terms of the generators~\eqref{eq:witt_basis} as
\begin{align}\label{eq:lie_bott}
(\DVE \bott)(\ell_n, \ell_m)
= \; & \frac{\ii}{12} (n^3 - n) \delta_{n + m}, \qquad n, m \in \Z.
\end{align}
Up to a coboundary given by the term in Equation~\eqref{eq:lie_bott} which is linear in $n$, this agrees with the Gel'fand--Fuks cocycle~\cite{Gelfand-Fuchs:Cohomologies_of_the_Lie_algebra_of_vector_fields_on_the_circle}.
We express the latter in the $\theta$-coordinate of Equation~\eqref{eq:theta_coords}, where $v = \thetacoords{v}(\theta) \, \partial_\theta$ and $w = \thetacoords{w}(\theta) \, \partial_\theta$.
For $\ell_n = \thetacoords{\ell}_n(\theta) \, \partial_\theta$, we find
\begin{align}
\thetacoords{\ell}_n(\theta) = \ii e^{\ii n \theta}, \qquad
\thetacoords{\ell}_n'(\theta) = - n e^{\ii n \theta}, \qquad
\thetacoords{\ell}_n''(\theta) = - \ii n^2 e^{\ii n \theta}, \qquad
n \in \Z.
\end{align}
In these coordinates, the Gel'fand--Fuks cocycle is defined by either of
\begin{align}
\gelfandfuks(v, w)
=
\frac{1}{24\pi}
\int_0^{2\pi}
\thetacoords{v}'(\theta)
\thetacoords{w}''(\theta)
\; \dd \theta,
\qquad
\gelfandfuks(\ell_n, \ell_m)
=
\frac{\ii}{12} n^3 \delta_{n + m} .
\label{eq:gelfandfuks}
\end{align}
The cohomologous cocycles $\DVE \bott$ and $\gelfandfuks$ correspond to the unique central extension of $\Witt$ by $\C$ known as the \emph{Virasoro algebra}~\cite{Virasoro:Subsidiary_conditions_and_ghosts_in_dual-resonance_models, Gelfand-Fuchs:Cohomologies_of_the_Lie_algebra_of_vector_fields_on_the_circle}.

The second complex-valued cocycle that we will consider is precisely the Lie algebra coboundary mentioned above.
On the group level, we define it as the differential of a complex-valued function $\rotCR$ on $\DefC$, utilizing the decomposition in 
Proposition~\ref{prop:decomp_def} of a complex deformation $\phi$ into a diffeomorphism $\cdiffeo{\phi}$ and a univalent function $\cunivalent{\phi}$.
The argument of $\rotCR$ is the rotation number of the diffeomorphism, whereas the absolute value of $\rotCR$ is the conformal radius of the univalent function.
We first give the usual definitions and then explain how they apply to complex deformations.

The (Poincar\'e) rotation number of a diffeomorphism describes the average position of a point in $S^1$ under repeated application of the diffeomorphism~\cite[Chapter~11]{Katok-Hasselblatt:Introduction_to_the_modern_theory_of_dynamical_systems}, 
taken using a lift of $\phi \in \Diffpan$ to a $2\pi$-periodic diffeomophism $\hat{\phi} \colon \R \to \R$, 
such that $\phi(e^{\ii \theta}) = e^{\ii \hat{\phi}(\theta)}$ for $\theta \in \R$.
The rotation number does not depend on the choice of lift.
More precisely, we first define the translation number~\cite[Proposition~11.1.1]{Katok-Hasselblatt:Introduction_to_the_modern_theory_of_dynamical_systems} as 
\begin{align}
\transnumber(\hat{\phi}) = \lim_{n \to \infty} \frac{1}{n} \hat{\phi}^{\circ n}(0) \, \in \, \R, 
\label{eq:transnumber}
\end{align} 
where the power ``$\circ n$'' stands for $n$-fold composition. 
The rotation number is defined as 
\begin{align}
\rotnumber(\phi) = \arg e^{\ii \transnumber(\hat{\phi})} \, \in \, [0,2\pi), \qquad
\phi \in \Diffpan
\end{align}
(cf.~\cite[Definition~11.1.2]{Katok-Hasselblatt:Introduction_to_the_modern_theory_of_dynamical_systems}).  
In particular, a rotation $\rotation_\alpha$, as listed in Section~\ref{section:subgroups}, has rotation number $\rotnumber(\rotation_\alpha) = \alpha$, and if $\phi \in \Diffpan$ has a fixed point, then $\rotnumber(\phi) = 0$.
To extend the rotation number to complex deformations, we define
\begin{align}
\rotnumber(\phi) = \rotnumber(\cdiffeo{\phi}), \qquad \phi \in \DefC.
\end{align}
Similarly, we define the conformal radius of a complex deformation as that of the univalent function $\cunivalent{\phi}$ (cf.~Proposition~\ref{prop:decomp_def}), denoting it by
\begin{align}
\CR(\phi) = \frac{1}{(\cunivalent{\phi})'(0)} \, \in \, \Rp, \qquad \phi \in \DefC.
\end{align}
Observe that the normalization of $\cunivalent{\phi}$ makes the conformal radius a positive real number.
From the decomposition we see that if $\psi \in \Diffpan$, then $\CR(\phi \circ \psi) = \CR(\phi)$.
A scaling transformation $\shrink{\tau} \in \shrinkgroup$, as listed in Section~\ref{section:subgroups}, has conformal radius $\CR(\shrink{\tau}) = e^{2\pi \tau}$.

Finally, we combine the rotation number and conformal radius to define 
\begin{align}
\rotCR(\phi) = \frac{e^{\ii \rotnumber(\cdiffeo{\phi})}}{\CR(\cunivalent{\phi})} \, \in \, \C \setminus \{0\}, \qquad \phi \in \DefC,
\end{align}
which is a complex-valued function on $\DefC$.
We have chosen the definition of $\rotCR$ so that for a scaling transformation and rotation, we have $\rotCR(\shrink{\tau} \circ \rotation_\alpha) = e^{-2 \pi \tau + \ii \alpha}$, which is the image of $1$ under this complex deformation.
We would like to define a complex-valued $2$-cocycle $\rotcocycle$ on $\DefC$ by taking the group-level differential~\eqref{eq:dgrp} of the logarithm of $\rotCR$.
To obtain a continuous cocycle, we need to apply the logarithm to the rotation number by reverting back to translation numbers~\eqref{eq:transnumber} of carefully chosen lifts.
Yet, the logarithm is essential because the branch cut makes $\log \rotCR$ discontinuous on $\DefC$.

The univalent function of a complex deformations $\phi \in \DefC$ satisfies $\cunivalent{\phi}(0) = 0$, 
so it restricts to a map from the punctured closed disk $\cdisk \setminus \{0\}$ to the punctured plane $\C \setminus \{0\}$.
This restriction may be lifted to a univalent function 
\begin{align}
\cunivalentlift{\phi} \colon \cuhp \to \C , \qquad 
\cunivalent{\phi}(e^{\ii z}) = e^{\ii \cunivalentlift{\phi}(z)} , \qquad  z \in \cuhp .
\end{align}
The set of univalent functions $F \colon \cdisk \to \C$ such that $F(0) = 0$ and $F'(0) > 0$ with real-analytic boundary behavior is simply connected since each $F$ is determined by the real-analytic loop $F(S^1)$ with winding number $+1$ around $0$, and the set of such real-analytic loops contracts to $S^1$.
Thus, we may choose lifts $\hat{F} \colon \cuhp \to \C$ such that $F(e^{\ii z}) = e^{\ii \hat{F}(z)}$ for $z \in \cuhp$ in a consistent way for all $F$, that is, $\hat{F}$ depends continuously on $F$, and $\id$ lifts to $\id$.
This normalization is such that for $\phi \in \Diffpan$, we have $\cunivalentlift{\phi} = \id$.

\begin{proposition}
\label{prop:rotnumber_lifts}
Let $\cdiffeolift{\phi_1}, \cdiffeolift{\phi_2} \colon \R \to \R$ be lifts of the diffeomorphisms $\cdiffeo{\phi_1}$, $\cdiffeo{\phi_2}$ defined by a pair $(\phi_1, \phi_2) \in \defmultset$ of composable complex deformations.
Let 
\begin{align}
\cdiffeolift{\phi_1 \circ \phi_2} = \cunivalentlift{\phi_1 \circ \phi_2}^{-1} \circ \cunivalentlift{\phi_1} \circ \cdiffeolift{\phi_1} \circ \cunivalentlift{\phi_2} \circ \cdiffeolift{\phi_2} 
\label{eq:cdiffeolift_composition}
\end{align}
define a lift of $\cdiffeo{\phi_1 \circ \phi_2}$. 
Then, the difference in translation numbers, 
\begin{align}
\transnumber(\cdiffeolift{\phi_1}) 
+ \transnumber(\cdiffeolift{\phi_2}) 
- \transnumber(\cdiffeolift{\phi_1 \circ \phi_2}) ,
\label{eq:transnumber_cocycle}
\end{align}
is independent of the choice of the lifts $\cdiffeolift{\phi_1}$ and $\cdiffeolift{\phi_2}$.
\end{proposition}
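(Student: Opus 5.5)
The plan is to track how the translation numbers change when each lift is shifted by an integer multiple of $2\pi$. Any two lifts of the same circle diffeomorphism differ by a deck translation $\theta \mapsto \theta + 2\pi k$, $k \in \Z$. So we write $\cdiffeolift{\phi_1}' = \cdiffeolift{\phi_1} + 2\pi k_1$ and $\cdiffeolift{\phi_2}' = \cdiffeolift{\phi_2} + 2\pi k_2$. We will show that expression~\eqref{eq:transnumber_cocycle} is unchanged. This requires two facts.

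\textbf{Step 1 (translation numbers of shifted lifts).} For any $2\pi$-periodic lift $\hat f$ (meaning $\hat f(\theta + 2\pi) = \hat f(\theta) + 2\pi$), the shifted lift satisfies $(\hat f + 2\pi k)^{\circ n}(0) = \hat f^{\circ n}(0) + 2\pi k n$, by induction on $n$. Dividing by $n$ in~\eqref{eq:transnumber} gives
\begin{align}
\transnumber(\hat f + 2\pi k) = \transnumber(\hat f) + 2\pi k .
\end{align}

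\textbf{Step 2 (how the composed lift shifts).} We must understand how the right-hand side of~\eqref{eq:cdiffeolift_composition} changes when $\cdiffeolift{\phi_1}, \cdiffeolift{\phi_2}$ are shifted; the lifts $\cunivalentlift{\cdot}$ are fixed by the consistent normalization. The key observation is that each lift $\hat F$ of a univalent $F$ with $F(0)=0$ commutes with the translation $z \mapsto z + 2\pi$. Indeed, $\hat F(z + 2\pi) - \hat F(z) \in 2\pi\Z$ is a continuous integer-valued function on the connected set $\cuhp$, hence constant. It equals $2\pi$ because $F$ restricted to $S^1$ has winding number $1$ (equivalently, this holds for $\id$ and persists by continuity in $F$). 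The same then holds for $\hat F^{-1}$.

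Consequently, every map in the composition~\eqref{eq:cdiffeolift_composition} commutes with the translation $\tau_{2\pi}$. Replacing $\cdiffeolift{\phi_j}$ by $\tau_{2\pi}^{k_j} \circ \cdiffeolift{\phi_j}$, we can move these translations through the whole composition. The result is
\begin{align}
\cdiffeolift{\phi_1 \circ \phi_2}' = \cdiffeolift{\phi_1 \circ \phi_2} + 2\pi(k_1 + k_2) .
\end{align}

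\textbf{Step 3 (conclusion).} Applying Step 1 three times, expression~\eqref{eq:transnumber_cocycle} changes by
\begin{align}
2\pi k_1 + 2\pi k_2 - 2\pi(k_1 + k_2) = 0 ,
\end{align}
so it is independent of the choice of lifts.

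\textbf{Main obstacle.} The main subtlety is Step 2, and specifically making sense of the composition in~\eqref{eq:cdiffeolift_composition} on the real line. The middle factors $\cunivalentlift{\phi_2}$ and $\cunivalentlift{\phi_1}$ act on a strip neighborhood of $\R$ through analytic continuation, and their images leave $\R$. So we must check two things:
\begin{enumerate}
\item Commutation with $\tau_{2\pi}$ persists for the analytically continued lifts. This is the case because $\cunivalent{\phi_1}$ is continued to the neighborhood of $\DefCnbhd{\phi_2}$ given by composability. That region is an annular-type set around $0$, whose lift to the strip is invariant under $\tau_{2\pi}$. The integer-valued continuity argument then applies on the connected lifted domain.
\item The composed map does land back on $\R$ as a lift of $\cdiffeo{\phi_1\circ\phi_2}$. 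This follows from $\phi_1 \circ \phi_2 = \cunivalent{\phi_1\circ\phi_2}\circ\cdiffeo{\phi_1\circ\phi_2}$ (Proposition~\ref{prop:decomp_def}), after exponentiating.
\end{enumerate}
Once these domain issues are settled, the remaining argument is formal.
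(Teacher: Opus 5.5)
Your proposal is correct and is essentially the paper's proof. The paper also proves $\transnumber(\hat\phi + 2\pi k) = \transnumber(\hat\phi) + 2\pi k$ by induction, then uses that every factor of the composed lift commutes with translation by $2\pi$, so the composed lift shifts by $2\pi(k_1+k_2)$ and the changes cancel. Your discussion of domains goes beyond the paper, which ignores them; the one slip is that composability gives an analytic continuation of $\phi_1$, not of $\cunivalent{\phi_1}$ by itself, so what you continue is the lift $\cunivalentlift{\phi_1}\circ\cdiffeolift{\phi_1}$ of $\phi_1$ as a whole, and the same $2\pi$-equivariance argument applies to it.
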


\begin{proof}
By induction on $n \geq 1$, for any $2\pi$-periodic diffeomorphism $\hat{\phi} \colon \R \to \R$, we have
\begin{align}
\begin{aligned}
(\hat{\phi} + 2\pi k)^{\circ n} 
= \; & \hat{\phi} \circ (\hat{\phi} + 2\pi k)^{\circ (n - 1)} + 2 \pi k \\
= \; & \hat{\phi} \circ (\hat{\phi}^{\circ (n - 1)} + 2\pi (n - 1) k ) + 2 \pi k = \hat{\phi}^{\circ n} + 2 \pi n k , \qquad k \in \Z.
\end{aligned}
\end{align}
Hence, the translation number~\eqref{eq:transnumber} satisfies
\begin{align}
\transnumber(\hat{\phi} + 2\pi k)
= \lim_{n \to \infty} \frac{1}{n} (\hat{\phi}^{\circ n}(0) + 2\pi n k)
= \transnumber(\hat{\phi}) + 2\pi k.
\label{eq:transnumber_periodic}
\end{align}
Since the lifts of both the diffeomorphisms and univalent functions are $2\pi$-periodic,
replacing $\cdiffeolift{\phi_1}$ and $\cdiffeolift{\phi_2}$ by $\cdiffeolift{\phi_1} + 2\pi k_1$ and $\cdiffeolift{\phi_2} + 2\pi k_2$ adjusts the lift~\eqref{eq:cdiffeolift_composition} to
\begin{align}
\begin{aligned}
& \; \cunivalentlift{\phi_1 \circ \phi_2}^{-1} \circ \cunivalentlift{\phi_1} \circ (\cdiffeolift{\phi_1} + 2\pi k_1) \circ \cunivalentlift{\phi_2} \circ (\cdiffeolift{\phi_2} + 2\pi k_2) \\
= \; &\cunivalentlift{\phi_1 \circ \phi_2}^{-1} \circ \cunivalentlift{\phi_1} \circ \cdiffeolift{\phi_1} \circ \cunivalentlift{\phi_2} \circ \cdiffeolift{\phi_2} + 2\pi (k_1 + k_2) , \qquad k_1, k_2 \in \Z .
\end{aligned}
\end{align}
Combining this with Equation~\eqref{eq:transnumber_periodic}, the changes of lifts in Equation~\eqref{eq:transnumber_cocycle} cancel out.
\end{proof}

With the lifts of Proposition~\ref{prop:rotnumber_lifts}, we define the cocycle on $(\phi_1, \phi_2) \in \defmultset$ as
\begin{align}\label{eq:def_rotcocycle}
\begin{split}
\rotcocycle(\phi_1, \phi_2) =  \; & 
\frac{\ii}{12}
\log \bigg(
\frac{
\CR(\cunivalent{\phi_1 \circ \phi_2})
}{
\CR(\cunivalent{\phi_1}) \, 
\CR(\cunivalent{\phi_2})
}
\bigg)
\\
\; & 
-\frac{1}{12} \Big(
\transnumber(\cdiffeolift{\phi_1}) 
+ \transnumber(\cdiffeolift{\phi_2}) 
- \transnumber(\cdiffeolift{\phi_1 \circ \phi_2}) 
\Big).
\end{split}
\end{align}
Ignoring the need for a choice of lifts, we have $\rotcocycle = \frac{\ii}{12} \Dgrp \log \rotCR$, as intended.
In fact, if both $\phi_1$ and $\phi_2$ are close enough to the identity, this formula may be used to compute $\rotcocycle(\phi_1, \phi_2)$.
Since this is the case for the corresponding Lie algebra cocycle by~\eqref{eq:def_dlie}, we can express it in terms of the following derivative of $\rotCR$ at the identity.

\begin{lemma}\label{lem:derivative_rot}
The derivative of $\rotCR$ at the identity is $\C$-linear and reads
\begin{align}
\eval{\pdv{t}}_{t = 0}
\rotCR(\Phi_{\ell_n}(t, \blank))
= - \delta_{n, 0}, \qquad n \in \Z.
\end{align}
\end{lemma}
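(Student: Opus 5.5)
The plan is to differentiate the decomposition of Proposition~\ref{prop:decomp_def} along the flow at first order, and then read off the two factors of $\rotCR$. Fix $v \in \VectC$ and let $\phi_t = \Phi_v(t,\blank)$. By Proposition~\ref{prop:decomp_def} we may write $\phi_t = \cunivalent{\phi_t} \circ \cdiffeo{\phi_t}$, with both factors depending Fr-smoothly on $t$ and equal to $\id$ at $t=0$. Differentiating at $t = 0$ gives a splitting $v = u + d$. Here $u = \partial_t|_{0} \cunivalent{\phi_t}$ is holomorphic on a neighborhood of $\cdisk$, with $u(0) = 0$ and $u'(0) \in \R$ coming from the normalization $\cunivalent{\phi_t}'(0)>0$. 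The term $d = \partial_t|_0 \cdiffeo{\phi_t}$ lies in $\VectR$.

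This splitting is unique. Suppose $u \in \VectR$ satisfies those normalizations. Write $u = z g$ with $g$ holomorphic on $\disk$. Tangency means $\Re g = 0$ on $S^1$, so $g$ is an imaginary constant, and then $g(0) = u'(0) \in \R$ forces $g = 0$. Since $\rotCR(\phi) = e^{\ii \rotnumber(\cdiffeo{\phi})}\, \cunivalent{\phi}'(0)$, the desired derivative is
\begin{align}
\eval{\pdv{t}}_{t=0} \rotCR(\phi_t) = \ii \, \eval{\pdv{t}}_{t=0} \transnumber(\cdiffeolift{\phi_t}) + u'(0),
\end{align}
where we use the continuous lift $\cdiffeolift{\phi_0} = \id$, so that $\rotnumber(\cdiffeo{\phi_t})$ is represented by a translation number near $0$.

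Next I compute the splitting on generators. For $n \geq 1$, $\ell_n = -z^{n+1}\partial_z$ is holomorphic on $\disk$ and vanishes to second order at $0$. Hence $u = \ell_n$, $d = 0$, and $u'(0)=0$. For $n \geq 1$ we also have $\ell_{-n} = \ell_n - 2\ellpar{n}$ by~\eqref{eq:def_ellpar_ellperp}, so $u = \ell_n$ and $d = -2\ellpar{n}$. The same works for the imaginary multiples: $\ii\ell_n$ with $n\ge1$ has $u = \ii\ell_n$, while $\ii \ell_{-n} = \ii\ell_n - 2\ii\ellpar{n} = \ii \ell_n + 2\ellpari{n} \cdot(\ldots)$ reorganizes via~\eqref{eq:def_ellpar_ellperp} into a holomorphic part with vanishing derivative at $0$ plus $-2\ellpari{n}$ up to sign. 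In every case with $n \neq 0$, $u'(0) = 0$ and $d$ is a multiple of $\ellpar{n}$ or $\ellpari{n}$. In the $\theta$-coordinate such a $d$ is a real multiple of $\sin(n\theta)$ or $\cos(n\theta)$, hence has mean zero. For $n = 0$, $\ell_0 = -z\partial_z$ is itself holomorphic with $u'(0) = -1$, giving derivative $-1$; indeed $\Phi_{\ell_0}$ is the scaling group, on which $\rotCR(\shrink{\tau}) = e^{-2\pi\tau}$. Finally, $\ii\ell_0$ generates the rotations $\rotation_{-t}$, for which $\rotCR = e^{-\ii t}$ exactly, with derivative $-\ii$.

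The main obstacle is the rotation-number term for $n \neq 0$, since the rotation number is not differentiable in general and $\cdiffeo{\phi_t}$ agrees with the flow of $d$ only to first order. I would avoid differentiability issues entirely by showing that the translation number vanishes for small $t$. The lift satisfies $\cdiffeolift{\phi_t}(\theta) - \theta = t\,\thetacoords{d}(\theta) + O(t^2)$ uniformly, by Fr-smoothness. Here $\thetacoords{d}$ is a nonzero multiple of $\sin(n\theta)$ or $\cos(n\theta)$, so it takes both signs with values bounded away from $0$ at suitable points. Hence for small $t \neq 0$ the displacement changes sign, $\cdiffeolift{\phi_t}$ has a fixed point, and $\transnumber(\cdiffeolift{\phi_t}) = 0$. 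Therefore the rotation term contributes nothing, and the derivative equals $u'(0) = 0$ for $n\neq 0$, including the multiples $\ii\ell_n$.

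Collecting the cases, the derivative sends $\ell_n \mapsto -\delta_{n,0}$ and $\ii\ell_n \mapsto -\ii\delta_{n,0}$. The map $v \mapsto \partial_t|_0 \rotCR(\Phi_v(t,\blank))$ is real-linear, because $v \mapsto (u,d)$ is linear, $u \mapsto u'(0)$ is linear, and the rotation contribution equals $\ii$ times the mean of $\thetacoords{d}$. The last identification follows from the same fixed-point argument applied after subtracting the mean, combined with an exact rotation. Together with the computed values on $\ell_n$ and $\ii\ell_n$, this real-linearity yields $\C$-linearity and the stated formula.
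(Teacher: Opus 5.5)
Your route differs from the paper's. You split $v = u + d$ infinitesimally via Proposition~\ref{prop:decomp_def} and kill the rotation factor with a perturbative fixed-point argument, computing directly along $\Phi_{\ell_{-n}}$ and $\Phi_{\ii\ell_{-n}}$. The paper instead computes exact flows on the real spanning set $\ell_n, \ii\ell_n, \ellpar{n}, \ellpari{n}$ ($n \geq 1$), $\ell_0, \ii\ell_0$, and reaches the negative modes only through linearity. For $\ell_{-n}$ and $\ii\ell_{-n}$ your argument is sound. The correct splitting of the latter is $\ii\ell_{-n} = -\ii\ell_n - 2\ellpari{n}$; your intermediate expression through $\ii\ellpar{n}$, which is a normal field, is garbled. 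The argument does not cover $\ell_n$ and $\ii\ell_n$ with $n \geq 1$, however. There $d = 0$, the displacement of the lift is only $O(t^2)$, and there is no sign change to exploit, contrary to your claim that $\thetacoords{d}$ is always a nonzero multiple of $\sin(n\theta)$ or $\cos(n\theta)$. You can close this as the paper does: $\Phi_{\ell_n}(t,\blank)$ and $\Phi_{\ii\ell_n}(t,\blank)$ are univalent near $\cdisk$ and fix $0$ with derivative $1$, so they are their own univalent part and $\cdiffeo{\Phi_{\ell_n}(t,\blank)} = \id$ exactly. Alternatively, use the bound $\min_\theta(\hat\psi(\theta)-\theta) \leq \transnumber(\hat\psi) \leq \max_\theta(\hat\psi(\theta)-\theta)$ for a lift $\hat\psi$, which gives $\transnumber = O(t^2)$.

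The genuine failure is your last paragraph. Identifying the rotation contribution with $\ii$ times the mean of $\thetacoords{d}$ is false, and the mechanism you sketch cannot work. Only shifts by $2\pi\Z$ pass through the translation number (exactly what Proposition~\ref{prop:rotnumber_lifts} uses), while in general $\transnumber(\hat\psi + c) \neq \transnumber(\hat\psi) + c$ for real $c$ (mode locking). Concretely, $v = -\ii\ell_0 + \ellpar{1} = (1 + \sin\theta)\,\partial_\theta \in \VectR$ generates real-analytic diffeomorphisms fixing $-\ii$. So $\rotCR(\Phi_v(t,\blank)) = 1$ for all $t$ and the derivative is $0$, whereas the mean of $\thetacoords{v}$ is $1$ and the $\C$-linear extension of the lemma's values predicts $\ii$. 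For $\thetacoords{v} = c + \sin\theta$ with $c > 1$ the derivative is $\ii\sqrt{c^2-1}$, a harmonic rather than arithmetic mean. Hence $v \mapsto \partial_t|_{t=0}\rotCR(\Phi_v(t,\blank))$ is real-homogeneous, since $\Phi_{cv}(t,\blank) = \Phi_v(ct,\blank)$, but not additive. $\rotCR$ is not Fr-smooth (the rotation number is only continuous), and the linearity clause of the statement is false in this sense, so no argument can supply it. The paper's proof does not supply it either: it declares the derivative ``a priori $\R$-linear'' and uses this to pass from $\ellpar{n}, \ellpari{n}$ to the negative modes. What your argument does establish, with the positive-mode fix, is the displayed formula together with $\ii\ell_n \mapsto -\ii\delta_{n,0}$ for every $n \in \Z$, computed along each flow without appeal to linearity. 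By homogeneity this extends to real multiples of generators, which is what Proposition~\ref{prop:DVE_rotcocycle} evaluates.
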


\begin{proof}
Consider the left-trivializing flow~\eqref{eq:flow} of $\ell_n$ for $n \geq 1$,
\begin{align}
\Phi_{\ell_n}(t, z) = (z^{-n} + nt)^{-\frac{1}{n}},
\end{align}
which has a well-defined branch in a neighborhood of $\cdisk$ (indeed, its singularities are at $\infty$ and $(-nt)^{-1/n} \in \hat{\C} \setminus \cdisk$ for $t \in [0, 1)$).
Note that the origin is a fixed point of $\Phi_{\ell_n}$, and moreover $\eval{\partial_z}_{z = 0} \Phi_{\ell_n}(t, z) = 1$.
Thus, in the decomposition of Proposition~\ref{prop:decomp_def} we have $\cunivalent{\Phi_{\ell_n}(t, z)} = \Phi_{\ell_n}(t, z)$, and hence $\cdiffeo{\Phi_{\ell_n}(t, z)} = \id$ making the rotation number vanish and the conformal radius equal to $1$ for all $t \in [0, 1)$.
We conclude that $\DdR_{\id} \rotCR(\ell_{n}) = 0$ for $n \geq 1$.
Because the derivative $\DdR_{\id} \rotCR$ in our setup is a priori only $\R$-linear, we separately consider $\ii \ell_n$ for $n \geq 1$.
We have $\Phi_{\ii \ell_n}(t, z) = (z^{-n} + \ii nt)^{-\frac{1}{n}}$, also with fixed point $0$, and derivative $\eval{\partial_z}_{z = 0} \Phi_{\ii \ell_n}(t, z) = 1$. 
So analogously, $\DdR_{\id} \rotCR(\ii \ell_{n}) = 0$ for $n \geq 1$.

Concerning the negative modes, observe that the tangent vector fields $\ellpar{n}$ and $\ellpari{n}$ for $n \geq 1$, defined in Equation~\eqref{eq:def_ellpar_ellperp}, mix the positive and negative modes for $\ell_n$ and $\ii \ell_n$ respectively.
Both have zeros on $S^1$, and therefore their flows (diffeomorphisms) have fixed points, making the rotation number vanish.
Hence, we find that $\DdR_{\id} \rotCR(\ellpar{n}) = 0 =\DdR_{\id} \rotCR(\ellpari{n})$.
We conclude that $\DdR_{\id} \rotCR(\ell_n) = 0 = \DdR_{\id} \rotCR(\ii \ell_n)$ also for $n \leq -1$.

Lastly, the vector field $\ell_0 = -z \partial_z$ generates a scaling transformation,
\begin{align}
\eval{\pdv{t}}_{t = 0}
\shrink{\frac{t}{2\pi}}
=
\eval{\pdv{t}}_{t = 0}
e^{-t} z = \ell_0(z), \qquad
\DdR_{\id} \rotCR(\ell_0) = 
\eval{\pdv{t}}_{t = 0}
e^{-t}
= -1,
\end{align}
and the rotated vector field $\ii \ell_0 = - \ii z \partial_z$ generates a rotation,
\begin{align}
\eval{\pdv{t}}_{t = 0}
\rotation_{- t}
=
\eval{\pdv{t}}_{t = 0}
e^{- \ii t} z = \ii \ell_0(z), \qquad \DdR_{\id} \rotCR(\ii \ell_0) = 
\eval{\pdv{t}}_{t = 0}
e^{- \ii t}
= - \ii.
\end{align}
This shows the asserted formula.
\end{proof}

Using Lemma~\ref{lem:derivative_rot}, we can compute the image of $\rotcocycle$ under the van Est map:

\begin{proposition}\label{prop:DVE_rotcocycle}
In the $\C$-basis $\setsuchthat{\ell_n}{n \in \Z}$ of $\Witt$, we have 
\begin{align}
(\DVE \rotcocycle)(\ell_n, \ell_m) = \frac{\ii}{12} n \, \delta_{n + m}, \qquad n, m \in \Z.
\end{align}
\end{proposition}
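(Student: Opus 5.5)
The plan is to reduce the computation to the coboundary formula~\eqref{eq:vE_grp_alg_dif} together with Lemma~\ref{lem:derivative_rot}. Globally, $\rotcocycle$ is not the coboundary of a well-defined function, because $\log \rotCR$ has a branch cut. However, the van Est map~\eqref{eq:def_dlie} only evaluates $\rotcocycle$ on pairs $(\Phi_v(t,\blank),\Phi_w(s,\blank))$ with $s,t$ small. By Proposition~\ref{prop:defc_local}, these pairs are composable and close to the identity.

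\textbf{Step 1: a local primitive.} On a small neighbourhood of $\id$ (in the Fr\"olicher sense, along the relevant curves), I define
\[
f(\phi)=\tfrac{\ii}{12}\Big(\ii\,\transnumber(\cdiffeolift{\phi})-\log\CR(\phi)\Big).
\]
Here the lift $\cdiffeolift{\phi}$ is chosen continuously in $\phi$ with $\cdiffeolift{\id}=\id$, so that $\transnumber(\cdiffeolift{\phi})$ is near $0$. The lifts $\cunivalentlift{\phi}$ are the normalized ones fixed before Proposition~\ref{prop:rotnumber_lifts}. For $\phi_1,\phi_2$ close to $\id$, the lift~\eqref{eq:cdiffeolift_composition} of $\cdiffeo{\phi_1\circ\phi_2}$ is then the continuously chosen lift near the identity, by continuity and $\cunivalentlift{\id}=\id$. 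By Proposition~\ref{prop:rotnumber_lifts}, the translation-number term in~\eqref{eq:def_rotcocycle} does not depend on the lift choice.

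With this $f$, I compare with~\eqref{eq:def_rotcocycle}. The conformal-radius part gives $\tfrac{\ii}{12}(\log\CR_{12}-\log\CR_1-\log\CR_2)$. The rotation part gives $\tfrac{\ii}{12}\cdot\ii(\transnumber_1+\transnumber_2-\transnumber_{12})=-\tfrac{1}{12}(\transnumber_1+\transnumber_2-\transnumber_{12})$. Together these show that $\rotcocycle(\phi_1,\phi_2)=f(\phi_1)+f(\phi_2)-f(\phi_1\circ\phi_2)=(\Dgrp f)(\phi_1,\phi_2)$ near the identity.

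\textbf{Step 2: apply the coboundary formula.} Since $\DVE$ only sees the germ of the cocycle at $(\id,\id)$, formula~\eqref{eq:vE_grp_alg_dif} applies. It gives
\[
(\DVE\rotcocycle)(v,w)=-\tfrac12\,\partial_t|_{t=0} f(\Phi_{[v,w]}(t,\blank)).
\]
Near $\id$ we have $f=\tfrac{\ii}{12}\log\rotCR$ for a smooth branch with $\log\rotCR(\id)=0$. Since $\rotCR(\id)=1$, the derivative of $f$ at $\id$ is $\tfrac{\ii}{12}\,\DdR_{\id}\rotCR$.

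\textbf{Step 3: evaluate on the Witt basis.} By Lemma~\ref{lem:derivative_rot}, $\DdR_{\id}\rotCR$ is $\C$-linear with $\DdR_{\id}\rotCR(\ell_k)=-\delta_{k,0}$. Using $[\ell_n,\ell_m]=(n-m)\ell_{n+m}$, this yields
\[
(\DVE\rotcocycle)(\ell_n,\ell_m)=-\tfrac12\cdot\tfrac{\ii}{12}(n-m)(-\delta_{n+m})=\tfrac{\ii}{24}(n-m)\delta_{n+m}=\tfrac{\ii}{12}n\,\delta_{n+m}.
\]
The last equality uses $m=-n$.

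\textbf{Main obstacle.} The delicate part is Step 1. I expect the hardest point to be justifying that the lift-dependent definition~\eqref{eq:def_rotcocycle} agrees, for pairs near the identity, with the coboundary of a single smooth local branch of $\tfrac{\ii}{12}\log\rotCR$. This requires the continuity of the chosen lifts $\cunivalentlift{\phi}$ and $\cdiffeolift{\phi}$ along the flows, with both equal to $\id$ at $\phi=\id$. It also requires that $\CR$ and $\transnumber$ are differentiable along these curves, which I would take from the Fr-smoothness in Proposition~\ref{prop:decomp_def}.
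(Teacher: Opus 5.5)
Your route is the paper's route: a local primitive $f=\tfrac{\ii}{12}\log\rotCR$ near $\id$, then \eqref{eq:vE_grp_alg_dif}, Lemma~\ref{lem:derivative_rot} and the Witt relations. Your Step~1 comparison with \eqref{eq:def_rotcocycle} is correct and more explicit than the paper. The step that fails is Step~3, and the paper's last line makes the same substitution.

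The bracket appearing in \eqref{eq:vE_grp_alg_dif} is the flow-commutator bracket. The $ts$-coefficient of $\Phi_v(t,\cdot)\circ\Phi_w(s,\cdot)-\Phi_w(s,\cdot)\circ\Phi_v(t,\cdot)$ is $wv'-vw'$, which is \eqref{eq:lie_bracket_vector_fields}. For $\ell_n=-z^{n+1}\partial_z$ this equals $(n-m)z^{n+m+1}\partial_z=(m-n)\ell_{n+m}$, not $(n-m)\ell_{n+m}$. So your computation actually gives $\tfrac{\ii}{24}(m-n)\delta_{n+m}=-\tfrac{\ii}{12}n\,\delta_{n+m}$.

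You can confirm this directly from \eqref{eq:def_dlie}. Take $\Phi_{\ell_1}(t,z)=z/(1+tz)$ and $\Phi_{\ell_{-1}}(s,z)=z-s$. All diffeomorphism parts involved are M\"obius maps fixing $\pm1$, so the translation terms vanish. The conformal radii are $\CR(\Phi_{\ell_1}(t,\cdot))=1$, $\CR(\Phi_{\ell_{-1}}(s,\cdot))=\CR(\Phi_{\ell_1}(t,\cdot)\circ\Phi_{\ell_{-1}}(s,\cdot))=(1-s^2)^{-1}$, and $\CR(\Phi_{\ell_{-1}}(s,\cdot)\circ\Phi_{\ell_1}(t,\cdot))=((1-st)^2-s^2)^{-1}$. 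Hence $\rotcocycle(\Phi_{\ell_1}(t,\cdot),\Phi_{\ell_{-1}}(s,\cdot))=0$. Meanwhile $\rotcocycle(\Phi_{\ell_{-1}}(s,\cdot),\Phi_{\ell_1}(t,\cdot))=\tfrac{\ii}{12}\log\frac{1-s^2}{(1-st)^2-s^2}$ has mixed derivative $\partial_s\partial_t$ at $0$ equal to $\tfrac{\ii}{6}$. Therefore $(\DVE\rotcocycle)(\ell_1,\ell_{-1})=\tfrac12\big(0-\tfrac{\ii}{6}\big)=-\tfrac{\ii}{12}$.

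With the definitions as written, this argument (yours and the paper's) yields $(\DVE\rotcocycle)(\ell_n,\ell_m)=-\rotcocyclealg(\ell_n,\ell_m)$. The printed sign comes from applying the Witt relation to a bracket of the opposite sign.

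Separately, what you call the main obstacle is not settled by Proposition~\ref{prop:decomp_def}. Fr-smoothness of $\phi\mapsto\cdiffeo{\phi}$ does not make the translation number differentiable, and in fact it is not differentiable at $\id$. For example, the time-$t$ flow of $(1+b\sin\theta)\,\partial_\theta$ has rotation number $t\sqrt{1-b^2}$ for $|b|<1$ and $0$ for $|b|\geq 1$. So $f$ has no linear derivative at $\id$, and the $\C$-linearity in Lemma~\ref{lem:derivative_rot} cannot be used as a general fact.

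What makes Steps~1--2 legitimate on the Witt basis is a symmetry. Each $\ell_n$ has real coefficients, so for real $t,s$ all maps entering \eqref{eq:def_dlie} commute with $z\mapsto\bar z$. By uniqueness in Proposition~\ref{prop:decomp_def}, so do their diffeomorphism parts. These therefore fix $\pm1$ and have translation number $0$ for the lifts near $\id$. Only the smooth term $-\tfrac{\ii}{12}\log\CR$ survives, and \eqref{eq:vE_grp_alg_dif} may legitimately be applied to it.
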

\begin{proof}
	For $t$ and $s$ close to $0$ in the van Est map~\eqref{eq:def_dlie}, we have $(\DVE \rotcocycle)(\ell_n, \ell_m) = \frac{\ii}{12} (\DVE \Dgrp \log \rotCR)(\ell_n, \ell_m)$.
	By Equation~\eqref{eq:vE_grp_alg_dif} and Lemma~\ref{lem:derivative_rot}, this equals
	\begin{align}
	- \frac{\ii}{24}
	\eval{\pdv{t}}_{t = 0}
	\log \rotCR(\Phi_{[\ell_n, \ell_m]}(t, \blank))
	= 
	\frac{\ii}{24} (n - m)
	\delta_{n+m, 0}
	\end{align}
	where we also used the Witt algebra relations and the fact that $\rotCR(\id) = 1$.
\end{proof}

Note that the Lie algebra cocycle in Proposition~\ref{prop:DVE_rotcocycle} is indeed the part of~\eqref{eq:lie_bott} with the linear dependence in $n$.
It agrees with the Lie algebra cohomology differential of 
\begin{align}
\rotnumberalg(v) = \; & \frac{1}{48\pi} \int_0^{2\pi} v(\theta) \; \dd\theta, \qquad v \in \VectC, \\ 
\rotnumberalg(\ell_n) = \; & \frac{\ii }{24} \delta_{n, 0} , \qquad n \in \Z .
\end{align}
defining the Lie algebra cocycle $\rotcocyclealg = \Dalg \rotnumberalg$,
\begin{align}\label{eq:def_rotcocyclealg}
\begin{split}
\rotcocyclealg(v, w) = \; & \frac{1}{24\pi} \int_{0}^{2\pi} v(\theta) w'(\theta) \; \dd \theta, \qquad v,w \in \VectC, \\ 
\rotcocyclealg(\ell_n, \ell_m) = \; & \frac{\ii}{12} n \, \delta_{n + m, 0} , \qquad n,m \in \Z .
\end{split}
\end{align}
Overall, we obtain the following relations:
\begin{align}
\DVE \bott = \gelfandfuks - \rotcocyclealg, \qquad \DVE \rotcocycle = \rotcocyclealg.
\label{eq:cocycle_relations}
\end{align}

\subsection{Computation of the cohomology}
\label{section:computation_cohomology}

\begin{theorem}
\label{thm:cohomology}
The group-level and algebra-level cohomologies of $\DefC$ relative to $\Diffpan$ and $\shrinkgroup$ are respectively as in Figure~\ref{fig:diag_grp} and Figure~\ref{fig:diag_alg}.
\end{theorem}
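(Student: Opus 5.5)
We first compute the Lie algebra diagram of Figure~\ref{fig:diag_alg}, and then lift it to the group diagram of Figure~\ref{fig:diag_grp} using Proposition~\ref{prop:ses_hom_grp_alg} and the characters in Figure~\ref{fig:diag_hom}.

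\emph{Lie algebra level.} The input is the Gel'fand--Fuks theorem: $H^2$ of the Witt algebra (continuous, real-analytic version) with complex coefficients is one-dimensional, spanned by $\gelfandfuks$. Regarding $\VectC$ as a real Lie algebra, $H^2(\VectC;\R)$ is spanned by $\Re\gelfandfuks$ and $\Im\gelfandfuks$. The reason is that any real continuous cocycle extends $\C$-bilinearly after complexification, and $\VectC\otimes_\R\C\cong \VectC\oplus\overline{\VectC}$ has $H^1=0$ and $H^2$ equal to the sum of the two Gel'fand--Fuks classes. For the real form we have $H^2(\VectR;\R)=\R\Re\gelfandfuks$, using $\overline{\gelfandfuks(v,w)}=\gelfandfuks(v,w)$ real up to sign for tangent fields. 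We also have $H^1(\VectR;\R)=0$ because $[\VectR,\VectR]=\VectR$. For the one-dimensional algebra $\shrinkalg$, $H^1=\R$ and $H^2=0$. We compute the relative groups directly from cochains vanishing on the subalgebras, and check consistency with the long exact sequences. The relative cochains differ from the absolute ones by primitives: a $1$-cochain $f$ with $\Dalg f$ relative. For example, $\Im\rotcocyclealg=\Dalg\Im\rotnumberalg$ is a coboundary in $H^2(\VectC;\R)$ but not relative to $\shrinkalg$, since $\Im\rotnumberalg(\ell_0)\neq0$. This gives the extra class in $H^2(\VectC;\shrinkalg;\R)$, realized by the transgression of $H^1(\shrinkalg;\R)$. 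We identify the remaining entries by checking, with the basis~\eqref{eq:gelfandfuks}, which cocycles vanish on $\VectR\times\VectR$ (the imaginary parts) and on $\ell_0$ (all of them, since $\gelfandfuks(\ell_0,\cdot)=0$).

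\emph{Group level.} The classes $\Re\bott,\Im\bott,\Re\rotcocycle,\Im\rotcocycle$ are explicit Fr-smooth cocycles, and~\eqref{eq:cocycle_relations} shows that $\DVE$ maps their span onto each algebra-level group. So the surjectivity hypothesis of Proposition~\ref{prop:ses_hom_grp_alg} holds in each relative setting. It remains to compute $\Hom(\pi_1(\DefC;\sgrps),\R)$. We have $\DefC\simeq \C\setminus\{0\}$-loops of winding one, deformation retracting onto rotations, so $\pi_1\cong\Z$, generated by the rotation loop, which lies in $\Diffpan$. Hence $\pi_1(\DefC;\Diffpan)=0$. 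Relative to $\shrinkgroup$ alone the generator survives, and $\pi_1(\Diffpan)\cong\Z$. This gives Figure~\ref{fig:diag_hom}.

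The kernel $\funtogrp(\Hom(\pi_1,\R))$ is then identified with $\R\Re\rotcocycle$ in the non-relative case. Indeed, $\Re\rotcocycle=-\frac1{12}(\text{translation-number defect})$ has trivial van Est image, because $\Re\rotcocyclealg$ is a coboundary of $\Re\rotnumberalg$, which vanishes on $\VectC$ up to real multiples. Yet $\Re\rotcocycle$ is nontrivial, since it is the Poincaré/Euler-type class pairing nontrivially with the rotation loop. The same argument places $\Re\rotcocycle$ in $H^2(\Diffpan;\R)$ and in $H^2(\DefC;\shrinkgroup;\R)$, and kills it relative to $\Diffpan$. Combined with the short exact sequences, this fills in Figure~\ref{fig:diag_grp}. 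Finally we verify the braid maps (restriction and transgression) on the named representatives, for consistency.

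\emph{Main obstacle.} The hardest step is the algebra-level Gel'fand--Fuks input for real-analytic vector fields viewed over $\R$, including the relative versions. We would reduce to the Witt algebra by density and continuity of cocycles, then use the standard $\ell_0$-weight argument: cocycles may be averaged to be $\ell_0$-invariant, forcing $\omega(\ell_n,\ell_m)\propto\delta_{n+m}$ and a cubic-polynomial recursion. A secondary difficulty is showing that $\Re\rotcocycle$ is not a coboundary of a Fr-smooth function. For this we would evaluate it on powers of a rotation $\rotation_\alpha$ along the loop, and use that a bounded homogeneous quasimorphism argument rules out a continuous primitive.
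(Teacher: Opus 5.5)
\textbf{Overall.} Your outline is correct, and its group-level part is the paper's argument. The algebra-level part takes a different route, and one auxiliary step (your check that $[\Re\rotcocycle]\neq0$) fails as written. There are also two technical caveats that your outline shares with the paper.

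\textbf{Comparison of routes.} At the group level you use the characters of Figure~\ref{fig:diag_hom}, the explicit cocycles of Section~\ref{section:list_cocycles} to get surjectivity of $\DVE$, Proposition~\ref{prop:ses_hom_grp_alg}, and exactness of the braided sequences, exactly as the paper does.

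At the algebra level you import the complex Gel'fand--Fuks theorem and pass to real coefficients via $\VectC\otimes_\R\C\cong\VectC\oplus\overline{\VectC}$. The mixed terms vanish because the Witt algebra is perfect. The paper instead computes $H^2(\VectC;\R)$ by hand in the real basis $\{\ell_n,\ii\ell_n\}$:
\begin{itemize}
\item it subtracts $\Dalg f$ with $f(\ell_n)=\alpha(\ell_n,\ell_0)/n$ to land on the antidiagonal;
\item it notes that the mixed cocycle identity forces $d_n=-c_n$;
\item it solves two decoupled recursions with solutions $n$ and $n^3$.
\end{itemize}
This is essentially your fallback. Your route explains conceptually why the answer is exactly $\Re\gelfandfuks$ and $\Im\gelfandfuks$. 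It rests, however, on the continuous Gel'fand--Fuks theorem in the real-analytic setting plus a K\"unneth step. The paper's computation is self-contained and produces all four solutions at once, including the coboundaries $\Re\rotcocyclealg$ and $\Im\rotcocyclealg$ that generate the relative entries.

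\textbf{The step that fails.} Your check that $[\Re\rotcocycle]\neq0$ does not work as written. With the composed lifts of Proposition~\ref{prop:rotnumber_lifts}, translation numbers are additive on rotations, so $\Re\rotcocycle(\rotation_\alpha,\rotation_\beta)=0$ identically. Nothing computed on the rotation subgroup can detect the class. A Fr-smooth primitive on $\DefC$ also need not be bounded, so the quasimorphism remark does not close the gap either.

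The missing input is global. Suppose $\Re\rotcocycle=\Dgrp f$. On $\universalcover{\Diffpan}$, realized by lifts $\hat\phi$, the function $h=-\frac{1}{12}\transnumber-f\circ\pi$ satisfies $\Dgrp h=0$, so $h$ is a homomorphism. Since $f(\id)=0$, $h$ takes the value $-\frac{\pi}{6}$ on the deck generator $\theta\mapsto\theta+2\pi$. That is impossible. The deck generator lies in the universal cover of the M\"obius subgroup $\mathrm{PSL}(2,\R)\subset\Diffpan$. There $h$ is a continuous homomorphism to $\R$, hence zero, because $\mathfrak{sl}(2,\R)$ is perfect.

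This is the same mechanism behind the injectivity of $\funtogrp$. Equivalently, check that $\phi\mapsto[\cdiffeolift{\phi},c\,\transnumber(\cdiffeolift{\phi})]$ is a lift-independent section of~\eqref{eq:ext_uc} for suitable $c\neq0$. Its cocycle is a nonzero multiple of $\Re\rotcocycle$, and injectivity of $\funtogrp$ then gives nontriviality.

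\textbf{Two caveats shared with the paper.}
\begin{enumerate}
\item $\transnumber$ is not Fr-smooth. Take $\cdiffeolift{\psi}(\theta)=\theta+b\sin\theta$ with $0<b<1$, and the curve $t\mapsto(\rotation_t,\psi)$. Along it, $\Re\rotcocycle=-\frac{1}{12}\big(t-\transnumber(\theta\mapsto\theta+t+b\sin\theta)\big)$. This is an Arnold-family devil's staircase, which is not $C^1$. So $\Re\rotcocycle$ should be replaced by the defect of a smooth lift-equivariant function, such as the mean displacement $\frac{1}{2\pi}\int_0^{2\pi}(\hat\phi(\theta)-\theta)\,\dd\theta$.
\item Relative surjectivity of $\DVE$ needs representatives that vanish on the subgroups. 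On $\Diffpan$, however, one finds $\Im\bott=\frac{1}{24\pi}\Dgrp\big(\phi\mapsto\int_0^{2\pi}(\hat\phi'+1)\log\hat\phi'\,\dd\theta\big)\neq0$. So $\Im\bott$ must first be corrected by a coboundary, extending this primitive through $\phi\mapsto\cdiffeo{\phi}$.
\end{enumerate}
Neither point changes the dimensions in Figure~\ref{fig:diag_grp}.
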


\begin{proof}
The asserted Lie algebra cohomologies in Figure~\ref{fig:diag_alg} follow from a variation of a classical computation of the second cohomology of the Witt algebra.
This is well-known for complex-valued vector fields $\VectC$ with complex coefficients, 
or for real-valued vector fields $\VectR$ with real coefficients; see~\cite{Guieu-Roger:Virasoro_book, Schottenloher:Mathematical_introduction_to_CFT} and references therein.
The latter gives $H^2(\VectR; \R)$ in the top right corner of Figure~\ref{fig:diag_alg}. 
However, we also need the second cohomology of $\VectC$ with coefficients in $\R$, which we compute next.

Consider a cocycle $\alpha \in Z^2(\Witt; \R)$, and the coboundary $\Dalg f$ for 
\begin{align}\label{eq:coboundary_f}
\begin{split}
f(\ell_n) = \; & \frac{\alpha(\ell_n, \ell_0)}{n}, \qquad
f(\ii \ell_n) = \frac{\alpha(\ii \ell_n, \ell_0)}{n}, \qquad n \neq 0 ,  \\
f(\ell_0) = \; & f(\ii \ell_0) = 0,
\end{split}
\end{align}
which is a function $f \in (\Witt)^\vee$ defined by its values in the $\R$-basis $\setsuchthat{\ell_n, \ii \ell_n}{n \in \Z}$ of $\Witt$. 
We expand the cocycle property of $\alpha$ in this basis for $n, m, k \in \Z$,
\begin{alignat}{4}
(n - m) &\; \alpha(\ell_{n+m}, \ell_k) &&+ (m - k)\; \alpha(\ell_{m+k}, \ell_n) &&+ (k - n) \; \alpha(\ell_{k + n}, \ell_m) \; &= 0,
\label{eq:cocycle_property_basis_a} \\*
(n - m) &\; \alpha(\ii \ell_{n+m}, \ell_k) &&+ (m - k) \; \alpha(\ell_{m+k}, \ii \ell_n) &&+ (k - n) \; \alpha(\ii \ell_{k + n}, \ell_m) \;  &= 0, 
\label{eq:cocycle_property_basis_b} \\*
- (n - m) &\; \alpha(\ell_{n+m}, \ell_k) &&+ (m - k) \; \alpha(\ii \ell_{m+k}, \ii \ell_n) &&+ (k - n) \; \alpha(\ii \ell_{k + n}, \ii \ell_m) \;  &= 0, 
\label{eq:cocycle_property_basis_c} \\*
- (n - m) &\; \alpha(\ell_{n+m}, \ii \ell_k) &&- (m - k) \; \alpha(\ell_{m+k}, \ii \ell_n) &&- (k - n) \; \alpha(\ell_{k + n}, \ii \ell_m) \;  &= 0,
\label{eq:cocycle_property_basis_d}
\end{alignat}
By considering $k = 0$, we find that the cohomologous cocycle $\beta = \alpha - \Dalg f$ satisfies 
\begin{alignat}{3}
\beta(\ell_n, \ell_m)
= & \; \alpha(\ell_n, \ell_m)
&&- \frac{n - m}{n + m} \; \alpha(\ell_{n + m}, \ell_0)
\,\;\; = \; 0 ,
\\
\beta(\ii \ell_n, \ii \ell_m)
= & \; \alpha(\ii \ell_n, \ii \ell_m)
&&+ \frac{n - m}{n + m} \; \alpha(\ell_{n + m}, \ell_0)
\,\;\; = \; 0 ,
\\
\beta(\ii \ell_n, l_m)
= & \; \alpha(\ii \ell_n, \ell_m)
&&- \frac{n - m}{n + m} \; \alpha(\ii \ell_{n + m}, \ell_0)
\; = \; 0 , \qquad n + m \neq 0 .
\end{alignat}
Thus, $\beta$ is supported on the anti-diagonal $n + m = 0$ for $n, m \in \Z$, where the values are given by some constants $c_n, d_n, e_n \in \R$, $n \in \Z$,
\begin{align}
\beta(\ell_n, \ell_m) = & \; c_n \, \delta_{n+m,0}, \qquad
\; c_{-n} = - c_{n}, \\
\beta(\ii \ell_n, \ii \ell_m) = & \; d_n \, \delta_{n+m,0}, \qquad
d_{-n}  \,= - d_{n}, \\
\beta(\ii \ell_n, \ell_m) = & \; e_n \, \delta_{n+m,0}, \qquad
\; e_{-n} = - e_{n},
\end{align}
already determining $c_0 = d_0 = e_0 = 0$.
By considering the cocycle properties~\eqref{eq:cocycle_property_basis_a} to~\eqref{eq:cocycle_property_basis_d} for $\beta$ instead of $\alpha$, we find for $n + m + k = 0$ the identities
\begin{align}
(n - m) c_{n + m} + \, (m - k) c_{m + k} + (k - n) c_{k + n} = & \; 0,
\label{eq:cocycle_property_beta_a} \\
(n - m) e_{n + m} + \, (m - k) e_{m + k} + (k - n) e_{k + n} = & \; 0,
\label{eq:cocycle_property_beta_b} \\
-(n - m) c_{n + m} + (m - k) d_{m + k} + (k - n) d_{k + n} = & \; 0.
\label{eq:cocycle_property_beta_c}
\end{align}

Note how the third relation~\eqref{eq:cocycle_property_beta_c} mixes the constants $c_n$ and $d_n$.
By setting $k = n$ and $m = -2n$, we find that $c_n = -d_n$ for all $n \in \Z$. 
By setting $k = 1$ and $m = -(n+1)$ for $n \geq 2$, we infer that $(c_n,e_n)$ are solutions of equal and independent linear recursion relations,
with initial conditions $(c_1, c_2) \in \R^2$ and $(e_1, e_2) \in \R^2$: 
\begin{align}
c_{n + 1} = & \; \frac{1}{n - 1} \Big(
(n + 2) \; c_n - (2n + 1) \; c_1)
\Big),
\\
e_{n + 1} = & \; \frac{1}{n - 1} \Big(
(n + 2) \; e_n - (2n + 1) \; e_1)
\Big).
\end{align}
The initial conditions $(1, 2)$ and $(1, 8)$ correspond to the linearly independent solutions $n$ and $n^3$ respectively.
Since the recursion equations are linear, all solutions are linear combinations of these two.
The solutions $c_n = n$, $e_n = 0$ and $c_n = 0$, $e_n = n$ are respectively proportional to $\Im \rotcocyclealg$ and $\Re \rotcocyclealg$ in Equation $\eqref{eq:def_rotcocyclealg}$.
The other two solutions $c_n = n^3$, $e_n = 0$ and $c_n = 0$, $e_n = n^3$ are proportional to $\Im \gelfandfuks$ and $\Re \gelfandfuks$, the real and imaginary parts of the Gel'fand--Fuks cocycle as in Equation~\eqref{eq:gelfandfuks}.
Note in addition that the couboundary $\Dalg f$ with $f$ as in Equation~\eqref{eq:coboundary_f} above vanishes on the subalgebra $\R \ell_0$ --- and thus, it is also a coboundary in the cohomology relative to $\R \ell_0$.

Observe next that $\Im \rotcocyclealg = \Im \Dalg \rotnumberalg$ and $\Re \rotcocyclealg = \Re \Dalg \rotnumberalg$ are coboundaries in $Z^2(\Witt; \R)$, so  
the Lie algebra cohomology of $\DefC$ is spanned by $\Im \gelfandfuks$ and $\Re \gelfandfuks$.
Relative to $\R \ell_0$ however, $\Im \rotcocyclealg$ is not a coboundary anymore since $\Im \rotnumberalg(\ell_0) \neq 0$ by~\eqref{eq:def_rotcocyclealg}.
Thus, the cohomology $H^2(\VectC; \R \ell_0; \R)$ contains it as an additional generator.
The cocycle $\R \gelfandfuks$ does not vanish on $\VectR$, thus, 
$H^2(\VectC; \VectR; \R)$ only contains the cocycle $\Im \gelfandfuks$.
For the cohomology relative to both $\VectR$ and $\R \ell_0$, both of the arguments above hold, and thus it is spanned by $\Im \gelfandfuks$ and $\Im \rotcocyclealg$.

The first Lie algebra cohomology is given by derivations modulo inner derivations. 
On the one hand, for the simple Lie algebra $\VectR$, every derivation is trivial.
On the other hand, for the abelian Lie algebra $\R \ell_0$, any linear function is a derivation, and no derivation is inner.
We may view this one-dimensional space as the span of the function $\Im \rotnumberalg$.
This concludes the characterization of the Lie algebra cohomologies in Figure~\ref{fig:diag_alg}.

We will use Proposition~\ref{prop:ses_hom_grp_alg} in combination with the above established algebra-level cohomology and the homotopy groups, Figure~\ref{fig:diag_hom},
to derive the group-level cohomology, Figure~\ref{fig:diag_grp}. 
To this end, we observe that each of the fundamental groups in Figure~\ref{fig:diag_hom} is either trivial or isomorphic to $\Z$.
In case of the latter, $\Hom(\Z, \R) \cong \R$. This is the case for $\DefC$ and $\Diffpan$ (cf.~\cite{Guieu-Roger:Virasoro_book} for the case of $\Diffpan$). 
However, the fundamental group of $\DefC$ relative to $\Diffpan$ is trivial again,
because $\Diffpan$ contains the subgroup of rotations, which contains representatives of all homotopy classes in $\DefC$.

Lastly, we observe that for every Lie algebra cocycle above, we have found a corresponding group-level cocycle in Section~\ref{section:list_cocycles} that integrates the respective Lie algebra cocycle.
Moreover, Proposition~\ref{prop:ses_hom_grp_alg} yields a short exact sequence between the diagrams
\begin{align}
\{0\} \longrightarrow \text{Figure~\ref{fig:diag_hom}} \longrightarrow \text{Figure~\ref{fig:diag_grp}} \longrightarrow \text{Figure~\ref{fig:diag_alg}} \to \{0\},
\end{align}
and each of the four braided chain complexes in each of the figures is exact, by the respective long exact sequences of relative homotopy groups, group-level cohomology, and Lie algebra cohomology.
The additional group-level cocycle coming from the homotopy group via Proposition~\ref{prop:ses_hom_grp_alg} is $\Re \rotcocycle$, which indeed differentiates to a trivial Lie algebra cocycle.
Exactness of all sequences shows that we have found all group-level cohomology classes. 
\end{proof}

\subsection{Applications to conformal field theory}
\label{section:cft}

From the conformal anomaly of CFT, and a fixed central charge $\charge \in \R$, one may obtain a non-trivial central extension of $\DefC$ through the construction of a real determinant line bundle. 
We refer to~\cite{Friedan-Shenker:The_analytic_geometry_of_two-dimensional_conformal_field_theory, 
Friedrich:On_connections_of_CFT_and_SLE, Kontsevich-Suhov:On_Malliavin_measures_SLE_and_CFT, 
Dubedat:SLE_and_Virasoro_representations_localization, 
Benoist-Dubedat:SLE2_loop_measure, Maibach-Peltola:From_the_conformal_anomaly_to_the_Virasoro_algebra} 
for the construction of the real determinant line bundle, and to~\cite{Segal:Definition_of_CFT_collection, 
Huang:2D_Conformal_geometry_and_VOAs, Maibach-Peltola:From_the_conformal_anomaly_to_the_Virasoro_algebra} 
for the construction of the central extension.
By~\cite[Theorem~1.1]{Maibach-Peltola:From_the_conformal_anomaly_to_the_Virasoro_algebra}, 
this central extension corresponds to a cocycle~$\Gamma_\charge \in Z^2(\DefC, \R)$ whose Lie algebra cocycle is
\begin{align}
\gamma_\charge = \DVE \Gamma_\charge = \charge \Im \gelfandfuks \in Z^2(\VectC; \R),
\label{eq:previous_main}
\end{align}
which is $\charge$ times the imaginary part of the Gel'fand--Fuks cocycle~\eqref{eq:gelfandfuks}.
As explained in~\cite[Remark~1.2]{Maibach-Peltola:From_the_conformal_anomaly_to_the_Virasoro_algebra}, the group-level cocycle $\Gamma_\charge$ vanishes on $\Diffpan$, making it an element of the relative cohomology which we compute in Theorem~\ref{thm:cohomology}:
\begin{align}
[\Gamma_\charge] \in H^2(\DefC; \Diffpan; \R) \cong \R \Im \bott.
\end{align}
Thus, $\Gamma_\charge$ is cohomologous to $\charge \Im \bott$, the imaginary part of the Bott--Thurston cocycle~\eqref{eq:bott_thurston}, where the proportionality constant $\charge$ is determined by Equation~\eqref{eq:previous_main}.
The difference $\Gamma_\charge - \charge \Im \bott$ is a coboundary which vanishes on $\Diffpan$.
To resolve this difference more precisely, we will next show that the cocycle $\Gamma_\charge$ is also relative to the group $\scalinggroup$ of scaling transformations, as defined in Section~\ref{section:subgroups}.

\begin{proposition}
The cocycle $\Gamma_\charge$ vanishes on the group $\scalinggroup$ of scaling transformations.
Moreover, in the cohomology relative to both $\Diffpan$ and $\scalinggroup$, we have
\begin{align}
[\Gamma^\charge] = [\charge \Im (\bott + \rotcocycle)] \in H^2(\DefC; \Diffpan, \scalinggroup; \R).
\label{eq:gamma_bott_rotcocycle}
\end{align}
\end{proposition}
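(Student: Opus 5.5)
We split the argument into two parts: first show that $\Gamma_\charge$ restricts to zero on $\scalinggroup \times \scalinggroup$, and then identify its class using Theorem~\ref{thm:cohomology} and the van Est map.

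For the vanishing on $\scalinggroup$, we use the construction of $\Gamma_\charge$ from the real determinant line bundle in~\cite{Maibach-Peltola:From_the_conformal_anomaly_to_the_Virasoro_algebra}. There $\Gamma_\charge(\phi,\psi)$ compares determinant-line elements of annuli (or disks) bounded by $S^1$, $\psi(S^1)$ and $\phi(\psi(S^1))$, via the conformal anomaly. For $\phi = \scaling{\tau}$ and $\psi = \scaling{\sigma}$, every surface involved is a round annulus $\{e^{-2\pi a} < |z| < e^{-2\pi b}\}$, and all maps are restrictions of the single linear map $z \mapsto e^{-2\pi\tau} z$. The conformal anomaly of a round annulus under such scalings is governed by the Liouville action of the flat metric. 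Composition of scalings corresponds to concatenating round annuli, so the anomaly terms are additive in the moduli. Hence the defect $\Gamma_\charge(\scaling{\tau},\scaling{\sigma})$ vanishes.

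Equivalently, at the group level we would exploit that $\scalinggroup \cong \R$ is abelian and one-dimensional. The restriction of $\Gamma_\charge$ is a smooth $2$-cocycle on $\R$. Any normalized canonical cochain that is built naturally from the geometry and is symmetric in the two arguments must be $\Dgrp f$ with $f$ linear, and the explicit formula should give $f = 0$. This is the step I expect to be the main obstacle, since it requires going back to the explicit definition of $\Gamma_\charge$ rather than only cohomological data. If a direct computation shows only that $\Gamma_\charge|_{\scalinggroup}$ is a coboundary $\Dgrp f$ of a homomorphism $f$ (note $H^2(\scalinggroup;\R)=0$ in Figure~\ref{fig:diag_grp}), I would argue that the normalization in~\cite{Maibach-Peltola:From_the_conformal_anomaly_to_the_Virasoro_algebra} forces $f = 0$.

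For the identity~\eqref{eq:gamma_bott_rotcocycle}, once $[\Gamma_\charge] \in H^2(\DefC;\Diffpan,\scalinggroup;\R) = \R\Im\bott \oplus \R\Im\rotcocycle$, we also need $\Im\bott$ and $\Im\rotcocycle$ to be relative cocycles. Both vanish on $\Diffpan$: for $\bott$ the real-valued integrand gives $\Im\bott = 0$, and for $\rotcocycle$ the conformal radii are $1$ and the translation numbers are real. Both also vanish on $\scalinggroup$: $\bott$ has $\log$-derivatives constant, so $\dd\log\phi_2' = 0$; and $\CR$ is multiplicative on scalings, with trivial translation-number defect. By Proposition~\ref{prop:ses_hom_grp_alg} and the vanishing $\Hom(\pi_1(\DefC;\Diffpan,\scalinggroup),\R)=\{0\}$ (Figure~\ref{fig:diag_hom}), the van Est map $\DVE$ is injective on this relative $H^2$. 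It therefore suffices to compare Lie algebra classes in $H^2(\VectC;\VectR,\shrinkalg;\R)$. By~\eqref{eq:previous_main} and~\eqref{eq:cocycle_relations},
\begin{align}
\DVE \Gamma_\charge = \charge\Im\gelfandfuks = \charge\Im\big(\DVE\bott + \DVE\rotcocycle\big) = \DVE\big(\charge\Im(\bott+\rotcocycle)\big).
\end{align}
Injectivity of $\DVE$ then yields~\eqref{eq:gamma_bott_rotcocycle}.
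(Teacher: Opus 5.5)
Your overall route is the paper's. First, vanishing on $\scalinggroup$: the flat metrics on $C_{\tau_1}$ and $C_{\tau_2}$ glue to the flat metric on $C_{\tau_1+\tau_2}$, so the trivialization $\hat{\mu}^\charge(\scaling{\tau})=[g_{1+\tau}]\otimes[g_1]^\vee$ is multiplicative on scalings. Second, injectivity of $\DVE$ on $H^2(\DefC;\Diffpan,\scalinggroup;\R)$ together with $\DVE\bott=\gelfandfuks-\rotcocyclealg$ and $\DVE\rotcocycle=\rotcocyclealg$.

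However, the verification you add contains a false claim: $\Im\bott$ does \emph{not} vanish on $\Diffpan$. (The paper instead takes the generators from Theorem~\ref{thm:cohomology}.) The prime in~\eqref{eq:bott_thurston} is the complex derivative. For $\phi\in\Diffpan$ with lift $\hat{\phi}$ one has $\phi'(e^{\ii\theta})=\hat{\phi}'(\theta)\,e^{\ii(\hat{\phi}(\theta)-\theta)}$, so the integrand is not real. Writing $\hat{\Phi}=\hat{\phi}_1\circ\hat{\phi}_2$ and integrating by parts gives
\[
24\pi\,\Im\bott(\phi_1,\phi_2)=\int_0^{2\pi}\log\hat{\Phi}'\,(\hat{\phi}_2'-1)\,\dd\theta-\int_0^{2\pi}\log\hat{\phi}_2'\,(\hat{\Phi}'-1)\,\dd\theta=(\Dgrp k)(\phi_1,\phi_2),
\qquad k(\phi)=\int_0^{2\pi}\log\hat{\phi}'\,(1+\hat{\phi}')\,\dd\theta .
\]
This is invisible to $\DVE$: its mixed second-order term vanishes, consistent with $\Im\DVE\bott=0$ on $\VectR$. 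But it is not zero. For $\hat{\phi}_1'=1+\epsilon\cos\theta$ and $(\hat{\phi}_2^{-1})'=1+\delta\cos 2\theta$ one gets $24\pi\,\Im\bott(\phi_1,\phi_2)=\frac{\pi}{4}\epsilon^2\delta$ up to terms of total order four in $(\epsilon,\delta)$. So $\Im\bott$ is not literally a relative cochain.

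The repair is straightforward, and it is what $[\Im\bott]$ must mean both in Figure~\ref{fig:diag_grp} and in the statement. Use the representative $\Im\bott-\frac{1}{24\pi}\Dgrp\tilde{k}$ with $\tilde{k}(\phi)=k(\cdiffeo{\phi})$. It vanishes on $\Diffpan$, and also on $\scalinggroup$, where $\cdiffeo{\phi}=\id$. The resulting relative class does not depend on the choice of primitive $k$, since $H^1(\Diffpan;\R)=0$. Nor does it depend on the extension, as long as the extension vanishes on $\scalinggroup$. Because $\tilde{k}$ has zero differential at $\id$ along $\VectR\oplus\R\ell_0$, the correction changes the van Est image only by a relative coboundary. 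Hence your final computation and the injectivity argument go through with this representative. Your checks for $\Im\rotcocycle$ and for vanishing on $\scalinggroup$ are fine.

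Finally, drop the fallback in your first part. A coboundary $\Dgrp f$ of a homomorphism $f$ is already zero. And $H^2(\scalinggroup;\R)=0$ only yields $\restrict{\Gamma_\charge}{\scalinggroup}=\Dgrp f$ for \emph{some} $f$, which says nothing about vanishing. The vanishing genuinely rests on the explicit multiplicativity of the flat-cylinder trivialization, as in your first paragraph.
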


\begin{proof}
Let $g_\tau$ denote the Euclidean metric on the cylinder $C_\tau = S^1 \times [0, \tau]$.
Using the notation in~\cite{Maibach-Peltola:From_the_conformal_anomaly_to_the_Virasoro_algebra}, let $[g_\tau]$ denote the element in the real determinant line of $C_\tau$.
The trivialization $\hat{\mu}^\charge$ of the central extension of $\DefC$, defined with respect to the unit length cylinder $\A = C_1$, is given by~\cite[Equation~(3.21)]{Maibach-Peltola:From_the_conformal_anomaly_to_the_Virasoro_algebra}.
For a scaling transformation $\scaling{\tau} \in \DefC$ it reduces to $\hat{\mu}^\charge(\scaling{\tau}) = [g_{1 + \tau}] \otimes [g_1]^\vee$.
We now make the following geometric observation:
Gluing (or more precisely, sewing, as defined in Section~\ref{section:surfaces}) 
the cylinders $C_{\tau_1}$ and $C_{\tau_2}$ yields a cylinder $C_{\tau_1 + \tau_2}$, and the Euclidean metric $g_{\tau_1 + \tau_2}$ is obtained just by combining 
the metrics $g_{\tau_1}$ and $g_{\tau_2}$ into $g_{\tau_1} \cup g_{\tau_2}$.
This implies that the product $\hat{\mu}^\charge(\scaling{\tau_1}) \diffActing{} \hat{\mu}^\charge(\scaling{\tau_2})$ in the central extension is just given by $\hat{\mu}^\charge(\scaling{\tau_1 + \tau_2})$.
Since the cocycle of interest is defined by $\hat{\mu}^\charge(\scaling{\tau_1}) \diffActing{} \hat{\mu}^\charge(\scaling{\tau_2}) = e^{\Gamma_c(\scaling{\tau_1}, \scaling{\tau_2})} \hat{\mu}^\charge(\scaling{\tau_1 + \tau_2})$, 
we find that indeed $\Gamma_c(\scaling{\tau_1}, \scaling{\tau_2}) = 0$.

By Figure~\ref{fig:diag_grp}, the second group cohomology of $\DefC$ relative to both $\Diffpan$ and $\scalinggroup$ is generated by two cocycles: $\Im \bott$ and $\Im \rotcocycle$.
In this case, the group-level cohomology is isomorphic to the Lie algebra cohomology in Figure~\ref{fig:diag_alg}.
Note that Equation~\eqref{eq:previous_main} holds not just up to coboundaries.
Using the relations~\eqref{eq:cocycle_relations}, we have
\begin{align}
\DVE(\Gamma_\charge - \charge \Im \bott) = \gamma_\charge - (\charge \Im \gelfandfuks - \charge \Im \rotcocyclealg)  = \charge \Im \rotcocyclealg,
\end{align}
and we thus obtain the desired Equation~\eqref{eq:gamma_bott_rotcocycle} on the group level.
\end{proof}

\section{The Segal moduli space}
\label{section:surfaces}

In this section, we study Riemann surfaces with analytically parametrized boundary components, and the geometric structure of their moduli spaces $\moduligb$.
We will term these ``Segal moduli spaces'' to highlight the importance of the boundary parametrizations and their central role in Segal's axioms of CFT.
They also admit natural algebraic structure given by the sewing (gluing) and unraveling (cutting) operations, as well as actions of $\DefC$, where a complex deformation either deforms a boundary component, or deforms the surface along a simple analytically parametrized curve in the interior.

We discuss the algebra of $\moduligb$ in Sections~\ref{subsec:Moduli_spaces}--\ref{subsec:sewing}.
Then, in Section~\ref{subsec:deformations_MS}, we the endow it with a geometry given by the Fr{\"o}licher structure arising from the actions of $\DefC$, in the sense of Appendix~\ref{section:frolicher}.
The main result of this section is a Virasoro uniformization Theorem~\ref{thm:virasoro_uniformization} in Section~\ref{section:moduligb_fr},
which asserts that $\DefC$-actions just on boundary components generate in fact all of the tangent vectors of $\moduligb$.
  
In the final Section~\ref{section:hyperbolic}, we identify a subspace of hyperbolic surfaces in $\moduligb$ with a finite-dimensional moduli space $\fmoduligbzi$ of hyperbolic surfaces with one marked point on each boundary component.
By showing that the Fenchel--Nielsen coordinate functions on $\fmoduligbzi$ are also smooth on $\moduligb$ (Theorem~\ref{thm:fn_smooth}), we confirm that the newly defined Fr{\"o}licher structure is non-trivial (Corollary~\ref{corollary:fr_nontrivial}).

\subsection{Analytically parametrized boundary components}
\label{subsec:Moduli_spaces}

Let us first set up terminology and notation for the moduli spaces at hand.

\begin{definition}
\label{def:surface}
A \emph{Riemann surface $\param{\Sigma, \zeta_1, \ldots, \zeta_{\boundaries}}$ with analytically parametrized boundary components}  is a connected compact Riemann surface $\Sigma$ with $\boundaries$~boundary components enumerated $\partial_1 \Sigma, \ldots, \partial_{\boundaries} \Sigma$, and parametrized by real-analytic maps $\zeta_j \colon S^1 \to \partial_j \Sigma$ with negative orientation.
\end{definition}

If the boundary parametrizations are clear from context, we abbreviate $\param{\Sigma, \zeta_1, \ldots, \zeta_{\boundaries}}$ to just $\Sigma$, and if it is clear what surface the parametrizations $\zeta_1, \ldots, \zeta_{\boundaries}$ bound, we denote it by $\param{\blank, \zeta_1, \ldots, \zeta_{\boundaries}}$.
For example, we define the basic surfaces
\begin{align}
\disk &= \param{\cdisk, \inversion} &&\in \disks,
\label{eq:unit_disk} \\
\A_\tau
&= \param{\setsuchthatinline{z \in \C}{e^{-2\pi \tau} \leq |z| \leq 1},\quad \inversion,\quad e^{-2\pi \tau} \id}
&&\in \annuli, \qquad \tau > 0.
\label{eq:standard_annulus}
\end{align}

\begin{definition}
\label{def:surface_iso}
An \emph{isomorphism} of Riemann surfaces $\param{\Sigma_1, \zeta_1, \ldots, \zeta_{\boundaries_1}}$ and $\param{\Sigma_2, \xi_1, \ldots, \xi_{\boundaries_2}}$
with analytically parametrized boundary components is a biholomorphism $F \colon \Sigma_1 \to \Sigma_2$ such that $F \circ \zeta_j = \xi_j$ for all $j = 1, \ldots, \boundaries_1 = \boundaries_2$.
\end{definition}

\begin{remark}
Note that by the identity theorem on Riemann surfaces, there are no automorphisms of such surfaces with $\boundaries > 0$, except for the identity~\cite[Theorem~1.11]{Forster:Lectures_on_Riemann_surfaces}.
\end{remark}

For each $\genus \geq 0$, $\boundaries \geq 0$, we define the moduli spaces of Riemann surfaces with analytically parametrized boundary components, the \emph{Segal moduli spaces}, by
\begin{align}
\moduligb
=
\left\{ \textnormal{\; \parbox{0.58 \textwidth}{\raggedright
genus $\genus$ Riemann surfaces with $\boundaries$~analytically parametrized boundary components (Definition~\ref{def:surface})
}\; } \right\}_{/ \; \mathrlap{\textnormal{\raggedright isomorphisms}}}
\label{eq:def_moduligb}
\end{align}
with the notion of isomorphism as in Definition~\ref{def:surface_iso}.
We denote the isomorphism class of $\param{\Sigma, \zeta_1, \ldots, \zeta_{\boundaries}}$ by $\parameq{\Sigma, \zeta_1, \ldots, \zeta_{\boundaries}}$, or just by $[\Sigma] \in \moduligb$.

\subsection{Sewing and unraveling}
\label{subsec:sewing}

The main interest in the moduli spaces $\moduligb$ is their algebraic structure given by the sewing (or gluing) operations.
Given two Riemann surfaces $\param{\Sigma_1, \zeta_1, \ldots, \zeta_{\boundaries_1}}$ and $\param{\Sigma_2, \xi_1, \ldots, \xi_{\boundaries_2}}$ with analytically parametrized boundary components, we obtain a new surface by identifying one boundary component of each surface, labeled respectively by $1 \leq j \leq \boundaries_1$ and $1 \leq k \leq \boundaries_2$,
\begin{align}
\begin{aligned}
&\param{\Sigma_1, \zeta_1, \ldots, \zeta_{\boundaries_1}} \sew{j}{k} \param{\Sigma_2, \xi_1, \ldots, \xi_{\boundaries_2}}
\\=\; &
\param{(\Sigma_1 \sqcup \Sigma_2) /_\sim, \zeta_1, \ldots, \zeta_{j - 1}, \zeta_{j + 1}, \ldots \zeta_{\boundaries_1}, \xi_1, \ldots, \xi_{k - 1}, \xi_{k + 1}, \ldots \xi_{\boundaries_2}},
\end{aligned}
\label{eq:sewing}
\end{align}
where $\sim$ identifies the boundary components $\partial_j \Sigma_1$ and $\partial_k \Sigma_2$ via $\xi_k \circ \inversion \circ \smash{\zeta^{-1}_j}$, with $\inversion$ being the inversion defined by Equation~\eqref{eq:inversion}.
Since all boundary parametrizations are negatively oriented, the inversion $\inversion$ ensures that we are identifying the inside of $S^1$ in the parametrization of $\partial_j \Sigma_1$ to the outside of $S^1$ in the parametrization of $\partial_k \Sigma_2$.
Since isomorphisms preserve the boundary parametrizations, Equation~\eqref{eq:sewing} yields well-defined \emph{sewing operations} on the moduli spaces, given by
\begin{align}
\begin{aligned}
\moduli{\genus_1}{\boundaries_1} \times \moduli{\genus_2}{\boundaries_2} &\longrightarrow \moduli{\genus_1 + \genus_2}{\boundaries_1 + \boundaries_2 - 2} , \\
([\Sigma_1], [\Sigma_2])
&\longmapsto
[\Sigma_1] \sew{j}{k} [\Sigma_2] = [\Sigma_1 \sew{j}{k} \Sigma_2].
\end{aligned}
\label{eq:sewing_moduligb}
\end{align}
Analogously, we define a \emph{self-sewing} operation
\begin{align}
\begin{aligned}
\moduli{\genus}{\boundaries} &\longrightarrow \moduli{\genus + 1}{\boundaries - 2} \\
[\Sigma] ,
&\longmapsto
\sewself{j}{k} [\Sigma] = [\sewself{j}{k} \Sigma] ,
\end{aligned}
\label{eq:self_sewing_moduligb}
\end{align}
where $\sewself{j}{k} \Sigma = \param{\Sigma /_\sim, \zeta_1, \ldots, \zeta_{j - 1}, \zeta_{j + 1}, \ldots , \zeta_{k - 1}, \zeta_{k + 1}, \ldots \zeta_{\boundaries}}$, and $\sim$ now identifies the boundary components $\partial_j \Sigma$ and $\partial_k \Sigma$ via $\zeta_k \circ \inversion \circ \smash{\zeta^{-1}_j}$.

\begin{remark}
By sewing a unit disk $\disk$ as defined~\eqref{eq:unit_disk} to each of the analytically para\-metrized boundary components of a Riemann surface $\Sigma$, we obtain the capped surface\footnote{The notation with the underline in Equation~\eqref{eq:capped} stands for a multiple application of sewing operations.
Sometimes, especially when applying many sewing operations in a sequence, it is more convenient to postpone this relabeling until after the last sewing operation is applied.
(Note that this is valid since the sewing operation is associative.)
For example, in the $\boundaries = 2$ case of Equation~\eqref{eq:capped}, we should have written $(\Sigma \sew{1}{1} \disk) \sew{1}{1} \disk$ instead of $\Sigma \sew{1}{1} \disk \sew{2}{1} \disk$.}
\begin{align}
\Sigma \sewall \underline{\disk} = \Sigma \sew{1}{1} \disk \; \cdots \sew{\boundaries}{1} \disk.
\label{eq:capped}
\end{align}
The choice of negative orientation of the boundary parametrizations is such that these parametrizations labeled by $1 \leq j \leq \boundaries$ extend to conformal maps $\zeta_j \colon \cdisk \to \Sigma \sewall \underline{\disk}$ mapping into the respective cap by the identity map.
Remembering these conformal maps allows one to recover the surface with boundary $\Sigma$ by ``cutting out'' their images.
See~\cite{Radnell-Schippers:Fiber_structure_and_local_coordinates_for_the_Teichmuller_space_of_a_bordered_Riemann_surface, 
RSS:Quasiconformal_Teichmuller_theory_as_analytical_foundation_for_CFT} for more details on this correspondence.
\end{remark}

We let $\ring{\Sigma}$ denote the interior of the surface $\Sigma$.
Given a separating simple real-analytic loop $\vartheta \colon S^1 \to \ring{\Sigma}$, we define the \emph{unraveling} (or cutting) operation along $\vartheta$ in the following manner. 
Denote by $\unravelplus{\vartheta} \Sigma$ and $\unravelminus{\vartheta} \Sigma$ the surfaces which are the closures of the connected components of $\Sigma \setminus \vartheta(S^1)$ in $\Sigma$ such that the seams parametrized by $\vartheta$ are respectively positively and negatively oriented.
Both come with analytically parametrized boundary components by carrying over the corresponding boundary parametrizations from $\Sigma$, and relabeling them in ascending order.
The new boundary components receive the next available index, and are parametrized respectively by $\vartheta \circ \inversion$ and $\vartheta$ such that both are negatively oriented.
Let $a_1 < \ldots < a_{j - 1}$ be the labels the boundary components which end up in $\unravelplus{\vartheta} \Sigma$, and $b_1 < \ldots < b_{k-1}$ those in $\unravelminus{\vartheta} \Sigma$.
Then, the unraveled surfaces are given by
\begin{align}
\unravelplus{\vartheta} \Sigma = \param{\unravelplus{\vartheta} \Sigma, \zeta_{a_1}, \ldots, \zeta_{a_{j - 1}}, \vartheta \circ \inversion} 
\qquad \textnormal{and} \qquad
\unravelminus{\vartheta} \Sigma = \param{\blank, \zeta_{b_1}, \ldots, \zeta_{b_{k - 1}}, \vartheta}.
\end{align}
Note that the unraveling operation is constructed precisely such that
\begin{align}
(\unravelplus{\vartheta} \Sigma) \sew{j}{k} (\unravelminus{\vartheta} \Sigma) = \Sigma ,
\end{align}
where $j$ and $k$ are respectively the labels of the seam.

For a non-separating simple real-analytic loop $\vartheta \colon S^1 \to \ring{\Sigma}$, let $\unravel{\vartheta} \Sigma$ denote the closure of $\Sigma \setminus \vartheta(S^1)$ 
where the boundary curve is doubled (in the sense of prime ends). This is a surface of genus $\genus - 1$ and with $\boundaries + 2$ boundary components, which we parametrize as
\begin{align}
\unravel{\vartheta} \Sigma = \param{\unravel{\vartheta} \Sigma, \zeta_1, \ldots, \zeta_{\boundaries}, \vartheta \circ \inversion, \vartheta},
\label{eq:unravel_nonseparating}
\end{align}
such that all parametrizations are negatively oriented.
With this definition, we have
\begin{align}
\sewself{\boundaries+1}{\boundaries+2} (\unravel{\vartheta} \Sigma) = \Sigma.
\end{align}

\subsection{Boundary and interior deformations}
\label{subsec:deformations_MS}

\begin{figure}
\centering
\begin{tikzpicture}
	\def\A{(58.6043, -23.569) circle (1.0cm)}
	\def\B{(57.0171, -24.4545).. controls (56.9777, -24.3013) and (56.952, -24.0512) .. (56.952, -23.7688).. controls (56.952, -23.4911) and (56.9768, -23.2447) .. (57.0151, -23.0908).. controls (56.8665, -23.2207) and (56.7941, -23.3924) .. (56.7997, -23.6985).. controls (56.8057, -24.0289) and (56.8115, -24.274) .. (57.0171, -24.4545) -- cycle(57.1956, -23.0689).. controls (57.1445, -23.0894) and (57.0547, -23.1207) .. (57.0719, -23.1714).. controls (57.0876, -23.218) and (57.1703, -23.1794) .. (57.2195, -23.1763).. controls (57.2122, -23.1364) and (57.2042, -23.1004) .. (57.1956, -23.0689) -- cycle(57.2586, -23.5413).. controls (57.2586, -23.5413) and (57.2586, -23.5413) .. (57.2586, -23.5413).. controls (57.1451, -23.6458) and (57.0621, -23.7734) .. (57.0542, -23.9565).. controls (57.0475, -24.1101) and (57.1091, -24.3236) .. (57.2141, -24.389).. controls (57.245, -24.2348) and (57.2644, -24.014) .. (57.2644, -23.7688).. controls (57.2644, -23.6899) and (57.2624, -23.6137) .. (57.2587, -23.5413) -- cycle}
	\def\C{(57.3072, -24.6322).. controls (57.3624, -24.6325) and (57.4257, -24.6114) .. (57.4062, -24.5653).. controls (57.3752, -24.4923) and (57.2879, -24.435) .. (57.2142, -24.389).. controls (57.2142, -24.389) and (57.2141, -24.389) .. (57.2141, -24.389).. controls (57.1863, -24.5281) and (57.1491, -24.6132) .. (57.1082, -24.6132).. controls (57.1811, -24.6251) and (57.2459, -24.6303) .. (57.3072, -24.6322) -- cycle(57.1082, -24.6132).. controls (57.1283, -24.6132) and (57.1476, -24.5925) .. (57.1653, -24.555).. controls (57.1069, -24.5239) and (57.058, -24.4905) .. (57.0171, -24.4545).. controls (57.0427, -24.5544) and (57.0742, -24.6132) .. (57.1082, -24.6132) -- cycle(57.0151, -23.0908).. controls (57.0587, -23.0526) and (57.1089, -23.0181) .. (57.1657, -22.9848).. controls (57.1516, -22.9546) and (57.1366, -22.934) .. (57.1209, -22.9271).. controls (57.1167, -22.9262) and (57.1125, -22.9252) .. (57.1083, -22.9243).. controls (57.1083, -22.9243) and (57.1082, -22.9243) .. (57.1082, -22.9243).. controls (57.0733, -22.9243) and (57.0411, -22.9862) .. (57.0151, -23.0908) -- cycle(57.1209, -22.9272).. controls (57.1366, -22.9341) and (57.1516, -22.9546) .. (57.1657, -22.9848).. controls (57.1657, -22.9848) and (57.1657, -22.9848) .. (57.1657, -22.9848).. controls (57.1762, -23.0074) and (57.1863, -23.0349) .. (57.1956, -23.0689).. controls (57.1965, -23.0685) and (57.1975, -23.0681) .. (57.1984, -23.0678).. controls (57.2497, -23.047) and (57.3178, -23.0359) .. (57.3365, -22.9734).. controls (57.2655, -22.9591) and (57.1917, -22.9427) .. (57.1209, -22.9272) -- cycle(57.2194, -23.1763).. controls (57.2377, -23.2767) and (57.2514, -23.4017) .. (57.2586, -23.5413).. controls (57.3366, -23.4694) and (57.4015, -23.3973) .. (57.3985, -23.3062).. controls (57.3947, -23.1911) and (57.3016, -23.1712) .. (57.2194, -23.1763) -- cycle}
	\def\D{(59.729, -23.3623) -- (59.6388, -23.3545).. controls (59.5656, -23.3481) and (59.434, -23.3712) .. (59.336, -23.4138).. controls (59.1598, -23.4905) and (58.9959, -23.6077) .. (58.8711, -23.756) -- (58.7914, -23.8506)(59.8713, -23.1291).. controls (59.8655, -23.168) and (59.8446, -23.2057) .. (59.8147, -23.2489).. controls (59.7557, -23.3344) and (59.6815, -23.425) .. (59.6036, -23.4934).. controls (59.5036, -23.5812) and (59.3965, -23.65) .. (59.2796, -23.7085).. controls (59.1709, -23.7628) and (59.047, -23.8133) .. (58.9274, -23.8363).. controls (58.8131, -23.8582) and (58.718, -23.8552) .. (58.6104, -23.8265)}
	\def\EA{(57.1661, -22.9845).. controls (56.916, -23.1313) and (56.7925, -23.3023) .. (56.7997, -23.6985).. controls (56.8069, -24.0948) and (56.8138, -24.3683) .. (57.1653, -24.555)}
	\def\EB{(57.3365, -22.9733).. controls (57.3178, -23.0359) and (57.2497, -23.047) .. (57.1984, -23.0678).. controls (57.1479, -23.0883) and (57.0544, -23.1197) .. (57.0719, -23.1714).. controls (57.0877, -23.218) and (57.1703, -23.1794) .. (57.2194, -23.1763).. controls (57.3016, -23.1712) and (57.3946, -23.1911) .. (57.3984, -23.3062).. controls (57.4014, -23.3973) and (57.3366, -23.4694) .. (57.2586, -23.5413).. controls (57.1451, -23.6458) and (57.0621, -23.7734) .. (57.0542, -23.9565).. controls (57.0475, -24.1101) and (57.1091, -24.3236) .. (57.2142, -24.389).. controls (57.2879, -24.435) and (57.3752, -24.4923) .. (57.4061, -24.5653).. controls (57.426, -24.6121) and (57.3606, -24.6331) .. (57.3049, -24.6322)}
	\def\EC{(57.2643, -23.7688).. controls (57.2643, -24.2351) and (57.1944, -24.6132) .. (57.1082, -24.6132).. controls (57.0219, -24.6132) and (56.952, -24.2351) .. (56.952, -23.7688).. controls (56.952, -23.3024) and (57.0219, -22.9244) .. (57.1082, -22.9244).. controls (57.1944, -22.9244) and (57.2643, -23.3024) .. (57.2643, -23.7688) -- cycle}
	\def\ED{(57.1082, -22.9243).. controls (57.2638, -22.9582) and (57.4357, -22.9979) .. (57.5604, -23.0089).. controls (57.6852, -23.0199) and (57.8333, -22.9978) .. (57.9624, -22.9258).. controls (58.0904, -22.8544) and (58.1982, -22.7621) .. (58.3076, -22.6662).. controls (58.432, -22.5571) and (58.5505, -22.4319) .. (58.6739, -22.3203).. controls (58.9729, -22.0497) and (59.346, -21.811) .. (59.7586, -21.9389).. controls (60.0127, -22.0176) and (60.2149, -22.2653) .. (60.3019, -22.4599).. controls (60.4059, -22.6925) and (60.478, -22.9605) .. (60.484, -23.2066).. controls (60.5089, -23.5315) and (60.4711, -23.8572) .. (60.3621, -24.1645).. controls (60.2385, -24.4989) and (59.9875, -24.8304) .. (59.6248, -24.9189).. controls (59.3383, -24.9888) and (59.0678, -24.867) .. (58.7936, -24.8001).. controls (58.5818, -24.7485) and (58.3689, -24.6876) .. (58.1526, -24.6576).. controls (57.9682, -24.6231) and (57.7826, -24.6279) .. (57.5962, -24.6294).. controls (57.4334, -24.6297) and (57.2986, -24.6443) .. (57.1082, -24.6132)}
	\def\F{(59.2186, -24.7852).. controls (59.338, -24.7063) and (59.367, -24.5811) .. (59.245, -24.3368).. controls (59.1229, -24.0925) and (59.0833, -24.0847) .. (59.0778, -23.8401).. controls (59.0723, -23.5955) and (59.2905, -23.3307) .. (59.3622, -23.1457).. controls (59.434, -22.9607) and (59.4843, -22.7562) .. (59.4585, -22.6219).. controls (59.4328, -22.4875) and (59.3365, -22.3752) .. (59.2107, -22.3185).. controls (59.085, -22.2618) and (59.0187, -22.3532) .. (58.9415, -22.5117).. controls (58.8643, -22.6701) and (58.7548, -22.9039) .. (58.554, -22.9205).. controls (58.3531, -22.9371) and (58.0965, -22.6967) .. (57.8933, -22.7329).. controls (57.6902, -22.7691) and (57.5561, -22.9153) .. (57.4642, -23.0767).. controls (57.3723, -23.2381) and (57.3342, -23.5842) .. (57.3655, -23.7649).. controls (57.3969, -23.9456) and (57.4162, -24.0491) .. (57.5376, -24.1128).. controls (57.6589, -24.1766) and (57.8368, -24.0473) .. (57.9676, -24.0503).. controls (58.0984, -24.0533) and (58.1957, -24.0552) .. (58.3116, -24.1168).. controls (58.4275, -24.1783) and (58.5343, -24.3318) .. (58.6204, -24.4256).. controls (58.7066, -24.5194) and (58.719, -24.5624) .. (58.8315, -24.6758).. controls (58.9441, -24.7892) and (59.0992, -24.8642) .. (59.2186, -24.7852) -- cycle}

	\useasboundingbox (-7,-2.5) rectangle (7,3.5);
	\node (pic_left) at (-3, 0) {};
    \begin{scope}[shift=(pic_left), scale=1.5]	
		\begin{scope}[shift={(-58.6043, 23.569)}]
			\begin{scope}
			\clip[]
				\A;
			\path[fill=color2, even odd rule, pattern=north west lines, even odd rule, pattern color = color2]
				\A
				\F;
			\end{scope}
			\begin{scope}
			\path[fill=color2, even odd rule, pattern=north east lines, even odd rule, pattern color = color2]
				\A
				\F;
			\end{scope}
			\path[draw, very thick]
				\A;
			\path[draw=color1, very thick]
				\F;
		\end{scope}
        \node[circle, fill, inner sep=1pt, outer sep=0.0cm]
            (zero_left) at (0, 0) {};
		\node[text=color1, anchor=west]
			(img_left) at (0.8, 1.2) {$\phi(S^1)$};
        \node[draw, fill=white, rounded corners=2pt, inner sep=3pt, anchor=west]
			(U_left) at (0.9, -0.75) {$\DefCnbhd{\phi}$};
        \node[circle, fill, inner sep=0.8pt, outer sep=0.0cm]
            (U_pt_left) at (0.5, -1.05) {};
        \node[circle, fill, inner sep=0.8pt, outer sep=0.0cm]
            (U_pt_left_in) at (0.7, -0.5) {};
    \end{scope}

    \node (pic_right) at (4, 0.5) {};
    \begin{scope}[shift=(pic_right), scale=1.5]
		\begin{scope}[shift={(-58.6043, 23.569)}]
			\path[fill=color2, even odd rule, pattern=north west lines, even odd rule, pattern color = color2]
				\B;
			\path[fill=color2, even odd rule, pattern=north east lines, even odd rule, pattern color = color2]
				\B;
			\path[fill=color2, even odd rule, pattern=north west lines, even odd rule, pattern color = color2]
				\C;
			\path[draw=color1, very thick]
				\EA;
			\path[draw=color1, very thick]
				\EB;
			\path[draw, very thick]
				\EC;
			\path[draw, very thick]
				\ED;
			\path[draw, very thick]
				\D;
		\end{scope}
		\node[text=color1, anchor=west]
			(img_right) at (-1.2, 0.3) {$\zeta_1\big(\phi(S^1)\big)$};
    \end{scope}

    \path[->] (-0.5, 0) edge[above] node {$\zeta_1$} (0.5, 0);
    \path[draw] (U_pt_left) -- (U_left);
    \path[draw] (U_pt_left_in) -- (U_left);
    
\end{tikzpicture}
\caption{In this illustration, a complex deformation $\phi \in \DefC$ acts on a genus one surface $\param{\Sigma, \zeta_1}$ by deformation of the single boundary component parametrized by $\zeta_1$.
This grows the surface along parts of the boundary, and shrinks it elsewhere.
}
\label{fig:boundary}
\end{figure}

Complex deformations of the unit circle, as defined in Section~\ref{section:complex_deformations}, act on the moduli spaces $\moduligb$ in $\boundaries$ ways by deformation of one boundary component, or internally by deforming along a simple real-analytic loop.
Just like composition and inversion in $\DefC$, these actions are only partially defined.
Let us first define them in a pointwise manner.

\begin{definition}
\label{def:action}
Fix $1 \leq j \leq \boundaries$. 
A complex deformation $\phi \in \DefC$ acts on a Riemann surface $\param{\Sigma, \zeta_1, \ldots, \zeta_{\boundaries}}$ with analytically parametrized boundary components 
by deformation of the $j$th boundary component if $\zeta_j \colon S^1 \to \Sigma \sew{j}{1} \disk$ as a function into the capped surface extends univalently to an open neighborhood of $\DefCnbhd{\phi}$ in $\C \setminus \{0\}$.
Then, denoting the connected component of $(\Sigma \sew{j}{1} \disk) \setminus \phi(\zeta_1(S^1))$ containing $0 \in \disk$ by $V$, the deformed surface is defined as
\begin{align}
\Sigma \diffActing{j} \phi = \param{
(\Sigma \sew{j}{1} \disk) \setminus V,
\zeta_1, \ldots, \zeta_{j - 1}, \zeta_j \circ \phi, \zeta_{j + 1}, \ldots, \zeta_\boundaries}.
\label{eq:deformed_surface}
\end{align}
\end{definition}

Similar to the composition of complex deformations, the requirement of univalence on $\DefCnbhd{\phi}$ given by Equation~\eqref{eq:def_defcnbhd} guarantees that the action is uniquely defined.
For example, a complex deformation may fail to act on a boundary component either because $\zeta_j$ has a singularity between $\phi(S^1)$ and $S^1$, or because the deformation would cause boundary curves to intersect.
Note that a diffeomorphism $\phi \in \Diffpan$ always acts on a boundary component, since it just reparametrizes it.
Besides changing the boundary parametrization, complex deformations can add or subtract parts of the surface as shown in Figure~\ref{fig:boundary}.
Let us check that the action is well-defined on the Segal moduli spaces $\moduligb$.

\begin{proposition}
\label{prop:action_assoc}
Given an isomorphism $F \colon \Sigma_1 \to \Sigma_2$ of Riemann surfaces with analytically parametrized boundary components such that $\phi \in \DefC$ 
acts on their $j$th boundary components, the surfaces $\Sigma_1 \diffActing{j} \phi$ and $\Sigma_2 \diffActing{j} \phi$ are isomorphic.
\end{proposition}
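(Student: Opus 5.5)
The plan is to extend the given isomorphism $F \colon \Sigma_1 \to \Sigma_2$ to the capped surfaces, and then restrict it to the deformed surfaces. Write $\Sigma_1 = \param{\Sigma_1, \zeta_1, \ldots, \zeta_\boundaries}$ and $\Sigma_2 = \param{\Sigma_2, \xi_1, \ldots, \xi_\boundaries}$, so that $F \circ \zeta_k = \xi_k$ for all $k$.

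First, I would extend $F$ to a biholomorphism
\begin{align}
\hat F \colon \Sigma_1 \sew{j}{1} \disk \to \Sigma_2 \sew{j}{1} \disk.
\end{align}
On $\Sigma_1$ it is $F$, and on the cap $\disk$ it is the identity. These two pieces are compatible with the identifications in~\eqref{eq:sewing}, because the $j$th boundary of $\Sigma_1$ is glued to $\partial \disk$ via $\inversion \circ \inversion \circ \zeta_j^{-1}$ and that of $\Sigma_2$ via $\inversion \circ \inversion \circ \xi_j^{-1} = \inversion \circ \inversion \circ \zeta_j^{-1} \circ F^{-1}$. So $\hat F$ is a well-defined homeomorphism, and it is holomorphic away from the seam.

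Holomorphy across the seam is the one point needing care. In the local coordinate given by $\zeta_j$ extended to a neighborhood of $S^1$ (see the remark on capped surfaces), $\hat F$ reads as the identity: indeed $\hat F \circ \zeta_j = \xi_j$ holds on the $\Sigma$-side by assumption and on the cap-side trivially. Hence $\hat F \circ \zeta_j$ and $\xi_j$ agree on an open set, and by the identity theorem $\hat F$ is holomorphic near the seam. Alternatively, continuity plus holomorphy off a real-analytic curve suffices by Morera/Painlevé-type removability. In either case we also get $\hat F \circ \zeta_j = \xi_j$ on the whole domain of univalent extension of $\zeta_j$ as a map into the capped surface. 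In particular, $\xi_j$ extends univalently to a neighborhood of $\DefCnbhd{\phi}$ exactly when $\zeta_j$ does, which is consistent with the hypothesis.

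It then follows that $\hat F(\zeta_j(\phi(S^1))) = \xi_j(\phi(S^1))$. Since $\hat F$ fixes $0 \in \disk$, it maps the component $V_1$ of $(\Sigma_1 \sew{j}{1}\disk) \setminus \zeta_j(\phi(S^1))$ containing $0$ onto the corresponding component $V_2$ for $\Sigma_2$. Thus $\hat F$ restricts to a biholomorphism $(\Sigma_1 \sew{j}{1} \disk)\setminus V_1 \to (\Sigma_2 \sew{j}{1}\disk)\setminus V_2$. It satisfies $\hat F \circ \zeta_k = F \circ \zeta_k = \xi_k$ for $k \neq j$, and $\hat F \circ (\zeta_j \circ \phi) = \xi_j \circ \phi$ by the identity on the extended domain. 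By Definition~\ref{def:surface_iso}, this restriction is therefore an isomorphism $\Sigma_1 \diffActing{j} \phi \to \Sigma_2 \diffActing{j} \phi$.

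I expect the only real obstacle to be the regularity of $\hat F$ across the seam, which the identity-theorem argument in local coordinates handles.
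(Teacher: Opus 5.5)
Your proof is correct and follows the same route as the paper. You extend $F$ biholomorphically to the capped surfaces (by the identity on the cap), note that this extension carries the component $V$ containing $0$ for $\Sigma_1$ onto the one for $\Sigma_2$, and use uniqueness of analytic continuation to get $\hat F\circ\zeta_j\circ\phi=\xi_j\circ\phi$; you are simply writing out the continuation of $F$ that the paper asserts in one line. One small ordering point: $F\circ\zeta_j=\xi_j$ is only assumed on $S^1$, so it is the identity theorem that carries it to the $\Sigma$-side collar, after which $\hat F$ reads as the identity in the charts $\zeta_j,\xi_j$ and is therefore holomorphic --- or you can use your removability argument directly.
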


\begin{proof}
Since the sewing operation is well-defined on moduli spaces, the capped surfaces $\Sigma_1 \sew{j}{1} \disk$ and $\Sigma_2 \sew{j}{1} \disk$ are isomorphic as well.
The isomorphism is given by a biholomorphic continuation of $F$, which identifies the sets $V$ in Definition~\ref{def:action} respectively associated to $\Sigma_1$ and $\Sigma_2$.
Finally, since $F \circ \zeta_j = \xi_j$, and by uniqueness of the analytic continuation of $\zeta_j$, we have $F \circ \zeta_j \circ \phi = \xi_j \circ \phi$ on $S^1$.
Thus, $\Sigma_1 \diffActing{j} \phi$ and $\Sigma_2 \diffActing{j} \phi$ are isomorphic by Definition~\ref{def:surface_iso}.
\end{proof}

\begin{proposition}
\label{prop:action_compat}
Given a Riemann surface with analytically parametrized boundary components, complex deformations $\phi_1, \phi_2 \in \DefC$, and $j \neq k$, we have
\begin{align}\label{eq:action_compat}
(\Sigma \diffActing{j} \phi_1) \diffActing{k} \phi_2
= (\Sigma \diffActing{k} \phi_2) \diffActing{j} \phi_1, 
\qquad \textnormal{and} \qquad
(\Sigma \diffActing{j} \phi_1) \diffActing{j} \phi_2
= \Sigma \diffActing{j} (\phi_1 \circ \phi_2),
\end{align}
if all respective compositions and actions exist.
\end{proposition}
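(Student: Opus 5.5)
Both identities will be proved by building an explicit biholomorphism between the two sides that matches the boundary parametrizations, which is exactly an isomorphism in the sense of Definition~\ref{def:surface_iso}. For the first identity ($j \neq k$), cap both boundary components, i.e.\ work in $\hat\Sigma = \Sigma \sew{j}{1} \disk \sew{k}{1} \disk$. Definition~\ref{def:action} says that $\Sigma \diffActing{j} \phi_1$ is obtained from $\Sigma \sew{j}{1}\disk$ by removing the component $V_j$ bounded by $\zeta_j(\phi_1(S^1))$. This operation takes place inside a neighborhood of $\zeta_j(\DefCnbhd{\phi_1})$ in the $j$th cap region, and it leaves $\partial_k\Sigma$ and the extension of $\zeta_k$ into its own cap unchanged. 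The one point to verify is that the extension of $\zeta_k$ needed to act by $\phi_2$ is the same germ on $\Sigma$ and on $\Sigma \diffActing{j} \phi_1$, namely the unique analytic continuation of $\zeta_k$ from $S^1$. This holds because both are continuations of the same real-analytic map. Both iterated surfaces are then realized as the same subset $\hat\Sigma \setminus (V_j \cup V_k)$, with parametrizations $\zeta_j\circ\phi_1$ and $\zeta_k\circ\phi_2$ and the others unchanged. The identity map is therefore the required isomorphism. Existence of all actions guarantees that the two deformed curves stay disjoint, so $V_j$ and $V_k$ are well defined in either order.

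For the second identity, set $\Sigma' = \Sigma \diffActing{j}\phi_1$, whose $j$th parametrization is $\zeta_j\circ\phi_1$. Acting by $\phi_2$ requires continuing $\zeta_j\circ\phi_1$ univalently to a neighborhood of $\DefCnbhd{\phi_2}$ inside the capped surface $\Sigma'\sew{j}{1}\disk$. The key step is to identify the capped surfaces: $\Sigma'\sew{j}{1}\disk$ and $\Sigma\sew{j}{1}\disk$ should be canonically biholomorphic, with the cap of $\Sigma'$ embedded via the continuation of $\zeta_j\circ\phi_1$ to $\cdisk$ (where it is defined). Under this identification the continuation of $\zeta_j\circ\phi_1$ becomes the composition of the continuation of $\zeta_j$ with the continuation of $\phi_1$ on $\DefCnbhd{\phi_2}$, and the latter is exactly the one used to define $\phi_1\circ\phi_2$ (composability). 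Then $(\zeta_j\circ\phi_1)\circ\phi_2 = \zeta_j\circ(\phi_1\circ\phi_2)$ on $S^1$. The removed component is the one bounded by $\zeta_j(\phi_1(\phi_2(S^1)))$ containing the cap center, so both sides equal the same subset of $\Sigma\sew{j}{1}\disk$ with the same parametrizations.

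I expect the main obstacle to be justifying the identification of capped surfaces and the uniqueness of the analytic continuations. The issue is that $\zeta_j$ might be continued along $\DefCnbhd{\phi_1\circ\phi_2}$ along a route not covered by $\DefCnbhd{\phi_1}$ or $\phi_1(\DefCnbhd{\phi_2})$. I plan to handle this with the identity theorem, arguing near a point as in the proof of Proposition~\ref{prop:defc_inv_comp}\,(\ref{item:defc_inv_comp3}). Split into cases according to whether $\phi_2(S^1)$ meets $S^1$ and whether $\phi_1(\DefCnbhd{\phi_2})$ meets $S^1$. In each case, find a point where all the continuations involved are evaluated on a common region and therefore agree. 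Univalence on the relevant neighborhoods then forces the continuations to coincide globally. Compatibility with isomorphism classes follows from Proposition~\ref{prop:action_assoc}.
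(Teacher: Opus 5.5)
Your handling of the first identity and your overall plan for the second match the paper. The paper's proof only says that the two actions change $\Sigma$ in disjoint regions, and then reruns the identity-theorem argument of Proposition~\ref{prop:defc_inv_comp}\,(\ref{item:defc_inv_comp3}) with $\zeta_j$ in place of $\phi_1$.

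\textbf{The step you call key is false as stated.} Put $\Sigma' = \Sigma \diffActing{j} \phi_1$. The capped surfaces $\Sigma' \sew{j}{1} \disk$ and $\Sigma \sew{j}{1} \disk$ are in general \emph{not} canonically biholomorphic. A biholomorphism that is the identity on $\Sigma'$ must map the cap $\cdisk$ conformally onto the removed disk $\overline{V_j}$ with boundary values $\zeta_j\circ\phi_1$. So $\zeta_j\circ\phi_1$ would have to continue univalently over all of $\cdisk$.

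This already fails for $\phi_1 \in \Diffpan$ unless $\phi_1$ is a M\"obius automorphism of $\disk$. In that case $\Sigma'$ is $\Sigma$ with its $j$th boundary reparametrized, and $\overline{V_j}$ is the old cap. The map would then be a conformal automorphism of $\cdisk$ with boundary values $\phi_1$. In general the two capped surfaces are weldings along different gluing maps.

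Consequently $\Sigma' \diffActing{j} \phi_2$, which lives in $\Sigma' \sew{j}{1} \disk$, is not a priori a subset of $\Sigma \sew{j}{1} \disk$. Your conclusion that both sides are the same subset there is therefore unsupported.

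\textbf{How to repair it.} Only a partial identification is needed, and composability supplies exactly that.
In $\Sigma' \sew{j}{1} \disk$, the continuation of $\zeta_j\circ\phi_1$ into the new cap is simply the inclusion of $\cdisk$. The part of the new cap that survives in $\Sigma' \diffActing{j} \phi_2$ lies in $\DefCnbhd{\phi_2}\cap\cdisk$. On that set the univalent continuation $\hat\phi_1$ of $\phi_1$, given by composability, is defined.

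Define $F$ on $\Sigma' \diffActing{j} \phi_2$ as the identity on the part lying in $\Sigma'$, and as $\zeta_j\circ\hat\phi_1$ on the surviving part of the new cap. Then $F$ has the following properties:
\begin{itemize}
\item It is holomorphic across the seam, since the sewing there is by $\zeta_j\circ\phi_1$.
\item It is injective. Univalence of $\hat\phi_1$ on a neighborhood of $\DefCnbhd{\phi_2}\supset S^1$ keeps the image of the cap part inside the Jordan curve $\phi_1(S^1)$. Hence $\zeta_j\circ\hat\phi_1$ lands in $V_j$, which is disjoint from $\Sigma'$.
\item It never touches the removed disk, which may contain singularities of the continuation of $\phi_1$.
\end{itemize}

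Your identity-theorem case analysis then shows that $F$ carries $(\zeta_j\circ\phi_1)\circ\phi_2$ to $\zeta_j\circ(\phi_1\circ\phi_2)$ and has image $\Sigma \diffActing{j} (\phi_1\circ\phi_2)$. So $F$ is the required isomorphism. This local identification is all that the paper's one-line argument implicitly uses.
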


\begin{proof}
If $\phi_1$ and $\phi_2$ act at distinct boundary components and may act in both orders, the action changes $\Sigma$ in disjoint regions, proving the first equality in~\eqref{eq:action_compat}.
For the second equality in~\eqref{eq:action_compat}, it is assumed that $\phi_1$ and $\phi_2$ are composable, that is, $\phi_1$ extends univalently without zeros to $\DefCnbhd{\phi_2}$ (cf.~Definition~\ref{def:action}).
Hence, the proof of Item~\ref{item:defc_inv_comp3} of Proposition~\ref{prop:defc_inv_comp} applies also to $\zeta_j$, $\phi_1$, and $\phi_2$, 
showing that $(\zeta_j \circ \phi_1) \circ \phi_2 = \zeta_j \circ (\phi_1 \circ \phi_2)$, where the compositions are defined using the respective analytic continuations.
\end{proof}

We denote the set of pairs $([\Sigma], \phi) \in \moduligb \times \DefC$ such that $\phi$ acts on the $j$th boundary component of any representative $\Sigma$ of $[\Sigma]$ by $\compdef{\genus}{\boundaries}{j}$.
By Propositions~\ref{prop:action_assoc} and~\ref{prop:action_compat}, the actions are defined on the moduli spaces by
\begin{align}
\begin{aligned}
\compdef{\genus}{\boundaries}{j} &\longrightarrow \moduligb \\
([\Sigma], \phi) &\longmapsto [\Sigma] \diffActing{j} \phi = [\Sigma \diffActing{j} \phi]
\end{aligned}
\label{eq:action}
\end{align}
Similar to Proposition~\ref{prop:defc_local}, for each smooth curve $\gamma \in \curves{\DefC}$ rooted at $\gamma(0) = \id$ and fixed $[\Sigma] \in \moduligb$, there exists $\varepsilon > 0$ such that $([\Sigma], \gamma(t)) \in \compdef{\genus}{\boundaries}{j}$ for $t \in (-\varepsilon, \varepsilon)$.

We also define a deformation of a Riemann surface with analytically parametrized boundary components by a complex deformation acting along a simple real-analytic loop in the interior of the surface by combining the unraveling, boundary deformation, and sewing:

\begin{definition}
A complex deformation $\phi \in \DefC$ acts on a Riemann surface $\Sigma$ with analytically parametrized boundary components by deformation along a simple real-analytic loop $\vartheta \colon S^1 \to \Sigma$, respectively in the cases where $\vartheta$ is separating or non-separating, by
\begin{align}
\Sigma \diffActing{\vartheta} \phi = 
(\unravelplus{\vartheta} \Sigma) \sew{j}{k} \big((\unravelminus{\vartheta} \Sigma) \diffActing{k} \phi\big) \qquad \textnormal{or} \qquad
\Sigma \diffActing{\vartheta} \phi = 
\sewself{\boundaries + 1}{\boundaries + 2} \big((\unravel{\vartheta} \Sigma) \diffActing{\boundaries + 2} \phi\big) .
\label{eq:action_interior}
\end{align}
This operation is defined if respectively $([\unravelminus{\vartheta} \Sigma], \phi) \in \compdef{\genus}{k}{k}$ or $([\unravel{\vartheta} \Sigma], \phi) \in \compdef{\genus}{\boundaries + 2}{\boundaries + 2}$.
\end{definition}
Note that the unraveling operation is not a well-defined operation on the equivalence class $[\Sigma] \in \moduligb$.
However, if $F \colon \Sigma_1 \to \Sigma_2$ is an isomorphism, then we have
\begin{align}
[\unravelplus{\vartheta} \Sigma_1] = [\unravelplus{(F \circ \vartheta)} \Sigma_2], \quad
[\unravelminus{\vartheta} \Sigma_1] = [\unravelminus{(F \circ \vartheta)} \Sigma_2], \quad \textnormal{or} \quad
[\unravel{\vartheta} \Sigma_1] = [\unravel{(F \circ \vartheta)} \Sigma_2] .
\end{align}
For the action of a complex deformation, this implies that
\begin{align}
[\Sigma_1 \diffActing{\vartheta} \phi] = [\Sigma_2 \diffActing{F \circ \vartheta} \phi] \in \moduligb.
\end{align}

The actions of $\DefC$ have the following interactions with the sewing operation, provided that all surfaces exist.
For $j \neq k$:
\begin{align}
(\Sigma_1 \diffActing{j} \phi) \sew{k}{l} \Sigma_2
&= (\Sigma_1 \sew{k}{l} \Sigma_2) \diffActing{j} \phi,
\\
(\Sigma_1 \diffActing{j} \phi) \sew{j}{l} \Sigma_2
&= \Sigma_1 \sew{j}{l} (\Sigma_2 \diffActing{l} (\inversion \circ \phi^{-1} \circ \inversion)),
\label{eq:sew_deform_move} \\
\sewself{k}{l} (\Sigma \diffActing{j} \phi)
&= (\sewself{k}{l} \Sigma) \diffActing{j} \phi,
\\
\sewself{j}{l} (\Sigma \diffActing{j} \phi)
&= \sewself{j}{l} (\Sigma \diffActing{l} (\inversion \circ \phi^{-1} \circ \inversion)).
\label{eq:sew_deform_move_self}
\end{align}
From Equation~\eqref{eq:sew_deform_move}, it also follows that in the separating case, we have
\begin{align}
\Sigma \diffActing{\vartheta} \phi
= \Big(\unravelplus{\vartheta} \Sigma) \diffActing{j} (\inversion \circ \phi^{-1} \circ \inversion) \Big) \sew{j}{k} (\unravelminus{\vartheta} \Sigma)
= \Sigma \diffActing{\vartheta \circ \inversion} (\inversion \circ \phi^{-1} \circ \inversion) ,
\label{eq:action_interior_plus}
\end{align}
and in the non-separating case, we have 
\begin{align}
\Sigma \diffActing{\vartheta} \phi
= \sewself{\boundaries + 1}{\boundaries + 2} \Big((\unravel{\vartheta} \Sigma) \diffActing{\boundaries + 1} (\inversion \circ \phi^{-1} \inversion)\Big)
= \Sigma \diffActing{\vartheta \circ \inversion} (\inversion \circ \phi^{-1} \circ \inversion).
\label{eq:action_interior_inversion}
\end{align}
Moreover, if all surfaces exist, the interior action commutes with the boundary action
\begin{align}
(\Sigma \diffActing{\vartheta} \phi_1) \diffActing{j} \phi_2 = 
(\Sigma \diffActing{j} \phi_2) \diffActing{\vartheta} \phi_1.
\end{align}
For the interior deformation we also find a statement similar to Proposition~\ref{prop:defc_local}, asserting that for fixed $\Sigma$ and $\vartheta$, and for each smooth curve $\gamma \in \curves{\DefC}$ rooted at $\gamma(0) = \id$, there exists $\varepsilon > 0$ such that $\Sigma \diffActing{\vartheta} \gamma(t)$ exists and Equation~\eqref{eq:action_interior_plus} holds for $t \in (-\varepsilon, \varepsilon)$.

\subsection{Fr\"olicher structure and Virasoro uniformization}
\label{section:moduligb_fr}

The actions of $\DefC$ on $\moduligb$ by deformations of boundary components along interior curves of representatives $\Sigma$ of $[\Sigma] \in \moduligb$ induce a Fr{\"o}licher structure on the moduli spaces $\moduligb$ in the sense of Appendix~\ref{section:frolicher}.
More precisely, we define the generating sets $\curvesgen{\moduligb}$ containing the respective curves
\begin{align}
\R \to \moduligb, \qquad t \mapsto [\Sigma \diffActing{\vartheta_1} \gamma_1(t) \cdots \diffActing{\vartheta_n} \gamma_n(t)] \diffActing{1} \eta_1(t) \cdots \diffActing{\boundaries} \eta_{\boundaries}(t),
\label{eq:moduligb_curvesgen}
\end{align}
rooted at $[\Sigma] \in \moduligb$ with choice of representative $\Sigma$, on which we act along a finite number of disjoint simple analytic curves $\vartheta_1, \ldots \vartheta_n$ in the interior by $\gamma_1, \ldots, \gamma_n \in \curves{\DefC}$, and moreover, we deform the boundary components by $\eta_1, \ldots, \eta_\boundaries \in \curves{\DefC}$, assuming that the deformation exists for all $t \in \R$.
The Fr{\"o}licher spaces $\big(\moduligb, \curves{\moduligb}, \functions{\moduligb}\big)$ are then defined by Equations~\eqref{eq:condition_functions} and~\eqref{eq:F_generated}.
We show that it is non-trivial in Section~\ref{section:hyperbolic}.

By construction, the actions of $\DefC$ by boundary deformations are Fr-smooth, and moreover, by Proposition~\ref{prop:action_compat}, the actions are compatible with the composition operation of $\DefC$.
Moreover, the sewing operations~(\ref{eq:sewing_moduligb},~\ref{eq:self_sewing_moduligb}) 
are smooth, for a deformation of a boundary component creates a deformation in the interior, also smooth by definition.

The following may be compared to the action of a Lie group on a smooth manifold, where we obtain a homomorphism from the Lie algebra of that group to the Lie algebra of vector fields on the manifold.
Since the Lie algebra of $\DefC$ is identified as $\VectC$ in Proposition~\ref{prop:defc_algebra}, and since the flow $\Phi_v(t, \blank) \in \DefC$ for $v \in \VectC$ acts on any boundary component of any $[\Sigma] \in \moduligb$ if $t$ is small enough (depending on $[\Sigma]$), the Lie algebra homomorphisms induced by deformations of boundary components are
\begin{align}
\begin{aligned}
\rho_j \colon \VectC &\longrightarrow \vect(\moduligb) , \qquad 1 \leq j \leq \boundaries  \\
v &\longmapsto \Big([\Sigma] \mapsto \big[t \mapsto [\Sigma] \diffActing{j} \Phi^v(t, \blank)\big]_\sim\Big),
,
\end{aligned}
\end{align}
where $\vect(\moduligb)$ is the space of tangent vectors on $\moduligb$ with Lie bracket as defined in Appendix~\ref{section:frolicher}.
We can prove this by replicating the proof of~\cite[Theorem~3.12]{Laubinger:Lie_algebra_for_Frolicher_groups}:

The following result is a variant of what is often called ``Virasoro uniformization'' 
in the literature~\cite{Kontsevich:Virasoro_and_Teichmuller_spaces, 
ACKP:Moduli_spaces_of_curves_and_representation_theory, Beilinson-Schechtman:Determinant_bundles_and_Virasoro_algebra, 
Frenkel-Ben-Zvi:Vertex_Algebras_and_Algebraic_Curves, Dubedat:SLE_and_Virasoro_representations_localization, Gui-Zhang:Analytic_conformal_blocks_of_C2_cofinite_vertex_operator_algebras2}.
Since we have not characterized the Fr{\"o}licher structures induced by $\curvesgen{\moduligb}$, we restrict to tangent vectors, which may be represented by initial curves as defined by Equation~\eqref{eq:moduligb_curvesgen}.
We denote those tangent spaces by $T^{\mathcal{C}_0} \moduligb$ as in Equation~\eqref{eq:curvesgen_tangent}.
Conceptually, we show that any initial tangent vector, also including interior deformations, may be represented by deformations of boundary components only.

\begin{theorem}
\label{thm:virasoro_uniformization}
For $\boundaries \geq 1$ and $[\Sigma] \in \moduligb$, the derivative of the simultaneous action of $\boundaries$~copies of $\DefC$ on boundary components of $[\Sigma]$ given by
\begin{align}
\begin{aligned}
\underline{\rho} \colon \VectC^{\times \boundaries} &\longrightarrow T_{[\Sigma]}^{\mathcal{C}_0} \moduligb , \\
(v_1, \ldots, v_\boundaries) &\longmapsto [t \mapsto [\Sigma] \diffActing{1} \Phi_{v_1}(t, \blank) \diffActing{2} \cdots \diffActing{\boundaries} \Phi_{v_\boundaries}(t, \blank)]_\sim
\end{aligned}
\label{eq:tangent_action}
\end{align}
is surjective onto the space of tangent vectors, which may be represented by initial curves.
Moreover, the kernel of $\underline{\rho}$ comprises pullbacks of holomorphic vector fields on a representative $\Sigma$ of $[\Sigma]$ via the boundary parametrizations, yielding the short exact sequence~\eqref{eq:virasoro_uniformization_ses}\textnormal{:}
\begin{align*}
\begin{alignedat}{2}
\{0\} \; \longrightarrow \; \vect(\Sigma) \; & \overset{\underline{\zeta}^*}{\longhookrightarrow} \; \big(\VectC\big)^{\times \boundaries} && \; \overset{\underline{\rho}}{\longtwoheadrightarrow} \; T_{[\Sigma]}^{\mathcal{C}_0} \moduligb \; \longrightarrow \; \{0\} ,
\end{alignedat}
\end{align*} 
\end{theorem}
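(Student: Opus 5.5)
We plan to follow the Kodaira--Spencer / Čech argument familiar from the algebro-geometric setting. Fix a representative $\Sigma$ and let $\hat\Sigma = \Sigma \sewall \underline{\disk}$ be the capped closed surface, with the extended parametrizations $\zeta_j \colon \cdisk \to \hat\Sigma$. A generating curve~\eqref{eq:moduligb_curvesgen} consists of interior deformations along disjoint analytic loops $\vartheta_1, \ldots, \vartheta_n$ by curves $\gamma_i$ rooted at $\id$, followed by boundary deformations $\eta_j$. Its tangent vector is determined by the initial vector fields $u_i = \dot\gamma_i(0) \in \VectC$, pushed forward to germs of holomorphic vector fields $\vartheta_{i*}u_i$ on annular neighborhoods of $\vartheta_i(S^1)$, together with $w_j = \dot\eta_j(0)$ near the boundary. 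This uses Proposition~\ref{prop:defc_algebra} and the local composability of Proposition~\ref{prop:defc_local}. Two such data give the same tangent vector exactly when they differ by a global change of coordinates. This suggests working with the sheaf $\Theta$ of holomorphic vector fields on $\Sigma$ (real-analytic up to the boundary) and with covers of $\Sigma$ by the complements of the loops.

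\emph{Surjectivity.} The plan is to move each interior deformation to the boundary. Cut $\Sigma$ along all $\vartheta_i$ and regard the datum $(\vartheta_{i*}u_i)_i$ as a Čech $1$-cochain for the cover of $\Sigma$ by two open sets. The first is $W_0$, a thin collar neighborhood of $\bigcup_i \vartheta_i(S^1)$. The second is $W_1$, the complement of the loops together with collars of the boundary. The key input is that $\Sigma$ has nonempty boundary ($\boundaries \geq 1$), so $\Sigma$ minus its boundary is an open (noncompact) Riemann surface. Hence $H^1(\Sigma^\circ, \Theta) = 0$ by the vanishing theorem for noncompact Riemann surfaces~\cite{Forster:Lectures_on_Riemann_surfaces}. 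To keep real-analytic boundary control, we apply this on a slightly enlarged open surface $\Sigma' \supset \Sigma$, obtained by gluing thin annuli onto the boundary through the extended $\zeta_j$. We can then split $\vartheta_{i*}u_i = X_1 - X_0$ with $X_k$ holomorphic on the enlarged $W_k$. Flowing by $X_0$ is an isomorphism on the collar, so it does not change the class. Flowing by $X_1$ is an isomorphism away from the loops, so the total effect is the boundary deformation by $\zeta_j^* X_1$. Pulling $X_1$ back by $\zeta_j$ gives elements of $\VectC$, so $\underline\rho(\zeta^*X_1 + w)$ equals the original tangent vector. Making ``differs by a biholomorphism to first order implies equal in $T^{\mathcal{C}_0}$'' rigorous amounts to constructing explicit curves of isomorphisms, using the flows of $X_0$ and $X_1$ for small $t$. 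We expect this step to be the main technical obstacle, since the Fröhlicher tangent space is defined only through curves and we must produce equivalent curves explicitly.

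\emph{Kernel.} If $X \in \vect(\Sigma)$, with real-analytic boundary behavior, then its flow gives a curve of biholomorphisms $\Sigma \to \Sigma_t$. Under this curve the boundary parametrizations become $\zeta_j \circ \Phi_{\zeta_j^*X}(t,\blank)$ to first order, so $\underline\zeta^*X \in \ker\underline\rho$. Injectivity of $\underline\zeta^*$ follows from the identity theorem. Conversely, suppose $\underline\rho(v)=0$. Then there is a curve of isomorphisms $F_t \colon \Sigma \to \Sigma \diffActing{1}\Phi_{v_1}(t) \cdots$. Differentiating at $t=0$ gives a holomorphic vector field $X$ on $\Sigma$ whose boundary values satisfy $\zeta_j^* X = v_j$; here the extension of $\zeta_j$ to the cap is used to interpret $\partial_t F_t$ near the boundary. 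Exactness in the middle of the short exact sequence then follows.
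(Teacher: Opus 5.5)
Your proposal follows essentially the same route as the paper. It uses the same two-set \v{C}ech cover (a collar of the loop and its complement), placed in a noncompact surface containing $\Sigma$; the paper uses the capped surface with the cap centers removed instead of your enlarged $\Sigma'$. It relies on vanishing of $H^1$ of the tangent sheaf because $\boundaries \geq 1$, transports the splitting to the boundary parametrizations, and gives the same kernel argument via a curve of isomorphisms $F_t$.

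The step you flag as the main obstacle is handled in the paper as follows. It splits $v = \vartheta^* u_0 - \vartheta^* u_\vartheta$ on the $-$ side and applies $\Phi_{u_0}(t,\blank)^{-1}$ exactly as an isomorphism of the unraveled capped surface. The resulting seam term at $\vartheta \circ \inversion$ then cancels against the $u_\vartheta$-term via Equation~\eqref{eq:action_interior_inversion}.
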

\begin{proof}
Any initial tangent vector represented by an initial curve, as in Equation~\eqref{eq:moduligb_curvesgen}, comes with a representative 
$\param{\Sigma, \zeta_1, \ldots, \zeta_\boundaries}$ of $[\Sigma]$ relative to which the interior deformations are defined.
Such a tangent vector is a linear combination
\begin{align}
\sum_{j = 1}^n \Big[t \mapsto [\Sigma \diffActing{\vartheta_j} \gamma_j(t)]\Big]_\sim
+
\sum_{j = 1}^{\boundaries} \Big[t \mapsto [\Sigma] \diffActing{j} \eta_j(t)\Big]_\sim
\end{align}
of the curves generated by each curve of complex deformations acting individually.
Thus, we may restrict to the case of a single interior deformation by $\gamma \in \curves{\DefC}$ at $\vartheta$.
Moreover, by Proposition~\ref{prop:defc_algebra}, that tangent vector may be represented by
\begin{align}
\Big[t \mapsto [\Sigma \diffActing{\vartheta} \gamma(t)]\Big]_\sim
= \Big[t \mapsto [\Sigma \diffActing{\vartheta} \Phi_v(t, \blank)]\Big]_\sim,
\end{align}
that is, by the flow of a time-independent vector field $v \in \DefC$ such that $[t \mapsto \gamma(t)]_\sim$ equals $[t \mapsto \Phi_v(t, \blank)]_\sim$ in $T_{\id} \DefC$.

Let $\dot{\Sigma}$ denote the capped surface $\Sigma \sewall \underline{\disk}$ in Equation~\eqref{eq:capped} minus the zeroes $0 \in \disk$ of all the caps.
Let $V_\vartheta$ be a small neighborhood of $S^1$ to which both $v$ and $\vartheta$ extend, respectively holomorphically and univalently.
Consider the open cover of $\dot{\Sigma}$ given by $U_\vartheta = \vartheta(V_\vartheta)$ and $U_0 = \dot{\Sigma} \setminus \vartheta(S^1)$.
The intersection $U_0 \cap U_\vartheta$ has two connected components, which we denote by $U_+$ and $U_-$, such that $U_+ \subset \unravelplus{\vartheta} \Sigma$ in the case that $\vartheta$ is separating and such that $U_+$ is on the side of $\partial_{\boundaries + 1} \unravel{\vartheta} \Sigma$ if $\vartheta$ is non-separating.
Since there are no triple intersections in the cover, the vector fields $u_+ = 0$ on $U_+$ and $u_- = \vartheta_* v$ on $U_-$ define a cocycle in the sheaf cohomology $H^1(\dot{\Sigma}, T\dot{\Sigma})$ with values in the tangent sheaf $T\dot{\Sigma}$.

By the assumption that $\boundaries \geq 1$, the capped and punctured surface $\dot{\Sigma}$ is a non-compact Riemann surface, and hence $H^1(\dot{\Sigma}, T\dot{\Sigma})$ is trivial~\cite[Theorem~26.1 and Theorem~30.3]{Forster:Lectures_on_Riemann_surfaces}. 
Therefore, there exist holomorphic vector fields $u_0$ on $U_0$ and $u_\vartheta$ on $U_\vartheta$ such that\footnote{Since $u_0 = u_\vartheta$ on $U_+$, we observe that if $\vartheta$ is a non-separating curve, then $u_0$ and $u_\vartheta$ restricted to $U_-$ are two different analytic continuations of the same vector field on $U_+$.}
\begin{align}
\begin{cases}
u_+ = u_0 - u_\vartheta & \textnormal{on } U_+ , \\
u_- = u_0 - u_\vartheta & \textnormal{on } U_- .
\end{cases}
\end{align}

Since both $u_- + u_\vartheta$ and $u_\vartheta$ extend across $\vartheta(S^1)$, $u_0$ also extends across $\vartheta(S^1)$, taking different values depending on the side of $\vartheta$ that it is extended from.
Pulling back these vector fields along $\vartheta$ to $S^1$, we find the decomposition
\begin{align}
\Big[t \mapsto [\Sigma \diffActing{\vartheta} \Phi_v(t, \blank)]\Big]_\sim
= \Big[t \mapsto [\Sigma \diffActing{\vartheta} \Phi_{\vartheta^* u_0}(t, \blank)]\Big]_\sim
- \Big[t \mapsto [\Sigma \diffActing{\vartheta} \Phi_{\vartheta^* u_\vartheta}(t, \blank)]\Big]_\sim.
\label{eq:decompose_v_u}
\end{align}
Let us consider the case where $\vartheta$ is non-separating.
Combining Equations~(\ref{eq:unravel_nonseparating},~\ref{eq:deformed_surface},~\ref{eq:action_interior}), we find that
\begin{align}
\Sigma \diffActing{\vartheta} \Phi_{\vartheta^* u_0}(t, \blank)
= \sewself{\boundaries+1}{\boundaries+2} \param{\blank, \zeta_1, \ldots, \zeta_{\boundaries}, \vartheta \circ \inversion,  \Phi_{u_0}(t, \blank) \circ \vartheta} ,
\end{align}
where ``$\blank$'' stands for the subset of the unraveled and capped surface $(\unravel{\vartheta} \Sigma) \sew{\boundaries +1}{1} \disk \sew{\boundaries + 2}{1} \disk$ bounded by the parametrizations.
For small $t$, the inverse flow $\Phi^{-1}_{u_0}(t, \blank)$ is defined on such a subset, mapping it into the unraveled and capped surface.
Hence, we can apply it as an isomorphism of Riemann surfaces with analytically parametrized boundary components, yielding
\begin{align}
[\Sigma \diffActing{\vartheta} \Phi_{\vartheta^* u_0}(t, \blank)]
= \sewself{\boundaries+1}{\boundaries+2} \parameq{\blank, \Phi^{-1}_{u_0}(t, \blank) \circ \zeta_1, \ldots, \Phi^{-1}_{u_0}(t, \blank) \circ \zeta_{\boundaries}, \Phi^{-1}_{u_0}(t, \blank) \circ \vartheta \circ \inversion, \vartheta}.
\end{align}
As a tangent vector, we thus have
\begin{align}
\Big[t \mapsto [\Sigma \diffActing{\vartheta} \Phi_{\vartheta^* u_0}(t, \blank)]\Big]_\sim
= - \sum_{j = 1}^\boundaries 
\Big[t \mapsto [\Sigma \diffActing{j} \Phi_{\zeta_j^* u_0}(t, \blank)]\Big]_\sim
-
\Big[t \mapsto [\Sigma \diffActing{\vartheta \circ \inversion} \Phi_{\inversion^* \vartheta^* u_0}(t, \blank)]\Big]_\sim.
\label{eq:compute_inverse_flow_u0}
\end{align}
Note that the vector field $\inversion^* \vartheta^* u_0$ is pulled back from the side of the curve where $u_0 = u_{\vartheta}$.
Thus, by applying Equation~\eqref{eq:action_interior_inversion}, we see that 
the rightmost term in Equation~\eqref{eq:compute_inverse_flow_u0} agrees with the rightmost term in~\eqref{eq:decompose_v_u}, up to a sign.
This implies that
\begin{align}
\Big[t \mapsto [\Sigma \diffActing{\vartheta} \Phi_v(t, \blank)]\Big]_\sim
= - \sum_{j = 1}^\boundaries 
\Big[t \mapsto [\Sigma \diffActing{j} \Phi_{\zeta_j^* u_0}(t, \blank)]\Big]_\sim,
\label{eq:interior_to_boundary}
\end{align}
which is indeed a linear combination of deformations of boundary components only.

If $\vartheta$ is separating such that $\zeta_{a_1}, \ldots, \zeta_{a_{j - 1}}$, 
and $\vartheta \circ \inversion$ are the boundary parametrizations of $\unravelplus{\vartheta} \Sigma$ and $\zeta_{b_1}, \ldots, \zeta_{b_{k - 1}}$, and $\vartheta$ are the boundary parametrizations of $\unravelminus{\vartheta} \Sigma$ , we find that
\begin{align}
\Sigma \diffActing{\vartheta} \Phi_{\vartheta^* u_0}(t, \blank)
= \param{\blank, \zeta_{a_1}, \ldots, \zeta_{a_{j - 1}}, \vartheta \circ \inversion} \sew{j}{k} \param{\blank, \zeta_{b_1}, \ldots, \zeta_{b_{k - 1}}, \Phi_{u_0}(t, \blank) \circ \vartheta} .
\end{align}
By applying the inverse flow $\Phi^{-1}_{u_0}(t, \blank)$ as an isomorphism to both sides yields
\begin{align}
\begin{aligned}
[\Sigma \diffActing{\vartheta} \Phi_{\vartheta^* u_0}(t, \blank)]
=  \;& \parameq{\blank, \Phi^{-1}_{u_0}(t, \blank) \circ \zeta_{a_1}, \ldots, \Phi^{-1}_{u_0}(t, \blank) \circ \zeta_{a_{k - 1}}, \Phi^{-1}_{u_0}(t, \blank) \circ \vartheta \circ \inversion}
\\ \;&\sew{j}{k} \parameq{\blank, \Phi^{-1}_{u_0}(t, \blank) \circ \zeta_{b_1}, \ldots, \Phi^{-1}_{u_0}(t, \blank) \circ \zeta_{b_{j - 1}}, \vartheta} .
\end{aligned}
\end{align}
The corresponding tangent vector is also given by Equation~\eqref{eq:compute_inverse_flow_u0}, and Equation~\eqref{eq:interior_to_boundary} also holds in the separating case.
This concludes the proof of the surjectivity of $\underline{\rho}$.

To identify the kernel of $\underline{\rho}$ and finish the proof of exactness of 
the sequence~\eqref{eq:virasoro_uniformization_ses}, consider the flow of $v \in \vect(\Sigma)$ on the capped surface.
It generates an isomorphism of $\Sigma$ and the surfaces given by
\begin{align}
\param{\Phi_v(t, \Sigma), \Phi_v(t, \blank) \circ \zeta_1, \ldots \Phi_v(t, \blank) \circ \zeta_{\boundaries}}
= \Sigma \diffActing{1} \Phi_{\zeta_1^* v}(t, \blank) \diffActing{\boundaries} \Phi_{\zeta_{\boundaries}^* v}(t, \blank),
\end{align}
which indeed is a constant curve in $\moduligb$, verifying injectivity of the left map in~\eqref{eq:virasoro_uniformization_ses}.

To finish, note that if a curve $\Sigma_t = \Sigma \diffActing{1} \Phi_{v_1}(t, \blank) \diffActing{\boundaries} \Phi_{v_\boundaries}(t, \blank)$ represents the trivial tangent vector, 
then there exists a time-dependent isomorphism $F_t \colon \Sigma \to \Sigma_t$ of Riemann surfaces with analytically parametrized boundary components.
By the compatibility of such isomorphisms with the boundary parametrizations, we have $F_t \circ \zeta_j = \zeta_j \circ \Phi_{v_j}(t, \blank)$, which implies that the flow of $(\zeta_j)_*v_j$ is globally defined on $\Sigma$ and given by $F_t$.
Hence, the vector fields $v_1, \ldots, v_\boundaries$ are pullbacks of the restriction of $v(t, z) = \dot{F}_t(t, F_t^{-1}(z)) \in \vect(\Sigma)$.
This shows that the image and kernel in the middle of~\eqref{eq:virasoro_uniformization_ses} agree, concluding the proof. 
\end{proof}

\subsection{Hyperbolic surfaces with marked boundary points}
\label{section:hyperbolic}

In this section, by proving that the Fenchel--Nielsen coordinates define smooth functions on $\moduligb$, we show that the Fr\"olicher structure on $\moduligb$ is non-trivial, see Theorem~\ref{thm:fn_smooth}.

\begin{definition}
A Riemann surface $\param{\Sigma, \zeta_1, \ldots, \zeta_{\boundaries}}$ with analytically parametrized boundary components is \emph{hyperbolic} if there exists a conformal hyperbolic metric $g$ on $\Sigma$ such that the boundary components are geodesics, and the parametrizations $\zeta_j$ are of constant speed $|\partial_\theta \zeta_j(e^{\ii \theta})|_g$ with respect to the metric $g$, for all $1 \leq j \leq \boundaries$.
\end{definition}

Regardless of the parametrizations, if the Euler characteristic $\chi(\Sigma) = 2 - 2 \genus - \boundaries$ is negative, a hyperbolic metric $g$ with geodesic boundary components exists, and it is unique if we normalize the Gaussian curvature to $R_{g} = -1$ (see~\cite{OPS:Extremals_of_determinants_of_Laplacians}).
If $F \colon \param{\Sigma_1, \underline{\zeta}} \to \param{\Sigma_2, \underline{\xi}}$ is an isomorphism such that $\Sigma_2$ is hyperbolic with metric $g_2$, the pullback metric $g_1 = F^* g_2$ is hyperbolic with geodesic boundary components, too, 
because Gaussian curvature and boundary curvature are covariant.
The speed of $\zeta_j$ equals the speed of $F \circ \zeta_j = \xi_j$, which is constant.
Therefore, $\Sigma_1$ is hyperbolic if and only if $\Sigma_2$ is hyperbolic.
In that case, we say that $[\Sigma_1] = [\Sigma_2] \in \moduligb$ is hyperbolic.
We denote the subspace of hyperbolic surfaces by $\hmoduligb \subset \moduligb$.
We define a projection by replacing the boundary parametrizations with the hyperbolic ones,
\begin{align}
\begin{aligned}
\projhgb: \moduligb &\longrightarrow \hmoduligb , \\
\parameq{\Sigma, \zeta_1, \ldots, \zeta_\boundaries}
&\longmapsto
\parameq{\Sigma, \check{\zeta}_1, \ldots, \check{\zeta}_\boundaries}.
\end{aligned}
\label{eq:hmoduligb_projection}
\end{align}
A representative $\param{\Sigma, \zeta_1, \ldots, \zeta_\boundaries}$ of $[\Sigma]$ is mapped to the same Riemann surface with new boundary parametrizations $\check{\zeta}_j \colon S^1 \to \partial_j \Sigma$, which are the unique negatively oriented constant speed parametrizations with respect to the unique hyperbolic metric $g$ such that $\check{\zeta_j}(1) = \zeta_j(1)$.
This map only depends on the conformal structure of $\Sigma$ and the marked points $\zeta_j(1)$, both of which are preserved by isomorphisms.
Hence, the projection~\eqref{eq:hmoduligb_projection} is well-defined, and we see that an element of $\smash{\hmoduligb}$ is determined by a hyperbolic metric and one marked point on each boundary component.
Below, we relate $\smash{\hmoduligb}$ to a moduli space $\smash{\fmoduligbzi}$ of such hyperbolic surfaces with marked boundary points.
It turns out that this moduli space is a finite-dimensional smooth manifold with coordinate charts given by Fenchel--Nielsen coordinates.

\begin{proposition}
\label{prop:fixed_pt_id}
Let $\Sigma$ be a smooth, compact, connected, and oriented surface with hyperbolic metric $g$ such that the $\boundaries \geq 1$ boundary components are geodesics.
If an orientation-preserving isometry $F \colon \Sigma \to \Sigma$ has a fixed point on the boundary $\partial \Sigma$, then $F = \id$.
\end{proposition}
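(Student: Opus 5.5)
Let $p \in \partial_j \Sigma$ be the fixed point of $F$. Since $F$ is an isometry, it is a diffeomorphism of $\Sigma$ that maps $\partial\Sigma$ onto $\partial\Sigma$ and permutes the boundary components. Because it fixes $p$, it maps $\partial_j\Sigma$ to itself.

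The plan is to show first that the differential $dF_p$ is the identity on $T_p\Sigma$. Then we use the standard rigidity of isometries: an isometry of a connected Riemannian manifold (with boundary) is determined by its value and differential at one point.

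\textbf{Step 1: the differential at $p$.} The boundary geodesic $\partial_j\Sigma$ is a closed curve, and $F$ restricts to an isometry of it. This restriction is a length-preserving self-map of a circle fixing $p$, so it is either the identity or the reflection about $p$. Since $F$ is orientation-preserving on $\Sigma$, it preserves the induced boundary orientation, which rules out the reflection. Hence $F|_{\partial_j\Sigma} = \id$, and in particular $dF_p$ fixes the unit tangent vector $\tau$ to $\partial_j\Sigma$ at $p$. The differential $dF_p$ is a linear isometry of the oriented plane $T_p\Sigma$ fixing $\tau$ and preserving orientation, so it is the identity. (Equivalently, $dF_p$ maps the inward unit normal $\nu$ to an inward unit normal, because $F$ preserves $\Sigma$, so $dF_p\nu = \nu$.)

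\textbf{Step 2: rigidity.} Consider the set
\begin{align*}
A = \setsuchthat{q \in \Sigma}{F(q) = q,\ dF_q = \id}.
\end{align*}
It is nonempty by Step 1 and closed by continuity. To show it is open, take $q \in A$.
\begin{enumerate}
\item If $q$ is an interior point, the exponential map at $q$ is defined on a small ball. Since $F$ maps geodesics to geodesics, $F(\exp_q w) = \exp_q(dF_q w) = \exp_q w$, so $F$ is the identity on a neighbourhood of $q$. That neighbourhood then lies in $A$, because the differential of the identity is the identity.
\item If $q$ lies on the boundary, the same argument applies to the geodesics emanating inward from $q$, which fills a half-neighbourhood of $q$. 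Alternatively, one can double $\Sigma$ along its geodesic boundary to obtain a smooth hyperbolic surface without boundary, on which $F$ extends to an isometry of the double. This is legitimate precisely because the boundary is geodesic, so the doubled metric is smooth, and then the interior argument applies directly.
\end{enumerate}
Since $\Sigma$ is connected, $A = \Sigma$, that is, $F = \id$.

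The main obstacle is the boundary case in Step 2, namely justifying that the exponential map gives local coverage near boundary points. I would handle this with the doubling construction. The extended map is orientation-preserving on the double: on the reflected half it is defined by conjugating $F$ with the reflection, and this conjugate is again orientation-preserving. The extended map is an isometry because the metric on the double is smooth, which uses that the boundary components are geodesics. Connectedness of the double then finishes the argument.
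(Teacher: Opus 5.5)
Your proof is correct. Your Step 1 matches the paper's first step: the paper also notes that $F$ preserves the boundary component through the fixed point $x$, and that $F\circ\zeta$ is again a negatively oriented unit-speed parametrization rooted at $x$, so $F\circ\zeta=\zeta$.

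The two arguments differ in how they propagate from the boundary to all of $\Sigma$. The paper finishes in one line with the identity theorem on the Riemann surface $(\Sigma,g)$: an orientation-preserving isometry is conformal, hence holomorphic, and it agrees with the identity along a whole boundary curve. You instead use only the $1$-jet of $F$ at $p$, namely $dF_p=\id$, obtained from the fixed tangent vector and orientation. You then apply Riemannian rigidity of isometries via the exponential map and a connectedness argument.

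Your route needs no complex structure, and it handles the boundary explicitly. The paper's appeal to the identity theorem glosses over this point: the coincidence set lies in $\partial\Sigma$, so strictly a reflection or continuation argument is needed there too.

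Of your two boundary treatments, the inward-geodesic one is cleaner. Geodesics leaving $p$ in strictly inward directions immediately enter the interior and are fixed by $F$. So $F$ is the identity on a nonempty open subset of the connected interior, and your interior argument finishes the proof.

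The doubling variant also works, but you justify it too quickly. Smoothness of the extended map across the seam does not follow from smoothness of the doubled metric alone. It does follow from Fermi coordinates $(r,t)$ about each boundary geodesic: there $g=dr^2+\cosh^2(r)\,dt^2$ and $F$ acts as $(r,t)\mapsto(r,f(t))$, which extends evenly across $r=0$.
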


\begin{proof}
Let $x \in \partial \Sigma$ be a fixed point of $F$, and let $\partial_x \Sigma$ denote the boundary component containing $x$.
Note that $F$ fixes $\partial_x \Sigma$ setwise.
Let $\zeta \colon S^1 \to \partial_x \Sigma$ be the unique negatively oriented unit speed parametrization rooted at $\zeta(1) = x$.
Then, since $F^*g = g$, the composition $F \circ \zeta$ is also a unit speed parametrization of $\partial_x \Sigma$, and since $F(\zeta(1)) = F(x) = x$, it is rooted at $x$ as well.
Thus, we have $F \circ \zeta = \zeta$, and $F$ restricts to the identity on~$\partial_x \Sigma$.
The result follows from the identity theorem on the Riemann surface $(\Sigma,g)$.
\end{proof}

For concreteness, let us outline the classical construction of moduli space via Teichm{\"u}ller theory~\cite{Farb-Margalit:Primer_on_mapping_class_groups} in the case of boundary marked points~\cite{Ivashkovich-Shevchishin:Holomorphic_structure_on_the_space_of_Riemann_surfaces_with_marked_boundary, Liu:Moduli_of_J-holomorphic_curves_with_Lagrangian_boundary_conditions_and_open_Gromov-Witten_invariants_for_an_S1-equivariant_pair}.
Let the tuple $\param{S, a_1, \ldots, a_\boundaries}$ denote a smooth, compact, connected, and oriented surface~$S$ of genus $\genus \geq 0$ with $\boundaries \geq 1$ boundary components, and $\underline{a} = (a_1, \ldots, a_\boundaries)$ one marked point in each boundary component.
A marked hyperbolic surface modeled on $\param{S, a_1, \ldots, a_\boundaries}$ is a tuple $(\Sigma, g, x_1, \ldots, x_\boundaries, f)$, 
where $f \colon S \to \Sigma$ is a diffeomorphism from $S$ to a smooth surface $\Sigma$ with hyperbolic metric $g$ such that the boundary components are geodesics, and such that $f(a_j) = x_j$ for $1 \leq j \leq \boundaries$.
The \emph{Teichm{\"u}ller space} of $S$ is the set of equivalence classes
\begin{align}
\teichm(S, \underline{a}) =
\Big\{\textnormal{marked hyperbolic surfaces modeled on $(S, \underline{a})$}\Big\}_{/ \; \sim},
\end{align}
where $(\Sigma_1, g_1, \underline{x}, f_1) \sim (\Sigma_2, g_2, \underline{y}, f_2)$ 
if there exists an isometry $F \colon \Sigma_1 \to \Sigma_2$ such that $F(x_j) = y_j$ and $f_2^{-1} \circ F \circ f_1$ is isotopic to the identity on $S$ relative to the marked points~$\underline{a}$.
Let $\Diffp(S, \underline{a})$ denote the group of orientation-preserving diffeomorphisms of $S$ fixing each marked point, and $\Diffpid(S, \underline{a})$ the connected component of the identity in $\Diffp(S, \underline{a})$.
The \emph{mapping class group} is the quotient
\begin{align}
\mcg(S, \underline{a}) = \Diffp(S, \underline{a}) / \Diffpid(S, \underline{a}),
\end{align}
and it acts on the Teichm{\"u}ller space by
\begin{align}
 \teichm(S, \underline{a}) \times \mcg(S, \underline{a}) \longrightarrow \teichm(S, \underline{a}), \qquad (\parameq{\Sigma, g, \underline{x}, f}, [h]) \longmapsto \parameq{\Sigma, g, \underline{x}, f \circ h}.
\label{eq:mcg_action}
\end{align}
Sometimes this is called the ``pure'' mapping class group, since it fixes the marked points pointwise.
The moduli space\footnote{Our notation suggests the more general case $\fmoduligbnm$ of surfaces with $n$ interior marked points, and $m_j$ marked points on the $j$th boundary component, for $\underline{m} = (m_j)_j$.} 
$\fmoduligbzi$ is the quotient of $\teichm(S, \underline{a})$ by this action of $\mcg(S, \underline{a})$.
This action is smooth and properly discontinuous with respect to the usual smooth structure on the Teichm{\"u}ller space~\cite{Farb-Margalit:Primer_on_mapping_class_groups}.
Generally, the smooth structure on Teichm{\"u}ller space induces a smooth orbifold structure on the moduli space.
However, due to Proposition~\ref{prop:fixed_pt_id} in our setup, the mapping class group acts freely, making the moduli space 
\begin{align}
\fmoduligbzi = \teichm(S, \underline{a}) \,/\, \mcg(S, \underline{a}) 
\end{align}
a smooth manifold.

\begin{figure}
\centering
\begin{tikzpicture}
	\def\A{
		(57.2678, -23.4789).. controls (57.6883, -23.6376) and (58.1632, -23.6526) .. (58.5909, -23.5122)
		(57.695, -23.5882).. controls (57.81, -23.5201) and (57.9092, -23.4896) .. (58.0274, -23.4915).. controls (58.1457, -23.4934) and (58.2567, -23.5028) .. (58.3585, -23.5724)
		(61.0516, -23.1637).. controls (60.765, -23.1057) and (60.2214, -23.1622) .. (60.212, -23.3896).. controls (60.2026, -23.617) and (60.7827, -23.6889) .. (61.0518, -23.6238)
		(55.9761, -23.4364) ellipse (0.1562cm and 0.8444cm)
		(61.0518, -24.4682) ellipse (0.1562cm and 0.8444cm)
		(61.0516, -22.3193) ellipse (0.1562cm and 0.8444cm)
		(61.0518, -25.3126).. controls (60.8177, -25.3029) and (60.6923, -25.048) .. (60.4319, -24.9624).. controls (60.156, -24.8716) and (59.8548, -25.0659) .. (59.5648, -25.108).. controls (59.0539, -25.1823) and (58.5375, -25.3767) .. (58.0274, -25.2967).. controls (57.7996, -25.261) and (57.5963, -25.1265) .. (57.3994, -25.0063).. controls (57.1692, -24.8659) and (57.0017, -24.6313) .. (56.7616, -24.5086).. controls (56.5189, -24.3845) and (56.3816, -24.2307) .. (55.9761, -24.2808)
		(55.9761, -22.5919).. controls (56.3418, -22.6486) and (56.8977, -22.3015) .. (57.1614, -22.1142).. controls (57.4251, -21.9269) and (57.6675, -21.7849) .. (58.1038, -21.8045).. controls (58.5402, -21.8242) and (59.377, -22.1827) .. (59.6252, -22.1977).. controls (59.8733, -22.2127) and (60.1181, -22.1536) .. (60.3526, -21.919).. controls (60.5871, -21.6843) and (60.664, -21.4719) .. (61.0516, -21.4749)
	}
	\def\B{
		(59.2077, -22.0904).. controls (59.2787, -22.1127) and (59.2582, -22.3213) .. (59.1095, -22.4334).. controls (58.9607, -22.5455) and (58.5288, -22.4726) .. (58.3485, -22.4331).. controls (58.0832, -22.3629) and (57.822, -22.2667) .. (57.5453, -22.2526).. controls (57.4045, -22.2368) and (57.2499, -22.2575) .. (57.1449, -22.3615).. controls (56.9584, -22.529) and (56.8274, -22.7503) .. (56.6277, -22.9048).. controls (56.512, -23.0105) and (56.3735, -23.0726) .. (56.1161, -23.0621)
		(56.1156, -23.8389).. controls (56.1911, -23.8046) and (56.2766, -23.7518) .. (56.359, -23.7537).. controls (56.49, -23.7566) and (56.5884, -23.8514) .. (56.6877, -23.933).. controls (56.9363, -24.1372) and (57.1679, -24.4372) .. (57.4425, -24.5954).. controls (57.6567, -24.7188) and (57.8695, -24.796) .. (58.112, -24.7688).. controls (58.3685, -24.74) and (58.9098, -24.6203) .. (59.0989, -24.7372).. controls (59.288, -24.8542) and (59.2869, -25.1379) .. (59.2077, -25.1729)
		(60.9259, -23.9682).. controls (60.7313, -23.9393) and (60.0321, -23.7651) .. (60.0153, -23.3556).. controls (59.9984, -22.946) and (60.5783, -22.6242) .. (60.9049, -22.6086)
		(60.8962, -24.3927).. controls (60.3585, -24.3154) and (59.8782, -24.0442) .. (59.3842, -23.8358).. controls (59.0475, -23.7072) and (58.7087, -23.8861) .. (58.3929, -23.9903).. controls (58.0468, -24.0791) and (57.7513, -24.0197) .. (57.411, -23.9324).. controls (57.2154, -23.8907) and (56.976, -23.7634) .. (56.8815, -23.5694).. controls (56.7953, -23.3924) and (56.9025, -23.1458) .. (57.079, -23.0441).. controls (57.4081, -22.8556) and (57.8072, -22.9096) .. (58.1705, -22.9676).. controls (58.416, -23.0068) and (58.7946, -22.9812) .. (59.0735, -22.8427).. controls (59.3934, -22.6839) and (59.608, -22.7851) .. (60.0491, -22.6354).. controls (60.3249, -22.5419) and (60.5993, -22.3289) .. (60.8954, -22.3193)
	}
	\def\BA{
		(59.2077, -25.1729)arc(90.0001:150.0:0.2844 and -1.5412)arc(150.0:210.0:0.2844 and -1.5412)arc(210.0:269.9999:0.2844 and -1.5412)
	}
	\def\C{
		(58.0274, -23.4915)arc(89.9998:150.0:0.1562 and -0.8444)arc(150.0:210.0:0.1562 and -0.8444)arc(210.0:270.0002:0.1562 and -0.8444)
		(59.7576, -25.0663)arc(90.0006:150.0:0.1088 and -1.4341)arc(150.0:210.0:0.1088 and -1.4341)arc(210.0:269.9994:0.1088 and -1.4341)
		(58.0274, -25.2967)arc(89.9998:150.0:0.1562 and -0.8444)arc(150.0:210.0:0.1562 and -0.8444)arc(210.0:270.0002:0.1562 and -0.8444)
	}
	\def\CA{
		(58.0274, -21.8027)arc(270.0001:360.0:0.1562 and -0.8444)arc(0.0:89.9999:0.1562 and -0.8444)
		(58.0274, -23.6079)arc(270.0001:360.0:0.1562 and -0.8444)arc(0.0:89.9999:0.1562 and -0.844)
		(59.7576, -22.1981)arc(-90.0004:0.0:0.1088 and -1.4341)arc(0.0:90.0004:0.1088 and -1.4341)
	}

    \node (pic_left) at (0, 0) {};
    \begin{scope}[shift=(pic_left), scale=1.5]	
		\begin{scope}[shift={(-58.6043, 23.569)}]
			\path[draw=color2, very thick, densely dashed]
				\C;
			\path[draw=color2, very thick]
				\CA;
			\path[draw=color1, very thick]
				\B;
			\path[draw=color1, very thick, densely dashed]
				\BA;
			\path[draw, very thick]
				\A;
			\node[circle, fill, inner sep=1pt, outer sep=0.0cm, label={[label distance=0.3cm]left:{$x_1$}}]
				(x1) at (56.1161, -23.0621) {};
			\node[circle, fill, inner sep=1pt, outer sep=0.0cm, label={[label distance=0.3cm]right:{$x_3$}}]
				(x2) at (60.9259, -23.9682) {};
			\node[circle, fill, inner sep=1pt, outer sep=0.0cm, label={[label distance=0.4cm]right:{$x_2$}}]
				(x3) at (60.8954, -22.3193) {};
		\end{scope}

    \end{scope}
    
\end{tikzpicture}
\caption{
Example of a choice of pants decomposition and seams of a hyperbolic surface with one marked point on each boundary component.
}
\label{fig:marking}
\end{figure}
To characterize the smooth structure of $\teichm(S, \underline{a})$, and thus of $\fmoduligbzi$, we define the Fenchel--Nielsen coordinates in our setup.
Let $q_1, \ldots, q_{3 \genus - 3 + \boundaries}$ be disjoint smooth oriented curves decomposing $S$ into $2 \genus - 2 + \boundaries$ pairs of pants, 
and let $p_1, \ldots, p_n$ for certain $n \geq 1$ be seams for the pants decomposition, that is, disjoint simple loops, or arcs between boundary components.
We assume that each marked point has one arc ending there.
See also Figure~\ref{fig:marking}.
The \emph{Fenchel--Nielsen coordinates} are defined as follows:
\begin{align}
\begin{aligned}
\mathsf{FN}_{\underline{q}, \underline{p}} \colon \teichm(S, \underline{a}) &\longrightarrow
\Rp^{3 \genus - 3 + \boundaries} 
\times \R^{3 \genus - 3 + \boundaries} 
\times \Rp^{\boundaries} 
\times \R^{\boundaries}, 
\\
\parameq{\Sigma, g, \underline{x}, f} &\longmapsto
(\underline{l}, \underline{\theta}, \underline{L}, \underline{\Theta}),
\end{aligned}
\label{eq:fenchel_nielsen}
\end{align}
$\underline{l} = (l_1, \ldots, l_{3 \genus - 3 + \boundaries})$, 
$\underline{\theta} = (\theta_1, \ldots, \theta_{3 \genus - 3 + \boundaries})$, 
$\underline{L} = (L_1, \ldots, L_{\boundaries})$, 
$\underline{\Theta} = (\Theta_1, \ldots, \Theta_{\boundaries})$: 
\begin{itemize}
\item $l_j$ is the \emph{length} of the unique geodesic in the homotopy class of~$f(q_j) \subset \Sigma$.

\item $\theta_j$ is the \emph{twist coordinate} corresponding to $l_j$. 
Namely, if $p_k$ is one of the seams crossing $q_j$, consider the (one or two) pairs of pants with geodesic boundary in $\Sigma$ containing $f(q_j)$.
On these pants, the seam $f(p_k)$ is an arc between boundary components $a$ and $b$ (that may agree), and that arc is isotopic to a composition of paths comprising the unique geodesic between $a$ and the geodesic homotopic to $f(q_j)$, then a segment along that geodesic possibly winding around many times to a total signed length of $\delta \in \R$, and finally the geodesic to $b$.
Then, $\theta_j = \frac{2\pi}{l_j} \delta$.
It is independent of the choice of $k$~\cite[Section~10.6]{Farb-Margalit:Primer_on_mapping_class_groups}.

\item $L_j$ corresponds to the marked point $x_j$ on the boundary, and is the \emph{length} of that boundary component in the hyperbolic metric.

\item $\Theta_j$ is the \emph{twist coordinate} corresponding to $L_j$.
Namely, let $p_k$ be the seam at the boundary component containing $x_j$, but not ending at $x_j$, and consider the part of $f(p_k)$ that stays within one pair of pants with geodesic boundary.
That arc is isotopic to the composition of a geodesic between the boundary components of the pants and a segment along the boundary up to $x_j$.
If $\delta \in \R$ is the signed length of the latter segment, then $\Theta_j = \frac{2\pi}{L_j} \delta$.
\end{itemize}
These lead to an overall dimension of $6 \genus - 6 + 4 \boundaries$ for $\teichm(S, \underline{a})$.
Moreover, the transition maps between coordinates with different choices of pants decompositions and seams are real-analytic, see~\cite[Appendix]{Wolpert:On_the_Weil-Petersson_geometry_of_the_moduli_space_of_curves}.

Given two reference surfaces $\param{S_1, \underline{a}}$ and $\param{S_2, \underline{b}}$, an orientation-preserving diffeomorphism $h \colon S_1, \to S_2$ such that $h(a_j) = b_j$ for $1 \leq j \leq \boundaries$ induces a diffeomorphism of Teichm{\"u}ller spaces
\begin{align}
\teichm(S_1, \underline{a}) \to \teichm(S_2, \underline{b}), \qquad
\parameq{\Sigma, g, \underline{x}, f} \mapsto \parameq{\Sigma, g, \underline{x}, f \circ h},
\label{eq:change_of_base_point}
\end{align}
which is known as a change of base point.
In particular, the action of the mapping class group~\eqref{eq:mcg_action} changes the base point relative to the same surface.
Thus, we identify the moduli spaces $\fmoduligbzi$ coming from quotients of Teichm{\"u}ller spaces with various base points through~\eqref{eq:change_of_base_point}.
Under these identifications, the moduli spaces of hyperbolic Riemann surfaces with analytically parametrized boundary components are related to those with marked points on the boundary by
\begin{align}
\hgbtofgbzi \colon \hmoduligb \to \fmoduligbzi, \qquad
\parameq{\Sigma, \underline{\zeta}} \mapsto
\parameq{\Sigma, g, \underline{\zeta}(1), \id},
\label{eq:hmoduligb_fmoduligbzi}
\end{align}
where $\id$ is the identity map on $\Sigma$, thus representing the right-hand side as the base point in $\teichm(\Sigma, \underline{\zeta}(1))$.
Note that isomorphic representatives on the left-hand side are equivalent on the right-hand side through the change of base point provided by the isomorphism.
The map is bijective since we get a well-defined inverse by finding for any representative $(\Sigma, g, \underline{x}, f)$ the negatively oriented parametrizations $\zeta_j$ rooted at the marked points $x_j$  which have constant speed with respect to $g$.
Indeed, this does not depend on the marking~$f$.

Given a hyperbolic Riemann surface with analytically parametrized boundary components $\param{\Sigma, \underline{\zeta}}$ representing $[\Sigma] \in \hmoduligb$, and a marking $(\underline{q}, \underline{p})$ on the basepoint of $\teichm(\Sigma, \underline{\zeta}(1))$, the Fenchel--Nielsen coordinates are defined in a neighborhood $W$ of $\hgbtofgbzi(\Sigma)$ in $\smash{\fmoduligbzi}$.
Hence, the composition $\mathsf{FN}_{\underline{q}, \underline{p}} \circ \hgbtofgbzi$ is well-defined on the subset $\smash{\hgbtofgbzi^{-1}(W)}$ of $\smash{\hmoduligb}$.
Moreover, by precomposing the projection $\projhgb$ defined in Equation~\eqref{eq:hmoduligb_projection}, this enables the study of Fenchel--Nielsen coordinates of initial curves $\curvesgen{\hmoduligb}$ as defined in Equation~\eqref{eq:moduligb_curvesgen}.

\begin{theorem}
\label{thm:fn_smooth}
For any initial curve $[\Sigma_t] \in \curvesgen{\moduligb}$ as defined in Equation~\eqref{eq:moduligb_curvesgen}, there exists $\varepsilon > 0$ such that the composition
\begin{align}
t \mapsto \big(\mathsf{FN}_{\underline{q}, \underline{p}} \circ \hgbtofgbzi \circ \projhgb\big)([\Sigma_t]),
\end{align}
is smooth for $t \in (-\varepsilon, \varepsilon)$.
\end{theorem}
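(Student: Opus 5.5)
The plan is to realize the family $[\Sigma_t]$ as a smoothly varying family of complex structures on one fixed smooth surface with marked boundary points, and then use that the Fenchel--Nielsen coordinates are real-analytic functions of the hyperbolic metric on Teichm\"uller space. The projection $\projhgb$ keeps only the conformal structure of $\Sigma_t$ and the marked points $\zeta_{j,t}(1)$, so the task reduces to two things: the conformal structure must vary smoothly, and the marked points must move smoothly.

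\emph{Step 1: a fixed model surface.} Write the initial curve as in~\eqref{eq:moduligb_curvesgen}. The interior deformations along $\vartheta_1,\ldots,\vartheta_n$ and the boundary deformations by $\eta_j(t)$ are all obtained by cutting, deforming a parametrized boundary by $\gamma_i(t)$ or $\eta_j(t)$, and resewing. For $|t|<\varepsilon$, Proposition~\ref{prop:defc_local} and the remark following~\eqref{eq:action} ensure that every deformation is defined. The regularity of $\mathcal{O}(S^1)$ then makes each $\gamma_i(t)$ and $\eta_j(t)$ univalent on a fixed annulus $U_n$, with smooth dependence on $t$.

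Inside each such collar I would construct a diffeomorphism that is the identity away from a thin annular neighbourhood and interpolates, by a $t$-smooth cutoff, between the identity and the deformation. Gluing these gives diffeomorphisms $f_t\colon \Sigma\to\Sigma_t$ depending smoothly on $t$, with $f_0=\id$ and $f_t(\zeta_j(1))=\zeta_{j,t}(1)$. Pulling back the complex structures gives a smooth family of conformal structures $J_t$ on the fixed surface $\Sigma$, equivalently a smooth family of Beltrami differentials $\mu_t$ with $\mu_0=0$, together with fixed marked points. In particular the marked points no longer move.

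\emph{Step 2: smoothness of the hyperbolic data.} The hyperbolic metric $g_t$ with geodesic boundary in the conformal class of $J_t$ depends smoothly on $t$. I would get this by doubling $\Sigma$ across its boundary to a closed surface with an antiholomorphic involution; the involution-invariant hyperbolic metric restricts to $g_t$. On the closed double, uniformization depends smoothly on the Beltrami coefficient, through the smooth dependence of solutions of the Beltrami equation (Ahlfors--Bers), or alternatively through the implicit function theorem applied to the Liouville equation $\Delta_{J_t} u = e^{2u}+K_{h_t}$ for a smooth background metric $h_t$.

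Once $g_t$ is known to be smooth, the point $\bigl[\Sigma, g_t, \underline{\zeta}(1), f_t\bigr]$ traces a smooth curve in $\teichm(\Sigma,\underline{\zeta}(1))$. Its image under $\hgbtofgbzi\circ\projhgb$ is exactly this point, because $f_t$ carries the marked points correctly and the hyperbolic parametrizations are rooted at them. The marking $f_t$ is isotopic to $f_0$, so for small $t$ the curve stays in the chart domain $W$.

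\emph{Step 3: the coordinates themselves.} The lengths $l_j$ and $L_j$ are lengths of closed geodesics, and the twists are built from signed lengths along geodesics between geodesics and to the marked points. All of these are smooth functions of a smoothly varying hyperbolic metric with a fixed marking: closed geodesics are nondegenerate critical points of the length functional, so the implicit function theorem applies, and orthogonal geodesic arcs depend smoothly on $g_t$ by the same argument. Hence $t\mapsto \mathsf{FN}_{\underline{q},\underline{p}}$ is smooth on $(-\varepsilon,\varepsilon)$.

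I expect the main obstacle to be Step 2. It requires that uniformization of bordered surfaces with geodesic boundary depends smoothly on the Beltrami coefficient. To handle it I would first check that the doubling is compatible with the $t$-dependent structures near the boundary, where $\mu_t$ need not vanish, and that the doubled Beltrami coefficient remains smooth across the seam, or at least smooth enough for Ahlfors--Bers. If this causes trouble, I would instead move the cutoff interpolation away from the boundary, so that $\mu_t$ vanishes near $\partial\Sigma$ and the boundary change is absorbed into the parametrization only.
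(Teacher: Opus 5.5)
Your proposal is correct and follows essentially the same route as the paper's proof. You build $t$-smooth diffeomorphisms $f_t\colon\Sigma\to\Sigma_t$ that are the identity away from collars of the $\vartheta_i$ and of the boundary (the paper cites Kodaira for these) and that fix the marked points; this reduces the problem to uniformizing a smoothly varying conformal structure on the fixed surface $\Sigma$, after which the geodesic lengths and twists vary smoothly. Your doubling and Liouville/implicit-function argument in Step 2 supplies a justification the paper only asserts, and your boundary concern does not arise: $f_t$ coincides with the holomorphic map $\zeta_j\circ\eta_j(t)\circ\zeta_j^{-1}$ near $\partial_j\Sigma$, so $\mu_t$ already vanishes there.
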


The restriction to $t \in (-\varepsilon, \varepsilon)$ is only because the Fenchel--Nielsen coordinates are not globally defined.
Thus, we will be able to find $\varepsilon > 0$ such that the twists stay small.

\begin{proof}
Consider the deformation of a Riemann surface with analytically parametrized boundary components $\Sigma$ as in Equation~\eqref{eq:moduligb_curvesgen}, that is,
\begin{align}
\Sigma_t = \Sigma \diffActing{\vartheta_1} \gamma_1(t) \cdots \diffActing{\vartheta_n} \gamma_n(t) \diffActing{1} \eta_1(t) \cdots \diffActing{\boundaries} \eta_{\boundaries}(t).
\end{align}
There exist disjoint open neighborhoods $V_{\vartheta_j}$ of the curves $\vartheta_j(S^1)$, for $j = 1, \ldots, n$, and $V_j$ of the boundary components $\partial_j \Sigma$, for $j = 1, \ldots, \boundaries$ in the capped surface $\Sigma \sewall \underline{\disk}$, such that for certain open neighborhoods and $U_{\vartheta_j}$ and $U_j$ of $S^1$ in $\C \setminus \{0\}$ we have biholomophisms
\begin{align}
\zeta_j \colon U_j \to V_j
\qquad \textnormal{and} \qquad 
\vartheta_j \colon U_{\vartheta_j} \to V_{\vartheta_j}.
\end{align}
Shrinking the neighborhoods $U_{\vartheta_j}$, $V_{\vartheta_j}$, $U_j$, and $V_j$ further, there exists $\varepsilon > 0$ such that for $t \in (-\varepsilon, \varepsilon)$, the complex deformations $\gamma_1(t), \ldots \gamma_n(t), \eta_1(t), \ldots, \eta_\boundaries(t)$ extend univalently and without zeroes to the respective neighborhoods $U_{\vartheta_j}$ and $U_j$, and such that they map $S^1$ into that same neighborhood.
(This is analogous to the proof of Proposition~\ref{prop:defc_local}.)

Consider the open cover of $\Sigma \sewall \underline{\disk}$ given by the disjoint open neighborhoods $V_{\vartheta_j}, V_j$, and a possibly disconnected open subset $V_0 \subset \Sigma \sewall \underline{\disk}$ intersecting each of $V_{\vartheta_j}$ and $V_j$ in a disjoint union of two annuli, such that for all $t \in (-\varepsilon, \varepsilon)$ the curves $\vartheta_j(\gamma(t, S^1))$ and $\zeta_j(\eta_j(t, S^1))$ are disjoint from~$V_0$.
Denote the connected components of $V_j \cap V_0$ by $V_{j, +}$ and $V_{j, -}$, such that $V_{j, +}$ lies in the cap.
Likewise, denote the connected components of $V_{\vartheta_j} \cap V_0$ by $V_{\vartheta_j, 0, +}$ and $V_{\vartheta_j, 0, -}$, such $V_{\vartheta_j, 0, +}$ is on the side of $\unravelplus{\vartheta} \Sigma$.

The complex manifold $\Sigma_t \sewall \underline{\disk}$ may also be defined by $V_0 \sqcup (\bigsqcup_{j = 1}^n V_{\vartheta}) \sqcup (\bigsqcup_{j = 1}^\boundaries V_j)$ 
under identification via the $t$-dependent conformal maps 
\begin{align}
h_{j, -}(t) = \zeta_j \circ \eta_j(t) \circ \zeta_j^{-1}, \qquad
h_{\vartheta_j, -}(t) = \vartheta_j \circ \gamma_j(t) \circ \vartheta_j^{-1}, \qquad
h_{\vartheta_j, +}(t) = \id,
\label{eq:deformation_transition_maps}
\end{align}
mapping respectively from the respective intersections above to $V_0$ where defined.
By unraveling the disks bounded by $\zeta_1 \circ \eta_j(t)$, this defines the complex manifold structure of $\Sigma_t$ as a deformation of that on $\Sigma$ in the sense of~\cite{Kodaira:Complex_manifolds_and_deformation_of_complex_structures}.

Denote by $g$ the hyperbolic metric on $\Sigma$.
Then, the restrictions $\restrict{g}{V_0}$, $\restrict{g}{V_{\vartheta_j}}$, and $\restrict{g}{V_j}$ are conformal also on $\Sigma_t$.
However, this does not yield a globally defined conformal metric on~$\Sigma_t$ since the identifications \eqref{eq:deformation_transition_maps} 
result in conformal factors $|h_{j, -}(t)|^2$, $|h_{\vartheta_j, -}(t)|^2$, and $|h_{\vartheta_j, +}(t)|^2$.
A global conformal metric is obtained by interpolating the conformal factors using smooth bump functions on the intersections, which are independent of $t$.
Denote this metric by $g_t$.
Note that all the restrictions $\restrict{g}{V_0}$, $\restrict{g}{V_{\vartheta_j}}$, and $\restrict{g}{V_j}$ are smooth in $t$.
Since the metrics $g_t$ are generally not hyperbolic, we denote by $\check{g}_t$ the unique hyperbolic metric with geodesic boundary in the conformal class of $g_t$.

Define diffeomorphisms $f_t \colon \Sigma \to \Sigma_t$ by interpolating the identity map on the intersections of the open cover on $\Sigma$ such that each of $\restrict{f_t^{-1}}{V_{j,-}}$, $\restrict{f_t^{-1}}{V_{\vartheta_j,-}}$, and $\restrict{f_t^{-1}}{V_{\vartheta_j,+}}$ is smooth in $t$.
(See~\cite[Theorem~2.3]{Kodaira:Complex_manifolds_and_deformation_of_complex_structures} for the existence of such diffeomorphisms.)
Since the diffeomorphisms $f_t$ equal the identity map away from the intersections of the cover, they identify the marked points, $x_j(t) = f_t(\zeta_j(1) = \zeta_j(1) \in V_j$, and hence, they define markings on $\Sigma_t$.
Therefore, we have lifted $\Sigma_t$ to a curve in the Teichm{\"u}ller space,
\begin{align}
\parameq{\Sigma_t, \check{g}_t, \underline{x}(t), f_t} \in \teichm(\Sigma, \underline{\zeta}(1)).
\end{align}
Note that it is not yet clear whether $\check{g}_t$ depends smoothly on $t$, since the uniformization problem depends globally on $\Sigma_t$.
The pullback $f_t^* g_t$, however, depends smoothly on $t$ as a metric on $\Sigma$.
Finding the hyperbolic metric on $\Sigma$ within the conformal class of $f_t^* g_t$ yields the metrics $f^* \check{g}_t$, 
which now depend smoothly on $t$ since the uniformization problem is solved on the fixed manifold $\Sigma$.

Consider a homotopy class of loops in $\Sigma$, and denote by $l(t)$ the length of the unique hyperbolic geodesic in the homotopy class with respect to $f^* \check{g}_t$.
Solving the geodesic equation yields a smoothly $t$-dependent geodesic, so the length $l(t)$ of the geodesic also depends smoothly on $t$.
We conclude that the geodesic length functions in the Fenchel--Nielsen coordinates depend smoothly on $t$.
The same follows for the twist coordinates, as intersection points of geodesics and their distances depend smoothly on the hyperbolic metric.
\end{proof}

\begin{example}
\label{example:fn_deformations}
Let $\vartheta \colon S^1 \to \Sigma$ be the unit speed parametrization of a closed hyperbolic geodesic in $\Sigma$, 
and let $j$ be the index corresponding to the length and twist at $\vartheta$ in a marking involving $\vartheta=q_j$ as one of the curves.
The deformation at $\vartheta$ by rotations, see Section~\ref{section:subgroups}, changes the twist linearly,
\begin{align}
\qquad \theta_j(\hgbtofgbzi(\Sigma \diffActing{\vartheta} \rotation_\alpha)) = \theta_j(\hgbtofgbzi(\Sigma)) + \alpha,
\qquad \alpha \in \R,
\label{eq:act_rotate}
\end{align}
and keeps all other Fenchel--Nielsen coordinates constant.
Moreover, the scaling transformations, also listed in Section~\ref{section:subgroups}, define a curve $\hgbtofgbzi(\Sigma \diffActing{\vartheta} \scaling{\tau})$ for $\tau \in (-\varepsilon, \infty)$ for some $\varepsilon > 0$ such that $\Sigma$ still contains the annulus of modulus $\varepsilon$ at $\vartheta$.
Note that this deformation inserts a flat cylinder at $\vartheta$, and thus changes the hyperbolic lengths not just of the hyperbolic geodesics homotopic to $\vartheta$.
The deformations by rotations and scaling transformations may respectively be compared to earthquakes and the grafting operation; see~\cite{McMullen:Complex_earthquakes_and_Teichmuller_theory} and references therein.
We may also realize Schiffer variations by curves in $\curvesgen{\moduligb}$.
Let $\vartheta \colon \cdisk \to \Sigma$ be univalent with real-analytic boundary regularity, and consider the simple analytically parametrized loop obtained by restricting $\vartheta$ to $S^1$.
Then for $\varphi_t(z) = z + \frac{t}{z}$, the curve $\Sigma \diffActing{\vartheta} \varphi_t$ realizes the Schiffer variation
--- see~\cite[Section~4.3]{Nag:Complex_analytic_theory_of_Teichmuller_spaces}, and~\cite{Wolpert:Schiffer_variations_and_Abelian_differentials}, where Schiffer variation is related to Kodaira--Spencer deformation theory.
\end{example}

The examples show that any smooth curve in $\fmoduligbzi$ may be realized by complex deformations acting on $\moduligb$ in various ways.
Conversely, Theorem~\ref{thm:fn_smooth} shows that any Fr-smooth curve in $\moduligb$ smoothly changes the Fenchel--Nielsen coordinates.
By Virasoro uniformization, Theorem~\ref{thm:virasoro_uniformization}, the deformation of boundary components by complex deformations provides a new set of coordinates on $\fmoduligbzi$.
Hence, we find that the bijection $\hgbtofgbzi$ is actually a diffeomorphism.

\begin{corollary}
$\hgbtofgbzi \colon \hmoduligb \to \fmoduligbzi$ as defined in Equation~\eqref{eq:hmoduligb_fmoduligbzi} is an isomorphism of Fr{\"o}licher spaces, which in this case is a diffeomorphism of $(6 \genus - 6 + 4 \boundaries)$-dimensional smooth manifolds.
\end{corollary}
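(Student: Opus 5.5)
The plan is to show separately that $\hgbtofgbzi$ and its inverse map Fr-smooth curves to Fr-smooth curves, and then to identify the Fr\"olicher structure on $\fmoduligbzi$ with its manifold structure. Here $\hmoduligb$ carries the subspace Fr\"olicher structure inherited from $\moduligb$. Bijectivity of $\hgbtofgbzi$ was already established after Equation~\eqref{eq:hmoduligb_fmoduligbzi}. On $\fmoduligbzi$ we use the Fr\"olicher structure of the finite-dimensional smooth manifold: smooth curves are the $C^\infty$ curves, and by Boman's theorem the smooth functions are the $C^\infty$ functions.

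\emph{Forward direction.} A curve in $\hmoduligb$ that is smooth for the subspace structure is generated by the initial curves in $\curvesgen{\moduligb}$ taking values in $\hmoduligb$. On $\hmoduligb$ the projection $\projhgb$ is the identity. Theorem~\ref{thm:fn_smooth} therefore shows that $\mathsf{FN}_{\underline{q},\underline{p}} \circ \hgbtofgbzi$ is smooth along such curves for small $t$. Since the Fenchel--Nielsen charts form a real-analytic atlas, $\hgbtofgbzi \circ c$ is a $C^\infty$ curve locally near every time. This means $\hgbtofgbzi$ maps generating curves to smooth curves, and Fr-smoothness follows from the definition of the generated structure (Appendix~\ref{section:frolicher}).

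\emph{Backward direction.} I would first show that every coordinate direction of $\fmoduligbzi$ is realized by a curve in $\hmoduligb$. Example~\ref{example:fn_deformations} supplies the twist coordinates $\theta_j$: rotations acting along geodesics $q_j$ change $\theta_j$ linearly. Rotations acting at the boundary, composed with $\projhgb$, move the marked point and hence change $\Theta_j$. For the lengths $l_j$ and $L_j$, I would use scaling or Schiffer deformations, followed by $\projhgb$. The more systematic route is Virasoro uniformization, Theorem~\ref{thm:virasoro_uniformization}. It shows that $\underline{\rho}$ surjects onto the initial tangent space. Composed with $\DdR(\mathsf{FN} \circ \hgbtofgbzi \circ \projhgb)$, this gives a linear map onto $\R^{6\genus-6+4\boundaries}$, provided the differential of the Fenchel--Nielsen map is surjective along the realized directions. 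That surjectivity is what the examples above verify.

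Choosing $6\genus - 6 + 4\boundaries$ vector fields whose images form a basis, the map
\begin{align*}
(t_1, \ldots, t_N) \mapsto \projhgb\big([\Sigma] \diffActing{1} \Phi_{v_1}(t_1,\blank) \cdots\big)
\end{align*}
is a smooth local parametrization of $\fmoduligbzi$ by the inverse function theorem. A smooth curve $c$ in $\fmoduligbzi$ can then be written locally as this map applied to smooth parameters $t_i(s)$. Hence $\hgbtofgbzi^{-1} \circ c$ is locally of the form in Equation~\eqref{eq:moduligb_curvesgen}, composed with $\projhgb$.

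\emph{Main obstacle.} It remains to check that $\projhgb$ applied to a generating curve is again a Fr-smooth curve in $\hmoduligb$. The reparametrization to hyperbolic constant-speed parametrizations is a diffeomorphism $\eta_j(t) \in \Diffpan$ acting at the boundary. By the proof of Theorem~\ref{thm:fn_smooth}, the hyperbolic metric $f_t^*\check g_t$ depends smoothly on $t$. The constant-speed reparametrization of each boundary curve is therefore a smooth curve in $\Diffpan$, and so $\projhgb(\Sigma_t) = \Sigma_t \diffActing{1}\eta_1(t)\cdots$ lies in $\curvesgen{\moduligb}$. This regularity, namely real-analyticity in $z$ jointly with smoothness in $t$, is the delicate point. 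I would derive it from the analyticity of the geodesic boundary, which follows from hyperbolic reflection across the boundary.
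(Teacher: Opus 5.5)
Your proposal is essentially the paper's proof: smoothness of $\hgbtofgbzi$ comes from Theorem~\ref{thm:fn_smooth}, and smoothness of $\hgbtofgbzi^{-1}$ comes from realizing every smooth curve in Fenchel--Nielsen coordinates by initial curves and invoking Proposition~\ref{prop:on_generators_curves}. Your inverse-function-theorem parametrization by boundary flows, together with your explicit treatment of the renormalization $\projhgb$, makes precise what the paper compresses into ``realized by Schiffer variations and rotation of the unit speed boundary parametrizations.''

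The one step you assert rather than prove, as the paper also does, is that the length directions $l_j, L_j$ are reached; Example~\ref{example:fn_deformations} pins down only the twist directions. This step follows from the classical fact that Schiffer variations span the tangent space of the Teichm\"uller space of the bordered surface: a holomorphic quadratic differential real on $\partial\Sigma$ that pairs trivially with all Schiffer variations vanishes at every interior point.
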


\begin{proof}
For example, any smooth curve on the right-hand side of Equation~\eqref{eq:fenchel_nielsen} may be realized by Schiffer variations and rotation of the unit speed boundary parametrizations.
Hence $\hgbtofgbzi^{-1}$ maps smooth curves to initial Fr-smooth curves, and hence it is Fr-smooth by Proposition~\ref{prop:on_generators_curves} in Appendix~\ref{section:frolicher}.
The smoothness of $\hgbtofgbzi$ follows from Theorem~\ref{thm:fn_smooth}.
\end{proof}

Note that Theorem~\ref{thm:fn_smooth} is stronger than this, since it concerns initial curves on $\moduligb$.
Thus, we arrive at our main conclusion from this section.

\begin{corollary}
\label{corollary:fr_nontrivial}
The Fr{\"o}licher structure generated by $\curvesgen{\moduligb}$ defined in Equation~\eqref{eq:moduligb_curvesgen} is non-trivial,
since Fenchel--Nielsen coordinates define Fr-smooth functions on $\moduligb$.
\end{corollary}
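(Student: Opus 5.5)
To prove Corollary~\ref{corollary:fr_nontrivial}, I would exhibit a Fr-smooth function on $\moduligb$ that is not constant on any connected component. By construction, the Fr\"olicher structure generated by $\curvesgen{\moduligb}$ has as smooth functions exactly those $f$ with $f \circ c$ smooth for every initial curve $c \in \curvesgen{\moduligb}$ (Equations~\eqref{eq:condition_functions} and~\eqref{eq:F_generated}). Since every curve in $\curves{\moduligb}$ is then tested against such functions, it suffices to check smoothness along initial curves only. I will therefore take as candidates the functions $f = \pi_k \circ \mathsf{FN}_{\underline{q}, \underline{p}} \circ \hgbtofgbzi \circ \projhgb$, where $\pi_k$ picks out a single Fenchel--Nielsen coordinate.

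The first issue is that $\mathsf{FN}_{\underline{q}, \underline{p}}$ is only a local chart on $\teichm$, not a function on the moduli space. I would therefore begin with a genuinely global quantity, namely the boundary length $L_j$. This is invariant under the mapping class group and under the change of base point, so $L_j \circ \hgbtofgbzi \circ \projhgb$ is a well-defined function on all of $\moduligb$ whenever $\chi(\Sigma) < 0$. More generally, I could multiply a local coordinate by a smooth bump function in the coordinate values that is supported in the chart $W$.

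Smoothness along initial curves is local in $t$, and Theorem~\ref{thm:fn_smooth} provides smoothness on $(-\varepsilon, \varepsilon)$ around every time. Applying it at each $t_0$ to the re-rooted curve $t \mapsto \Sigma_{t_0 + t}$ covers all of $\R$. Re-rooting again gives an initial curve, because a composition of actions along the same curves is again an action by Proposition~\ref{prop:action_compat}. This shows that $f \in \functions{\moduligb}$.

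Non-constancy follows from Example~\ref{example:fn_deformations}. Rotating along an interior geodesic changes a twist coordinate linearly, and scaling transformations change lengths. Alternatively, boundary scalings $\Sigma \diffActing{j} \shrink{\tau}$ change the conformal structure and hence $L_j$.

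I expect the main obstacle to be the gluing and globalization step, that is, making sure the function is honestly defined on all of $\moduligb$ rather than on $\hgbtofgbzi^{-1}(W)$. Choosing the mapping-class-invariant $L_j$ sidesteps this, so I would argue with $L_j$. The cases $\chi \geq 0$ (the disk and the annulus) have to be treated separately. For the annulus, I would use the modulus $\tau$ of $\A_\tau$, which is Fr-smooth because it is computed by a uniformization depending smoothly on $t$, by the same argument as in Theorem~\ref{thm:fn_smooth}. For the disk, I would use the conformal radius of the cap.
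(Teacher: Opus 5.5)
Your argument is correct in the setting the corollary actually covers ($\chi(\Sigma)<0$, $\boundaries\geq 1$, where Fenchel--Nielsen coordinates exist), and it is essentially the paper's: both deduce Fr-smoothness along initial curves from Theorem~\ref{thm:fn_smooth}. There are two differences, both improvements.

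\begin{itemize}
\item Where the paper makes twist coordinates single-valued using bump functions, you use $L_j$. This is just the hyperbolic length of $\partial_j\Sigma$ and is invariant under the pure mapping class group, so the single-valuedness caveat disappears.
\item Your re-rooting step supplies the passage from $(-\varepsilon,\varepsilon)$ to all of $\R$, which the paper leaves implicit.
\end{itemize}

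When you write the re-rooting out, note that what you need is a \emph{decomposition} $\eta_j(t_0+s)=\eta_j(t_0)\circ\psi_j(s)$.
\begin{itemize}
\item Build $\psi_j(s)$ from the \emph{local} inverse of $\eta_j(t_0)$ on a neighbourhood of $\eta_j(t_0)(S^1)$. You cannot use the inverse in $\definvset$, since $\eta_j(t_0)$ need not be invertible in that sense.
\item Then apply Proposition~\ref{prop:action_compat}, and its analogue for interior actions with the seam transported to $\Sigma_{t_0}$.
\item Finally, cut $\psi_j$ off outside a small $s$-interval, so that the re-rooted curve exists for all times, as the definition of initial curves requires.
\end{itemize}

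Two small corrections.

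First, the non-constancy argument ``boundary scalings change the conformal structure and hence $L_j$'' is a non sequitur. The interior rotations of Example~\ref{example:fn_deformations} change a twist coordinate but keep every other Fenchel--Nielsen coordinate, including $L_j$, fixed. Argue instead as follows: $\projhgb$ maps onto $\hmoduligb$, $\hgbtofgbzi$ is a bijection, and hyperbolic surfaces with geodesic boundary of any prescribed lengths exist. Hence $L_j\circ\hgbtofgbzi\circ\projhgb$ takes every positive value.

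Second, the cases $\chi\geq 0$ lie outside the statement, since it rests on Fenchel--Nielsen coordinates. Still, your disk-case function is not well defined as stated.
\begin{itemize}
\item For $[\Sigma]\in\disks$ the capped surface is a sphere. The conformal radius of the cap at its centre is the modulus of the derivative of the cap map at $0$ in a uniformizing coordinate.
\item That quantity changes under M\"obius transformations, and it is constant once the M\"obius freedom is normalized away.
\item A M\"obius-invariant quantity would be needed instead, for instance the Schwarzian derivative of the cap map at $0$.
\end{itemize}
The annulus modulus, by contrast, is a genuine invariant.
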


\begin{proof}
A small caveat is that the twist coordinates are not globally single-valued, but we may compose them by smooth compactly supported bump functions such that they become single-valued.
Thus, Corollary~\ref{corollary:fr_nontrivial} can be deduced from Theorem~\ref{thm:fn_smooth}.
\end{proof}

\appendix
\section{Fr\"olicher spaces}
\label{section:frolicher}

In this appendix, we summarize aspects of the theory of Fr{\"o}licher spaces that are sufficient to obtain the analytical results in this work.
We refer interested readers to~\cite[Chapter~23]{Kriegl-Michor:Convenient_setting_of_global_analysis} and~\cite{Stacey:Comparative_smootheology} for a more detailed introduction to Fr{\"o}licher spaces and their relation to other geometric structures, focusing only on the essentials in the present article.

Fr{\"o}licher spaces are generalizations of manifolds where, instead of charts, the structure consists of sets of functions from $\R$ into a set $X$, and functions from $X$ into $\R$ satisfying a completeness relation based on $C^\infty(\R, \R)$, the usual set of smooth functions.

\begin{definition}
	\label{def:fr}
	A \emph{Frölicher space} is a set $X$ with a \emph{Frölicher structure} $(X, \curves{X}, \functions{X})$ consisting of $X$ and sets of \emph{curves} $\gamma \in \curves{X}$, $\gamma : \R \to X$ and \emph{functions} $f \in \functions{X}$, $f : X \to \R$ such that
    \begin{align}
        \curves{X} &= \setsuchthat{\gamma : \R \to X}{f \circ \gamma \in C^\infty(\R, \R)\: \forall f \in \functions{X}},
        \label{eq:condition_curves} \\
        \functions{X} &= \setsuchthat{f : X \to \R}{f \circ \gamma \in C^\infty(\R, \R)\: \forall \gamma \in \curves{X}}.
        \label{eq:condition_functions}
    \end{align}
	A map $\varphi : X \to Y$ is \emph{Fr{\"o}licher smooth} (Fr-smooth) with respect to Frölicher structures $(X, \curves{X}, \functions{Y})$ and $(Y, \curves{Y}, \functions{Y})$ if one of the following equivalent conditions holds:
    \begin{align}
        \gamma \in \curves{X} &\implies \varphi\circ \gamma \in \curves{Y}, 
        \label{eq:smoothness_curves} \\
        f \in \functions{Y} &\implies f \circ \varphi \in \functions{X},
        \label{eq:smoothness_functions} \\
        \gamma \in \curves{X}, f \in \functions{Y} &\implies f \circ \varphi \circ \gamma \in \C^\infty(\R, \R).
        \end{align}
\end{definition}

A finite-dimensional smooth manifold structure on a set $X$ may be reconstructed from the induced Fr{\"o}licher structure $(X, C^\infty(\R, X), C^\infty(X, \R))$ using Boman's theorem~\cite[Theorem~3.4]{Kriegl-Michor:Convenient_setting_of_global_analysis}.
However, Fr{\"o}licher structures are also suitable for spaces of inhomogeneous or infinite dimension.
They form a categorical framework in which many other types of geometric structures can be compared.
For example, finite-dimensional smooth manifolds, or Fr\'echet manifolds, form full subcategories of Fr{\"o}licher spaces~\cite[Theorem~3.2]{Frolicher:Smooth_structures}.

Generally speaking, differential geometric notions that only involve differentiation have generalizations defined entirely in terms of the Fr{\"o}licher structure.
Below, we define tangent spaces, Lie algebras, and differential forms in this way.
Integration, however, typically needs some form of coordinate charts.
Thus, it is helpful if a given Fr{\"o}licher space $(X, \curves{X}, \functions{X})$ also has a manifold structure, 
such that the smooth curves and functions agree respectively with $\curves{X}$ and $\functions{X}$.
For example, the complex deformations $X=\DefC$ defined in Section~\ref{section:complex_deformations} 
are manifolds modelled on a convenient vector space in the sense of~\cite{Kriegl-Michor:Convenient_setting_of_global_analysis} --- 
but in the present work we are mainly interested in certain subsets on which we only have the induced Fr{\"o}licher structure. 
We use, however, the surrounding manifold structure to apply the Poincar\'e lemma~\cite[Lemma~33.20]{Kriegl-Michor:Convenient_setting_of_global_analysis} in Section~\ref{section:cohomology}.

An elementary way to define a Fr{\"o}licher structure on a set $X$ is 
to begin with an initial set $\curvesgen{X}$ of curves, which in principle can be any set of maps from $\R$ to $X$.
Then, the Fr{\"o}licher structure generated by $\curvesgen{X}$ is given by the functions
\begin{align}
\functions{X} = \setsuchthat{f \colon X \to \R}{f \circ \gamma \in C^\infty(\R, \R) \: \textnormal{ for all } \gamma \in \curvesgen{X}},
\label{eq:F_generated}
\end{align}
and the curves $\curves{X}$ are defined by Equation~\eqref{eq:condition_curves}.
Note that if one chooses too many initial curves, $\functions{X}$ may be empty, and consequently, $\curves{X}$ consists of all maps $\R \to X$.
Thus, it is important to check that the set of functions is non-empty 
(see Corollary~\ref{corollary:fr_nontrivial} for an example).
It is useful to know that in order to show that a function is Fr-smooth, it is sufficient to check condition~\eqref{eq:smoothness_curves} on a generating set of curves.

\begin{proposition}
\label{prop:on_generators_curves}
For a map $Q \colon X \to Y$, assume that
\begin{align}
\gamma \in \curvesgen{X} \qquad \implies \qquad Q \circ \gamma \in \curves{Y}.
\end{align}
Then, $Q$ is Fr-smooth.
\end{proposition}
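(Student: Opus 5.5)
We must show that $\curves{X}\supseteq \curvesgen{X}$-images suffice. The plan is to use the characterization of Fr-smoothness through functions, \eqref{eq:smoothness_functions}, rather than through curves. Concretely, we will show that for every $f \in \functions{Y}$, the pullback $f \circ Q$ lies in $\functions{X}$. Since $\functions{X}$ is defined by \eqref{eq:F_generated} as the set of maps $X \to \R$ that become smooth after composition with every generating curve, it suffices to check that $f \circ Q \circ \gamma \in C^\infty(\R,\R)$ for every $\gamma \in \curvesgen{X}$.

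This check is immediate from the hypothesis. For $\gamma \in \curvesgen{X}$ we have $Q\circ\gamma \in \curves{Y}$, so by the defining property \eqref{eq:condition_curves} of $\curves{Y}$ (equivalently the duality \eqref{eq:condition_functions}), the composition $f\circ(Q\circ\gamma)$ is smooth for every $f \in \functions{Y}$. Hence $f\circ Q\in\functions{X}$, which is exactly condition \eqref{eq:smoothness_functions}.

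It remains to pass from functions back to curves, which uses the stated equivalence of the conditions in Definition~\ref{def:fr}. For completeness we verify \eqref{eq:smoothness_curves} directly. Let $\gamma\in\curves{X}$ and $f\in\functions{Y}$. Since $f\circ Q\in\functions{X}$, \eqref{eq:condition_curves} gives $(f\circ Q)\circ\gamma\in C^\infty$. As $f$ was arbitrary, \eqref{eq:condition_curves} for $Y$ then gives $Q\circ\gamma\in\curves{Y}$.

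We do not expect a real obstacle. The only point needing care is to use that $\functions{X}$ is exactly the set generated by $\curvesgen{X}$ through \eqref{eq:F_generated}, and is not a priori smaller, so that testing against the generating curves alone is enough.
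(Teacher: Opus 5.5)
Your proof is correct and is essentially the paper's argument: the hypothesis gives $f\circ Q\circ\gamma\in C^\infty(\R,\R)$ for all $f\in\functions{Y}$ and $\gamma\in\curvesgen{X}$, so $f\circ Q\in\functions{X}$ by \eqref{eq:F_generated}, and Fr-smoothness then follows from \eqref{eq:smoothness_functions}. Your additional direct verification of \eqref{eq:smoothness_curves} is also correct, though it is redundant given the equivalence of the conditions stated in Definition~\ref{def:fr}.
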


\begin{proof}
Since $f \circ Q \circ \gamma \in C^\infty(\R, \R)$ for all $f \in \functions{Y}$ and $\gamma \in \curvesgen{X}$, by~\eqref{eq:F_generated}, it follows that $f \circ Q \in \functions{X}$. 
Thus, the map $Q$ is Fr{\"o}licher smooth by~\eqref{eq:smoothness_functions} (as per Definition~\ref{def:fr}).
\end{proof}

There are two concepts of tangent space on Fr{\"o}licher spaces that do not necessarily agree.
On the one hand, there is the functional definition via \emph{derivations} on $\functions{X}$.
On the other hand (and this is the more relevant notion in this work), 
there is the \emph{curvaceous} tangent space defined by
\begin{align}\label{eq:curvaceous_tangent}
T_x X = \setsuchthat{\gamma \in \curves{X}}{\gamma(0) = x}/\sim, \qquad x \in X,
\end{align}
where
\begin{align}
\gamma_1 \sim \gamma_2 \quad \iff \quad (f \circ \gamma_1)'(0) = (f \circ \gamma_2)'(0) \quad \textnormal{ for all } f \in \functions{X}.
\end{align}
The full tangent bundle is $\smash{TX = \underset{{x \in X}}{\bigsqcup} T_x X}$.
For a Fr-smooth function $Q \colon X \to Y$, 
the derivative is defined by
\begin{align}
\begin{aligned}
\Dder Q \colon TX &\longrightarrow TY, \\
[\gamma]_\sim &\longmapsto [Q \circ \gamma]_\sim.
\end{aligned}
\label{eq:Dder_def}
\end{align}
The tangent bundle then comes with a Fr{\"o}licher structure generated by the functions $\Dder f \colon TX \to T\R \cong \R^2$ for $f \in \functions{X}$.
Note that this notion of tangent space does not always yield a vector space.
A curvaceous tangent vector $v = [\gamma]_\sim \in T_x X$ with $\gamma(0)=x$ acts on a function $f \in \functions{X}$ as a derivation via $v f = (f \circ \gamma)'(0)$.
This induces a Lie bracket on $\vect(X)$, which is the Fr{\"o}licher space of Fr-smooth sections of the tangent bundle $TX \to X$, given by the usual Lie bracket of vector fields in terms of derivations,
\begin{align}
[v, w] = vw - wv , \qquad v, w \in \vect(X) .
\end{align}
The vector field $[v,w]$ might take values in the functional tangent space defined via derivations.
If we do not have a grasp on the full tangent space, like in Theorem~\ref{thm:virasoro_uniformization}, 
we may also consider the subset of tangent vectors represented by initial curves, which we denote 
\begin{align}
T_x^{\mathcal{C}_0} X = \setsuchthat{\gamma \in \curvesgen{X}}{\gamma(0) = x} / \sim, \qquad x \in X.
\label{eq:curvesgen_tangent}
\end{align}
Similar to the curvaceous tangent space~\eqref{eq:curvaceous_tangent}, there is also a cotangent space
\begin{align}
T^x X = \functions{X} / \sim_x, \qquad x \in X,
\end{align}
where the equivalence relation $\sim_x$ is defined by
\begin{align}
f \sim_x g \quad \iff \quad
(f \circ \gamma)'(0) = (g \circ \gamma)'(0) \: \textnormal{ for all } \text{$\gamma \in \curves{X}$ such that $\gamma(0) = x$}.
\end{align}
There is a canonical pairing of tangent and cotangent spaces
\begin{align}
\evaluation([f]_{\sim_x}, [\gamma]_\sim) = (f \circ \gamma)'(0) \in \R, \qquad [\gamma] \in T^x X,
\label{eq:eval_dual}
\end{align}
and we let it generate the Fr{\"o}licher structure on the cotangent bundle 
$\smash{T^\vee X = \underset{{x \in X}}{\bigsqcup} T^x X}$ 
by requiring the functions 
\begin{align}
\evaluation(\blank , v) \colon \functions{TX} \to \functions{X} , \qquad v \in \vect(X) , 
\end{align}
to be Fr-smooth.
The definition of cotangent space leads to $n$-forms $\omega \in \Omega^n(X)$ as sections of $\bigwedge^n T^\vee X$ where $\bigwedge^0 T^\vee X = \functions{X}$.
We define the exterior derivative $\DdR \omega$ of $\omega \in \Omega^n(X)$ evaluated on vector fields $v_1, \dots v_{n+1} \in \vect(X)$ by generalizing the invariant formula~\cite[Proposition~14.32]{Lee:Introduction_to_smooth_manifolds},
\begin{align}\label{eq:invariant_formula}
\begin{split}
(\DdR \omega)(v_1, \ldots, v_{n+1})
= \; & \sum_{j = 1}^{n+1} (-1)^{j - 1} v_j \, \omega(v_1, \ldots, \hat{v}_{j}, \ldots v_{n+1}) \\
\; &+ \sum_{j = 1}^{n+1} \sum_{k = j + 1}^{n + 1} (-1)^{j + k}
\omega([v_j, v_k], v_1, \ldots, \hat{v}_{j}, \ldots, \hat{v}_k, \ldots v_{n+1}),
\end{split}
\end{align}
where the hat ``$\hat{\blank}$'' stands for the absence of the symbol.

\bibliographystyle{annotate}
\newcommand{\etalchar}[1]{$^{#1}$}

\end{document}